\documentclass[11pt]{article}
\usepackage[margin=1in]{geometry}
\usepackage{graphicx}
\usepackage{amsmath, amsthm, amssymb, amsfonts}
\usepackage{mathtools}
\mathtoolsset{mathic=true}
\mathtoolsset{centercolon}

\usepackage{bbm}

\usepackage{xspace}
\usepackage{hyperref}
\usepackage{todonotes}
\usepackage{algorithm}
\usepackage[noend]{algpseudocode}
\usepackage{caption}
\usepackage{subcaption}
\usepackage{tikz}
\usetikzlibrary{math}
\usetikzlibrary{arrows.meta,decorations.pathreplacing}

\usepackage{bm}
\usepackage{enumitem}
\usepackage{booktabs}
\usepackage{glossaries-extra}
\glsdisablehyper
\glssetcategoryattribute{initialism}{accessinsertdots}{true}

\theoremstyle{plain}
\newtheorem{theorem}{Theorem}[section]
\newtheorem{lemma}[theorem]{Lemma}
\newtheorem{proposition}[theorem]{Proposition}

\newtheorem{corollary}{Corollary}
\theoremstyle{definition}
\newtheorem{definition}{Definition}[section]

\theoremstyle{remark}
\newtheorem*{remark}{Remark}

\newcommand{\ignore}[1]{}
\newcommand{\rmk}[1]{}
\newcommand{\unset}{\mathcal{P}}

\newcommand{\kofn}{$k$-of-$n$\xspace}
\DeclareMathOperator{\E}{\mathbb{E}}

\newcommand{\opt}{\mathtt{OPT}}
\newcommand{\alg}{\mathtt{ALG}}
\DeclarePairedDelimiter\abs{\lvert}{\rvert}%
\DeclarePairedDelimiter\norm{\lVert}{\rVert}%
\newcommand{\Z}{\mathbb{Z}}
\newcommand{\indicator}[1]{\mathbbm{1}\mathopen{}\left(#1\right)\mathclose{}}
\DeclareMathOperator{\pobin}{\mathsf{PBD}}
\DeclareMathOperator{\bern}{\mathsf{Ber}}

\DeclarePairedDelimiter{\ceil}{\lceil}{\rceil}
\DeclarePairedDelimiter{\floor}{\lfloor}{\rfloor}
\newcommand{\stg}{\nu}

\DeclareMathOperator{\adv}{ADV}
\DeclareMathOperator{\advb}{\overline{ADV}}
\newcommand{\bigOh}[1]{O\mathopen{}\left(#1\right)\mathclose{}}

\renewcommand{\vec}[1]{\boldsymbol{\mathbf{#1}}}
\newcommand{\reals}{\mathbb{R}}

\newcommand{\probb}{\ensuremath{\overline{\mathtt{ADV}}}\xspace}

\newcommand{\pbrv}{\mathtt{PBD}}
\DeclareMathOperator{\tvdist}{d_{\mathrm{TV}}}
\DeclareMathOperator{\poly}{poly}

\newcommand{\tn}{\tilde N}
\newcommand{\asol}{\pi}
\newcommand{\osol}{\sigma}
\newcommand{\sol}{\sigma}
\newcommand{\vomega}{\vec \omega}
\newcommand{\normaldist}{\mathcal N}

\makeatletter
\let\oldabs\abs
\def\abs{\@ifstar{\oldabs}{\oldabs*}}
\let\oldnorm\norm
\def\norm{\@ifstar{\oldnorm}{\oldnorm*}}
\makeatother

\newcommand{\RN}[1]{%
  \textup{\uppercase\expandafter{\romannumeral#1}}%
}

\newcommand{\sst}{\ensuremath{\texttt{SST}}}

\newacronym[category=initialism]{rv}{r.v.}{random variable}
\newacronym[category=initialism]{pbd}{PBD}{Poisson binomial distribution}
\newacronym[category=initialism]{tv}{TV}{total variation}
\newacronym{drst}{\ensuremath{\mathtt{DRST}}\xspace}{distributionally robust sequential testing problem}
\newacronym[category=initialism]{qptas}{QPTAS}{quasi-polynomial-time approximation scheme}
\newacronym[category=initialism]{adv}{\ensuremath{\mathtt{ADV}}\xspace}
{adversary's problem}

\title{Distributionally Robust \(k\)-of-\(n\) Sequential Testing}
\author{Rayen Tan\thanks{Department of Industrial and Operations Engineering, University of Michigan, Ann Arbor, USA. Research supported in part by NSF grant  CCF-2418495.} \and Viswanath Nagarajan$^*$}

\begin{document}
\maketitle

\begin{abstract}
    The $k$-of-$n$ testing problem involves performing $n$ independent tests sequentially, in order to determine whether/not at least $k$ tests pass.
    The objective is to minimize the expected  cost of testing. This is a fundamental and well-studied stochastic optimization problem. However, a key limitation of this model is that the success/failure probability of each test is assumed to be known precisely. In this paper, we relax this assumption and study a distributionally-robust model for \kofn testing. In our setting, each test is associated with an interval that contains its  (unknown) failure probability. The goal is to find a solution that minimizes the worst-case expected cost, where each test's  probability is chosen from its interval.  We focus on non-adaptive solutions, that are specified by a fixed permutation of the tests. When all test costs are unit, we obtain a $2$-approximation algorithm for distributionally-robust \kofn testing.     For general costs, we obtain  an $O(\frac{1}{\sqrt \epsilon})$-approximation algorithm 
    on $\epsilon$-bounded instances where   each uncertainty interval  is contained in $[\epsilon, 1-\epsilon]$.  We also consider  the inner maximization problem for distributionally-robust \kofn: this involves finding the worst-case probabilities from the uncertainty intervals for a given solution. For this problem, in addition to the above approximation ratios, we obtain  a quasi-polynomial time approximation scheme under the assumption that all costs are polynomially bounded. 
\end{abstract}

\section{Introduction} \label{sec:intro}
Sequential testing involves determining the status of a complex system by performing a sequence of tests. The test outcomes (pass or fail) are random  and typically assumed to be independent of each other. It is also assumed that we know the failure probabilities of each test.
The goal is to minimize the expected cost of tests performed until the system status is determined. Sequential testing problems arise in many applications, such as manufacturing, healthcare and databases; see e.g., the surveys \cite{moret_decision_1982,unluyurt_sequential_2004,unluyurt_sequential_2025}.
In this paper, we study a fundamental sequential testing problem, \kofn testing, where one wants to determine if at least $k$ out of $n$ tests pass. 

There are two types of solutions in sequential testing. A non-adaptive solution involves performing tests in an {\em a priori} fixed order until the system status is determined. On the other hand, an adaptive solution is a sequential decision process, that at each step chooses the next test based on the outcomes of all previous tests.  Although adaptive solutions can achieve a better objective value, non-adaptive solutions are simpler and faster to implement in practice. A non-adaptive solution is   computed  entirely upfront 
(before testing begins) and the incremental work between successive tests is minimal (one just needs to check if the status has been determined).
We focus on non-adaptive solutions in this paper.

A significant drawback of  sequential testing models (and stochastic optimization, more generally) is that one needs access to a precise probability distribution.
This is  unrealistic because the underlying probabilities are usually obtained through statistical methods that provide confidence intervals rather than point-estimates.
Therefore, we are  interested in  sequential testing algorithms that are  ``robust'' to small changes in the probability distribution. We address this issue through the framework of distributionally robust optimization (DRO), see e.g., the surveys \cite{RahimianM22,kuhn_distributionally_2025}. Here, we are given an ambiguity/uncertainty set $\unset$ of probability distributions (instead of a single distribution) and the goal is to find a solution that minimizes the worst-case expected cost, over all distributions in $\unset$.  In our paper, we consider a natural ``hypercube'' uncertainty set, where the failure probability of each test lies in some  uncertainty interval.

\def\sse{\subseteq}

\subsection{Problem Definition} \label{sec:problem-def}
\paragraph{\kofn testing.} An instance of this basic problem consists of $n$ independent tests with known success probabilities $\{p_i\}_{i=1}^n$ and costs $\{c_i\}_{i=1}^n$, and a threshold $k\le n$. The outcome of any test  $i\in [n]:=\{1,2,\dots, n\}$ is denoted by  a Bernoulli \gls{rv}  $X_i\sim\bern(p_i)$, where $X_i = 1$ when test $i$ passes and $X_i = 0$ when test $i$ fails. In order to determine the outcome of test $i$ we need to perform the test, which  incurs cost $c_i$. 
  When at least $k$ out of the $n$ tests pass, we say that the system \emph{passes}; otherwise, the system \emph{fails}.
Formally, we want to evaluate the function $f\colon 2^{n} \mapsto \{0, 1\}$  given by $f(X_1, X_2, \dots, X_n) = \indicator{\sum_{i=1}^n X_i \geq k}$. Note that this equals  1 when the system passes and is 0 otherwise. Our goal is to perform tests sequentially (one by one) until $f$ is evaluated. The objective is to minimize the {\em expected} total cost of testing.  It is important to note that evaluating   function $f$ does not require knowing the values of all $X_i$s. In particular, when some subset $S\sse [n]$ of tests have been performed we can stop testing if either of the following holds:
\begin{itemize}
    \item At least $k$ tests in $S$ passed, i.e., $\sum_{i\in S}X_i\ge k$.
    \item At least $n-k+1$ tests in $S$ failed, i.e., $\sum_{i\in S}X_i\le \abs{S}- n+ k-1$.
\end{itemize}
We refer to these as  \emph{stopping conditions} for the \kofn problem.

There are two types of solutions for \kofn testing. A {\em non-adaptive} solution is given by a permutation $\sol=\langle \sol_1,\dots, \sol_n\rangle$ of the $n$ tests, where tests are performed in that order until the stopping condition.  An adaptive solution, by contrast, selects the next test to perform based on the outcomes of previous tests. While adaptive solutions are better in terms of the objective value, non-adaptive solutions are much simpler to implement in practice. In this paper, we focus on non-adaptive solutions.

The expected cost of a non-adaptive solution $\sol$ is given by
\begin{equation} \label{eq:cost-fn}
  \sum_{i=1}^n c_{\sol_i} \times \Pr\left[ i - 1-(n - k) \leq \sum_{j=1}^{i-1} X_{\sol_j} \le k-1 \right],
\end{equation}
where the probability term can be interpreted as the probability that the $i^{th}$ test  needs to be performed.

\paragraph{The distributionally robust (DR) setting.} Here, we do not know the exact probabilities $p_i$s associated with the tests. Instead, for each test $i\in [n]$ we are only given an {\em uncertainty interval} $[\ell_i,r_i]$  that contains the true probability $p_i$.  We define the \textit{uncertainty set} $\unset$ to be the set of all probability vectors $\vec{p}\in [0,1]^n$ that respect these intervals, i.e.,
\[ \unset = \left\{\vec p \in [0, 1]^n: \ell_i \leq p_i \leq r_i \text{ for } i = 1, 2, \dots ,n\right\}.\]
Since $\vec{p} \in \unset$ is not known in the distributionally robust (DR) setting, we want to guard against the worst-case choice of $\vec{p}$. Formally, the  objective value of a non-adaptive solution $\sol$ in the DR setting is 
$$\max_{\vec{p}\in \unset} C(\sol, \vec{p}),$$ 
where $C(\sol, \vec{p})$ is given by~\eqref{eq:cost-fn}, the expected cost of $\sol$ under test-probabilities $\vec p$.

We consider two natural problems in the DR setting. The first problem (Definition~\ref{def:adv-defn}) involves evaluating the DR objective value of a given non-adaptive solution and the second problem (Definition~\ref{def:drst-defn}) involves finding a non-adaptive solution with minimum DR objective.  One way to think about the DR setting  is that whenever we pick some solution $\sol$, an ``adversary'' would set the distribution $\vec{p}\in \unset$ so as to maximize the expected cost of testing.

\begin{definition}[\kofn Adversary  problem, $\mathtt{ADV}$] \label{def:adv-defn}
    Given a non-adaptive solution $\sol$, find the probability-vector  $\vec{p} \in \unset$ that maximizes $C(\sol,\vec p)$. That is,
    \begin{equation}
        \adv(\sol) = \max_{\vec{p}\in \unset} C(\sol, \vec{p}).
    \end{equation}
\end{definition}
The function $C(\sol, \vec p)$ in \eqref{eq:cost-fn} is neither convex nor concave in $\vec p$. Consequently, solving the adversarial problem is itself quite challenging.  We observe that $C(\sol, \vec p)$   is multilinear in $\vec p$, which implies that the ``optimal''  probability vector $\vec p \in \unset$ occurs at  an extreme point (i.e., each $p_i$  equals $\ell_i$ or $r_i$).

\begin{definition}[Distributionally Robust \kofn, $\mathtt{DRST}$]\label{def:drst-defn}
Find  the permutation $\sol=\langle \sol_1,\dots, \sol_n\rangle$ that minimizes the worst-case cost of testing, i.e., 
    \begin{equation}
    \min_{\sol} \,\, \max_{\vec p\in \unset} \,\, C(\sol, \vec{p}) \quad=\quad \min_\sol\,\, \adv(\sol).
\end{equation}
\end{definition}

This problem is a bi-level optimization problem, where the first level involves the permutation $\sol$ and the second level involves the probability-vector $\vec p$.

\paragraph{Bounded instances.} Some of our  results rely on the following boundedness assumption on the uncertainty set.   
\begin{definition}\label{def:eps-bounded} For any $\epsilon\in (0,\frac12)$, an $\epsilon$-bounded instance of  $\mathtt{ADV}$  and $\mathtt{DRST}$ is one where every uncertainty interval is contained in $[\epsilon,1-\epsilon]$, i.e., $\epsilon \le \ell_i\le r_i\le 1-\epsilon$ for all $i\in [n]$.     
\end{definition}
This is a reasonable assumption  because tests with probabilities  very close to $0$ or $1$ can be viewed as having deterministic outcomes. Moreover, our algorithms  do not need an explicit bound on $\epsilon$, although the approximation ratio degrades as  $\epsilon\rightarrow 0$.

\subsection{Result and techniques}\label{sec:res-and-techniques}

Our first result is for the special case when all costs are uniform. 
\begin{theorem} \label{thm:unit-cost}
There is a $2$-approximation algorithm for the \kofn adversary  problem and the   distributionally robust \kofn  problem under unit costs.
\end{theorem}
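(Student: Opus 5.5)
The plan is to reduce both problems to the two ``corner'' probability vectors $\vec\ell=(\ell_1,\dots,\ell_n)$ and $\vec r=(r_1,\dots,r_n)$. The central claim is that for every permutation $\sol$ and every $\vec p\in\unset$,
\[
C(\sol,\vec p)\;\le\; C(\sol,\vec\ell)+C(\sol,\vec r)\;\le\; 2\max\{C(\sol,\vec\ell),C(\sol,\vec r)\}.
\]
Since $\vec\ell,\vec r\in\unset$, the \gls{adv} part of Theorem~\ref{thm:unit-cost} then follows by outputting whichever of $\vec\ell,\vec r$ gives the larger cost. Computing these two costs is a polynomial dynamic program over (position, number of passes so far).

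To prove the inequality I will use unit costs, so that $C(\sol,\vec p)=\E[T(\vec p)]$, where $T$ is the number of tests performed. Couple all three vectors through independent uniforms $U_i$, declaring test $i$ passed under $\vec q$ iff $U_i\le q_i$. Then a pass under $\vec\ell$ implies a pass under $\vec p$, which implies a pass under $\vec r$, and failures are monotone in the reverse direction. Along the order $\sol$, let $A(\vec q)$ be the position of the $k$-th pass and $B(\vec q)$ the position of the $(n-k+1)$-th failure, with the value $+\infty$ if it does not occur. Then $T(\vec q)=\min(A(\vec q),B(\vec q))\le n$. Monotonicity gives $A(\vec p)\le A(\vec\ell)$ and $B(\vec p)\le B(\vec r)$, so pointwise
\[
T(\vec p)\le \min\bigl(A(\vec\ell),B(\vec r)\bigr).
\]
It then suffices to show pointwise that $\min(A(\vec\ell),B(\vec r))\le T(\vec\ell)+T(\vec r)$, and I will do this by cases.
\begin{itemize}
\item If $T(\vec\ell)=A(\vec\ell)$ or $T(\vec r)=B(\vec r)$, the bound is immediate.
\item Otherwise $T(\vec\ell)=B(\vec\ell)$ and $T(\vec r)=A(\vec r)$. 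Trivially $A(\vec r)\ge k$ and $B(\vec\ell)\ge n-k+1$, so $T(\vec\ell)+T(\vec r)\ge n+1>n\ge \min(A(\vec\ell),B(\vec r))$. Here the minimum is at most $n$ because the pass/fail dichotomy under monotonicity forces one of the two to be finite.
\end{itemize}
Taking expectations proves the claim. I also note the sharper intermediate quantity $G(\sol):=\E[\min(A_\sol(\vec\ell),B_\sol(\vec r))]$, which satisfies $\adv(\sol)\le G(\sol)\le 2\adv(\sol)$.

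For \gls{drst} the chain above gives
\[
\tfrac12\bigl(C(\sol,\vec\ell)+C(\sol,\vec r)\bigr)\le \adv(\sol)\le C(\sol,\vec\ell)+C(\sol,\vec r).
\]
So it suffices to find a permutation minimizing $H(\sol):=C(\sol,\vec\ell)+C(\sol,\vec r)$, or alternatively minimizing $G(\sol)$. This gives a $2$-approximation, because $H(\sol^*)\le 2\adv(\sol^*)\le 2\opt$, and likewise for $G$.

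The main obstacle is this last optimization, since the order matters even with unit costs: for $n=2$, $k=1$ the cost is $2-p_{\sol_1}$. I would first try to prove that a single sorted order is optimal for $G$. Natural candidates are the order of nonincreasing $\ell_i+r_i$, or an order determined by comparing $\ell_i$ against $r_i$. The tool would be an adjacent-exchange argument on $G$: swapping positions $j,j+1$ only changes the joint law of $(A,B)$ on the event that the stopping time reaches those positions, and there the comparison should reduce to a two-test inequality in $(\ell_j,r_j,\ell_{j+1},r_{j+1})$. If no clean exchange rule holds, the fallback is to exploit the unit-cost structure. The stopping event depends only on counts, so $G$ is determined by how many passes under $\vec\ell$ and how many failures under $\vec r$ have accumulated. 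I would then look for a dynamic program or a greedy rule over these counts that optimizes $H$ or $G$ exactly.
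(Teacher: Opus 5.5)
Your adversary half is correct up to one slip, and it takes a different route from the paper. The coupling gives pointwise $T(\vec p)\le T(\vec\ell)+T(\vec r)$, so the better of $\vec\ell,\vec r$ is a $2$-approximation for every $k$. The paper instead truncates at $n/2$ tests and shows that $\vec\ell$ alone suffices. The slip is your claim that $\min(A(\vec\ell),B(\vec r))\le n$. Monotonicity gives $\vec\ell$ fewer passes and $\vec r$ fewer failures, so both can be $+\infty$ (e.g.\ every test fails under $\vec\ell$ and passes under $\vec r$). In that case use $T(\vec p)\le n<T(\vec\ell)+T(\vec r)$ directly, and cap $G$ at $n$.

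The distributionally robust half has a genuine gap: you never produce a permutation, and the route you sketch would need something not currently available. Your reduction needs an \emph{exact} minimizer of $H$ or $G$; a $\rho$-approximate minimizer only yields $2\rho$. But when $\vec\ell=\vec r=\vec p$ you have $G(\sol)=C(\sol,\vec p)$ and $H(\sol)=2C(\sol,\vec p)$. So you would need an exact polynomial-time algorithm for classical non-adaptive unit-cost \kofn testing, and none is known; the paper cites only a $3/2$-approximation and a PTAS. In particular, no order sorted by a key monotone in the probabilities, such as nonincreasing $\ell_i+r_i$, can work. For $n=3$, $k=2$ the cost is $2+p_a+p_b-2p_ap_b$, where $a,b$ are the first two tests. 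For $\vec p=(0.9,0.6,0.05)$ decreasing order is optimal ($2.42$ versus $2.59$ for increasing). For $\vec p=(0.95,0.4,0.05)$ increasing order is optimal ($2.41$ versus $2.59$ for decreasing). A DP over pass/fail counts does not help either, since choosing the order requires remembering which tests have been used.

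The missing idea is to use a surrogate with only \emph{one} stopping condition. After complementing so that $k\le m:=\lceil n/2\rceil$, every realization has $B(\vec p)\ge n-k+1\ge m$. So with unit costs, pointwise $\min(A(\vec p),m)\le T(\vec p)\le 2\min(A(\vec p),m)$; this is the paper's $\advb$ and Lemma~\ref{lem:robb-bound}. Your own coupling then shows that $\E[\min(A_\sol(\vec p),m)]$ is maximized at $\vec p=\vec\ell$ for every $\sol$ (Lemma~\ref{lem:unit-cost-adv}). Moreover, $\E[\min(A_\sol(\vec\ell),m)]=\sum_{t=0}^{m-1}\Pr[\text{fewer than $k$ of the first $t$ tests pass under $\vec\ell$}]$. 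The order $\pi$ by nonincreasing $\ell_i$ minimizes every term of this sum simultaneously (Lemma~\ref{lem:unit-cost-policy}). Hence $\adv(\pi)\le 2\E[\min(A_\pi(\vec\ell),m)]\le 2\E[\min(A_{\sol^*}(\vec\ell),m)]\le 2C(\sol^*,\vec\ell)\le 2\adv(\sol^*)$.
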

This algorithm relies on a useful property of the unit-cost setting: one of the two stopping conditions will occur only  after the cost incurred is at least $\frac{n}2$, which is a constant fraction of the total cost $n$.   So, at the loss of factor two in the approximation, the adversary can ignore one stopping condition, and optimize for the other. Optimizing for just one of the two stopping conditions turns out to be simple: the adversary will either set $\vec p = \vec \ell$ or  $\vec p = \vec r$.
The algorithm for the distributionally robust problem  is  also based on  this characterization: it   selects tests in the order of either $\ell_i$ or $r_i$.

Our second result handles instances with general costs, under  the assumption that all  probabilities are bounded away from $0$ and $1$. 

\begin{theorem} \label{thm:gen-cost}
There is an  $\bigOh{\frac1{\sqrt{\epsilon}}}$-approximation algorithm for   the \kofn adversary  problem and the   distributionally robust \kofn  problem on $\epsilon$-bounded instances with $\epsilon=\Omega(\frac1n)$.
\end{theorem}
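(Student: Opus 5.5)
Building on the unit-cost idea behind Theorem~\ref{thm:unit-cost}, I would show that in an \(\epsilon\)-bounded instance the adversary loses only an \(O(1/\sqrt{\epsilon})\) factor by committing to one of the two extreme vectors \(\vec\ell\) or \(\vec r\). That is, for every permutation \(\sol\),
\[
\adv(\sol)\le O\!\left(\tfrac{1}{\sqrt\epsilon}\right)\max\{C(\sol,\vec\ell),C(\sol,\vec r)\}.
\]
Given this, the adversary problem is approximated by evaluating \(C(\sol,\vec\ell)\) and \(C(\sol,\vec r)\) exactly. These are sums of Poisson-binomial probabilities and can be computed by dynamic programming.

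\textbf{Step 1: split by stopping condition.} The \(i\)th test is performed exactly when fewer than \(k\) passes and fewer than \(n-k+1\) failures have occurred among the first \(i-1\) tests. Hence \(\Pr[\text{test } i \text{ performed}]\le \min\{A_i(\vec p),B_i(\vec p)\}\), where \(A_i(\vec p)=\Pr[\sum_{j<i}X_{\sol_j}\le k-1]\) and \(B_i(\vec p)=\Pr[\#\text{failures among first } i-1\le n-k]\). Since \(A_i\) is monotone decreasing in \(\vec p\), it is maximized at \(\vec\ell\). Similarly \(B_i\) is maximized at \(\vec r\).

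\textbf{Step 2: choose a crossover index.} Let \(m=\sum_{j<i}p_{\sol_j}\) be the mean number of passes among the first \(i-1\) tests. I would pick the index \(i^*\) where the adversary's mean crosses between the two thresholds \(k\) and \(i-1-(n-k)\). Before \(i^*\), only the pass condition is likely to have triggered, and I charge the cost to \(C(\sol,\vec r)\)-type terms. After \(i^*\), only the fail condition is likely, and I charge to \(C(\sol,\vec\ell)\)-type terms.

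The delicate part is showing that, under an arbitrary \(\vec p\), terms near the crossover are not much larger than under the extreme vector. Concretely, the claim to prove is that for each \(i\),
\[
\Pr_{\vec p}[\text{test } i \text{ performed}]\le O(1/\sqrt\epsilon)\cdot \max\{\Pr_{\vec\ell}[\cdot],\Pr_{\vec r}[\cdot]\}
\]
for the relevant one of the two events.

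\textbf{Step 3: anti-concentration, the main obstacle.} The required fact is about a Poisson binomial distribution whose every parameter lies in \([\epsilon,1-\epsilon]\). Its variance is at least \(\epsilon(1-\epsilon)(i-1)\), so each point mass is \(O(1/\sqrt{\epsilon i})\). Shifting probabilities from \(\vec p\) to \(\vec\ell\) moves the mean, but it changes the probability of the interval window \([i-1-(n-k),k-1]\) by at most a factor controlled by the standard deviation.

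I would compare the two window probabilities by coupling: raise coordinates one at a time from \(\ell_j\) to \(p_j\). Each step changes the window probability by \((p_j-\ell_j)\) times a boundary point mass. Summing these changes and bounding them via local limit or anti-concentration estimates would give the \(1/\sqrt\epsilon\) factor. For very small \(i\), the Gaussian approximation fails; I would handle those indices crudely using \(\epsilon=\Omega(1/n)\), with the window containing at least one point.

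\textbf{Step 4: the minimization problem.} With the lemma in hand, \(\min_\sol\max\{C(\sol,\vec\ell),C(\sol,\vec r)\}\) still needs to be approximated. I would use the known constant-factor (or optimal) non-adaptive \kofn algorithms for a single known distribution. I would interleave the \(\vec\ell\)-optimal and \(\vec r\)-optimal orders, for instance by balancing accumulated cost, so that each stopping condition is reached within a constant factor of its own optimum. This makes \(\max\{C(\cdot,\vec\ell),C(\cdot,\vec r)\}\) within \(O(1)\) of the optimum of that max problem, which in turn is at most \(\opt\).
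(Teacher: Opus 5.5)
Your central claim, $\adv(\sol)\le O(1/\sqrt\epsilon)\max\{C(\sol,\vec\ell),C(\sol,\vec r)\}$, is false, so neither half of the theorem follows from it.

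Take every interval equal to $[\epsilon,1-\epsilon]$ with $\epsilon=\frac1{10}$, set $k=\frac n2$, and put cost $1$ on the last $\sqrt n$ tests of $\sol$ and cost $0$ elsewhere. Under $\vec\ell$, the pass count at any stage $\stg\ge n-\sqrt n$ has mean at most $\frac n{10}$. To keep testing it must be at least $\stg-n+k\ge\frac n2-\sqrt n$, so Hoeffding gives $C(\sol,\vec\ell)\le\sqrt n\,e^{-\Omega(n)}$. Symmetrically, $C(\sol,\vec r)\le\sqrt n\,e^{-\Omega(n)}$. Yet $\vec p=(\frac12,\dots,\frac12)\in\unset$ keeps the process alive. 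At stage $n-t$ the pass count has mean $\frac{n-t}2$ and standard deviation about $\frac{\sqrt n}2$. It therefore lies in $N_{n-t}=\{\frac n2-t,\dots,\frac n2-1\}$ with constant probability whenever $\frac{\sqrt n}2\le t\le\sqrt n$, so $\adv(\sol)=\Omega(\sqrt n)$.

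The argument breaks exactly where you expected difficulty. In Step 1, $A_i(\vec\ell)\approx1$ here, but $C(\sol,\vec\ell)$ charges the two-sided event, which $\vec\ell$ kills through the \emph{other} stopping condition. Dropping one condition costs only a factor $2$ with unit costs, but not with general costs. In Step 3, the coordinate-by-coordinate coupling gives only an additive error of order $\sum_j(p_j-\ell_j)/\sqrt{\epsilon i}$. That is vacuous once the mean has moved by $\Theta(n)$ against a standard deviation of $\Theta(\sqrt n)$: window probabilities then change by an exponential factor, not a bounded one.

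The missing idea is that when $E_n$ meets $\tn_n$, the adversary must use an \emph{interior} vector. The paper uses $\vec\ell$ or $\vec r$ only when $E_n$ lies entirely above or below $\tn_n$. There, unimodality makes the extreme vector exactly optimal at every stage where the windows are disjoint, and Lemma~\ref{lem:below-once} rules out a stage lying on the wrong side.

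In the overlapping case, the paper builds $\hat{\vec p}$ whose prefix means $\Pi_\stg(\hat{\vec p})$ stay in $\tn_\stg$ at every stage (Lemmas~\ref{lem:overlap-throughout} and~\ref{lem:path-in-window-exists}); in the example above, $\hat{\vec p}=(\frac12,\dots,\frac12)$. It then shows that any such vector has non-completion probability at least $\alpha_\stg$, that every $\vec p\in\unset$ has at most $\beta_\stg$, and that $\beta_\stg/\alpha_\stg=O(1/\sqrt\epsilon)$.

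This also breaks your DR step. In the same example, every order pays $\Omega(\sqrt n)$ under $(\frac12,\dots,\frac12)$, so $\opt=\Omega(\sqrt n)$. Meanwhile your lower bound $\max\{\min_\sol C(\sol,\vec\ell),\min_\sol C(\sol,\vec r)\}$ is at most $\sqrt n\,e^{-\Omega(n)}$, so Step 4 cannot certify any bounded ratio. The paper instead sorts by increasing cost in the overlapping case. Since $\beta_\stg$ depends only on $\stg$ and is non-increasing, this gives $\adv(\asol)\le\sum_\stg c_{\osol_\stg}\beta_{\stg-1}$, which it compares against $\adv(\osol)\ge C(\osol,\hat{\vec p})\ge\sum_\stg c_{\osol_\stg}\alpha_{\stg-1}$.
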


This result relies on a careful analysis of the ``non completion'' probability at each stage $\nu\in [n]$, i.e., the  probability that the testing continues past the $\nu^{th}$ test.
For each stage $\nu \in [n]$, the number of passes observed follows a \gls{pbd}.
We then use various concentration and anti-concentration properties of \glspl{pbd} to simplify the \gls{adv}, at the loss of a $\frac1{\sqrt{\epsilon}}$ factor. In particular, instead of  having to  choose each probability $\{p_i\}_{i=1}^n$, we show that it suffices to carefully choose the ``prefix mean''   $\sum_{i=1}^\stg{p_i}$ for each stage $\stg$.

Finally, we provide an improved approximation algorithm for the adversary problem at the expense of more computational time.
Recall that quasi-polynomial running time  is of the form $n^{\log^c n}$ for constant $c$, where $n$ is the instance size. This is smaller  than exponential time $2^{\poly(n)}$ but larger  than polynomial time.  
\begin{theorem} \label{thm:qptas-adv}
    There is a \gls{qptas} for the \kofn adversary problem assuming that the aspect ratio $\frac{c_{\max}}{c_{\min}}$ is polynomial in $n$.
\end{theorem}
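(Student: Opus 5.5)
We design a dynamic program over the positions of the fixed permutation $\sol$ that guesses a rounded description of the distribution of the number of passes. By multilinearity of $C(\sol,\vec p)$ we may assume each $p_{\sol_i}\in\{\ell_{\sol_i},r_{\sol_i}\}$. Then the adversary's choice is a binary string, and the objective is
$$\sum_{i} c_{\sol_i}\Pr[\text{stage } i \text{ reached}].$$
The event that stage $i$ is reached depends only on $S_{i-1}=\sum_{j<i}X_{\sol_j}$, namely on whether $S_{i-1}\in[i-1-(n-k),\,k-1]$.

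A DP whose state is the exact distribution of $S_{i-1}$ is exponential. The plan is therefore to make the objective depend on a small number of statistics and round them. First, normalize costs: by the aspect-ratio assumption, $\opt\ge c_{\min}$. Also, every stage's contribution is at most $c_{\max}$, and there are $n$ stages. So an additive error of $\delta c_{\min}$ per stage, with $\delta=\varepsilon/n\cdot c_{\min}/c_{\max}=1/\poly(n)$, suffices for a $(1+\varepsilon)$-approximation.

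Second, use a structural result on Poisson binomial distributions, in the style of Daskalakis–Papadimitriou sparse covers. Any \gls{pbd} is $\delta$-close in \gls{tv} to one of two forms. The first form is a \gls{pbd} in which all but $O(1/\delta^{c})$ probabilities are rounded to multiples of $\delta^{c}$ (the "sparse" case). The second form is a shifted binomial determined by its mean and variance, rounded to a $\poly(1/\delta)$ grid (the "heavy" case). Since the stopping probability at stage $i$ is a probability of an interval under $S_{i-1}$, a \gls{tv} error of $\delta$ changes each term by at most $c_{\sol_i}\delta$.

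Third, the DP. Its state at position $i$ records the summary of the prefix $\sol_1,\dots,\sol_{i-1}$: the counts of how many chosen probabilities fall in each rounding bucket (sparse part), or the rounded mean and variance (heavy part). Because the buckets are $\poly(1/\delta)$ in number, counts range up to $n$, and $1/\delta=\poly(n)$, the number of states is $n^{\poly\log}$ if we use the quasi-polynomial version of the cover. We would aim for rounding probabilities to $O(\log n/\delta)$ geometric classes so that states are $n^{O(\log n)}$. Transitions choose $p_{\sol_i}\in\{\ell,r\}$ and update the summary. The value accumulated is $c_{\sol_i}$ times the interval probability computed from the summary's representative distribution. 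Finally, we output the best DP path, re-evaluate it exactly, and compare.

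The main obstacle is that the rounding must be consistent across all prefixes simultaneously. The per-stage errors must not accumulate along the DP, since a summary updated incrementally drifts from the true prefix distribution. I would handle this by rounding only the probabilities themselves, not the distribution. Each $p$ is rounded to a nearby grid value $\tilde p$ with multiplicative or additive error $\eta$, and the rounded prefix \gls{pbd} is compared to the true one via a \gls{tv} bound of the form $O(\sqrt{\sum_j (p_j-\tilde p_j)^2/(p_j(1-p_j))})$. This bound holds for every prefix at once, so no drift occurs. The state is then the histogram of rounded values in the prefix, which has $n^{O(\#\text{grid points})}$ states. Making the grid size $\poly\log n$ while keeping the \gls{tv} error $1/\poly(n)$ is the delicate step, especially for probabilities near $0$ or $1$ where additive rounding fails. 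I would first try geometric grids near the endpoints combined with the Poisson approximation for the small-$p$ tests.
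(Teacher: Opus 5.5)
\paragraph{Verdict: there is a genuine gap.} Your preliminary steps are fine: extreme points by multilinearity, $\opt\ge c_{\min}$ because the first test is always performed, and a per-stage TV budget of $1/\poly(n)$. Your instinct that the DP state should be an exact function of the prefix, so that nothing drifts, is also the right one. The gap is in the compression you finally commit to.

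\paragraph{Why coordinatewise rounding fails.} You round each $p_j$ to a grid and control the error by the product bound $O\bigl(\sqrt{\sum_j (p_j-\tilde p_j)^2/(p_j(1-p_j))}\bigr)$. This forces an additive precision of order $\delta/\sqrt n$ already for probabilities near $\frac12$, not only near $0$ or $1$. That means a grid with $\poly(n)$ points. A histogram over $\poly(n)$ buckets can then take $2^{\Omega(n)}$ reachable values: if the $2n$ endpoints land in distinct buckets, the histogram determines the whole choice vector.

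This is intrinsic to rounding coordinates independently, not an artifact of the bound. Collapsing a bucket of width $w$ to one representative can shift the mean of a prefix of $\Theta(n)$ tests by $\Theta(nw)$, against a standard deviation of $O(\sqrt n)$. So TV error $1/\poly(n)$ needs $w\le 1/\poly(n)$. Geometric grids and Poisson approximation only treat the endpoints.

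The sparse-cover paragraph has the same defect. With $\poly(1/\delta)=\poly(n)$ buckets and counts up to $n$ you get $n^{\poly(n)}$ states, not $n^{\poly\log n}$. The heavy (mean/variance binomial) form has TV guarantee only of order $1/\sigma$, and $1/\sigma\ge 2/\sqrt n$, far above the $1/\poly(n)$ you need.

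\paragraph{The missing idea: moment matching.} Moment matching exploits cancellation in the sum rather than closeness of individual coordinates. The paper's state at stage $i$ is the first $d$ power sums $\sum_{j<i}\omega_j^a$, $a\in[d]$. These are computed separately over the chosen probabilities that are $\le\frac12$ and those that are $>\frac12$.
\begin{itemize}
\item Power sums are exactly additive, so the state is an exact function of the prefix and transitions are exact.
\item By the Daskalakis--Papadimitriou theorem, equal first $d$ power sums on $[0,\frac12]^n$ (or on $[\frac12,1]^n$) imply TV at most $13(d+1)^{1/4}2^{-(d+1)/2}$. Combined with subadditivity of TV over independent sums, any two prefixes with the same state have PBDs within $O(d2^{-d/2})$.
\item To make the state space finite, all $\ell_i,r_i$ are first rounded to a polynomially fine grid (the paper uses $\frac1{n^3}\Z$). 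This is cheap because the TV distance between the two PBDs is at most $\sum_i|p_i-p_i'|$. The $a$-th power sum then takes $n^{O(a)}$ values, giving $n^{O(d^2)}$ states.
\item Each state gets a representative choice vector, whose exact interval probability defines the stage cost.
\item Backward induction shows the compressed DP value is within $n\cdot O(d2^{-d/2})\,c_{\max}$ of the exact optimum. Taking $d=O(\log(nc_{\max}/c_{\min}))=O(\log n)$ makes this at most $c_{\min}/n\le\opt/n$, in $n^{O(\log^2 n)}$ time.
\end{itemize}

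Your drift worry is also milder than you feared. Convolving two distributions with the same Bernoulli cannot increase their TV distance, so a convolve-then-snap-to-cover DP would accumulate only $n\delta$ error. But that route still needs a quasi-polynomial-size cover with $1/\poly(n)$ accuracy, which again comes from moment matching rather than coordinatewise rounding.
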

This algorithm uses a structural result on \glspl{pbd} from \cite{daskalakis_sparse_2015}, which states that any two 
\glspl{pbd} having equal first $d$ moments are very close   
 in  total-variation distance  ($\approx 2^{-d/2}$). We need to set $d\approx \log n$ in order to achieve a good enough approximation. Finally, we need to apply dynamic programming on top of this structural result, which leads to an exponential dependence on $d$.

\subsection{Related Work}\label{sec:rel-works}
The classical sequential testing problem was introduced in the quality engineering literature in \cite{butterworth_reliability_1972,mitten_1960}, where ``series'' systems  are studied (this is a special case of \kofn where all $n$ tests need to pass). It is well known that the greedy algorithm that selects tests in increasing order of $\frac{c_i}{1-p_i}$ is optimal for series systems. Moreover, there is no difference between adaptive and non-adaptive solutions. 

For the more general \kofn problem, there is an exact adaptive algorithm \cite{ben-dov_optimal_1981} that involves carefully interleaving two lists: increasing order of $\frac{c_i}{1-p_i}$ and $\frac{c_i}{p_i}$ respectively. The non-adaptive setting turns out to be more challenging here. A round-robin combination of these two lists was  used in \cite{gkenosisStochasticScoreClassification2018a} to obtain a $2$-approximation algorithm for non-adaptive \kofn, even in comparison to the optimal adaptive cost. Better results are known in the {\em unit-cost} setting: \cite{grammel2022algorithms} obtained a $\frac32$-approximation algorithm and  \cite{nielsen2025nonadaptiveevaluationkofnfunctions} obtained a polynomial-time approximation scheme (PTAS).
Tight  ``adaptivity gaps'' (worst case ratio between optimal non-adaptive and adaptive solutions) are also known for the \kofn problem: $\frac32$ for unit-costs \cite{nielsen2025nonadaptiveevaluationkofnfunctions} and $2$ for general costs \cite{gkenosisStochasticScoreClassification2018a}.   

Approximation algorithms have been obtained for several other sequential testing problems, such as halfspace evaluation~\cite{deshpande2016approximation,ghugeNonadaptiveStochasticScore2025}, score classification~\cite{gkenosisStochasticScoreClassification2018a, ghugeNonadaptiveStochasticScore2025,PlankS24}, symmetric boolean functions~\cite{gkenosis2022stochastic,nielsen2025nonadaptiveevaluationkofnfunctions}, voting functions~\cite{hellerstein2024quickly,keles2026exact} and basis testing in partition matroids~\cite{hellerstein2026approximating}.   All the above results rely on knowing  exact success/failure probabilities. In this paper, we relax this assumption and study a distributionally-robust model    for \kofn testing.

There is also a line of work that explores sequential testing in settings where costs are not additive (as above) but subadditive, which models economies of scale.  
\cite{daldal_sequential_2017} formulated the batched testing problem, where a fixed-cost is charged for conducting a batch of tests.
\cite{daldal_approximation_2016} and \cite{segev_polynomial-time_2022} obtained approximation algorithms for series-systems under batch costs, the latter being a PTAS. 
 \cite{tanGeneralFrameworkSequential2025} obtained a general framework to transform any non-adaptive algorithm for sequential testing with additive cost to  batch costs; this provides a $2.62$-approximation algorithm for \kofn testing under batch costs. More recently,  \cite{harrisSequentialTestingSubadditive2025a} obtained approximation algorithms for series-systems under very general subadditive cost structures.

Distributionally-robust optimization (DRO) is itself a classical framework in Operations Research, proposed in the 1950s by \cite{scarf1957min}.  For a recent review, we refer readers to \cite{kuhn_distributionally_2025}; most of the existing works in DRO focus on continuous optimization problems.
There are some DRO results for discrete optimization as well.
For example, \cite{bertsimas2004probabilistic} studied DRO for combinatorial optimization problems with random objective coefficients and moment-based uncertainty sets; these can be viewed as DRO for single-stage decision problems. \cite{agrawal2012price} obtained approximation algorithms for DRO on a class of single-stage stochastic problems with correlated distributions having a prescribed mean.    
More recently, \cite{linhares2019approximation} obtained approximation algorithms for two-stage stochastic covering problems such as facility location and set cover. In contrast to these previous results, we  address {\em  multistage}  stochastic optimization in the DRO framework.

\subsection{Preliminaries}\label{sec:preliminaries}
In this section, we first summarize key properties of the Poisson Binomial Distribution, a generalization of the binomial distribution with non-identical success probabilities. Next, we introduce a dynamic program approach for the DR \kofn problem that is used crucially in all our results.

\subsubsection{Poisson Binomial Distribution (PBD)}\label{sec:pbd}
We let $S = \sum_{i=1}^n X_i$ be the \gls{rv}\@ denoting the number of passes among the $n$ tests.
The \gls{rv}\@ $S$ follow a \gls{pbd}.
We say that a \gls{pbd} is of \emph{order} $n$ if it is a sum of $n$ independent Bernoulli \glspl{rv}.
We use the shorthand $\pbrv(\vec p)$ to represent an \gls{rv}\@ that follows a \gls{pbd} with probabilities $\vec p$. That is, $ \pbrv(\vec p) = \sum_{i=1}^n X_i $ where $X_i \sim \bern(p_i)$.
We sometimes want to know the number  of passes only within the first $\nu$ tests.
Thus, we introduce the notation $\pbrv_\stg(\vec p) = \sum_{i=1}^\nu X_i$ to denote the number of passes among the prefix $[\nu]$ of tests.\footnote{Note that $\pbrv(\vec p)$ only depends on the first $\stg$ probabilities in $\vec p$.} 

\begin{definition}[Poisson binomial distribution] \label{def:pbd}
    Let $\{X_i\}_{i=1}^n$ be a set of independent Bernoulli \glspl{rv} with probability $p_i$ of success.
    The \gls{rv}\@ $\pbrv(\vec p)$ has the probability mass function
    \begin{equation*}
        \Pr[\pbrv(\vec p) = s] = \sum_{A \in {[n] \choose s}} \,\, \prod_{i\in A} p_i \prod_{j\not\in A}(1-p_j),
    \end{equation*}
    where ${[n] \choose s}$ is the collection of all subsets of $[n]$ of size $s$.
    Moreover, the mean  $\E[\pbrv(\vec p)] = \sum_{i=1}^n p_i$ and variance $\sigma^2_{\pbrv(\vec p)} = \sum_{i=1}^n p_i(1-p_i)$.
\end{definition}

Recall that the median of any \gls{rv}\@ $Y$ is a value $m$ such that $\Pr[Y\ge m]\ge \frac12$ and $\Pr[Y\le m]\ge \frac12$. Moreover, the mode of a discrete \gls{rv}\@ is a value that has the maximum point probability. The next few results relate the median and mode to the mean of any \gls{pbd}.

\begin{theorem}[{\cite[Corollary~3.1]{jogdeoMonotoneConvergenceBinomial1968}}] \label{thm:median}
The median of  \(\pbrv(\vec p)\) is either  $\floor{\mu}$ or $\ceil{\mu}$, where   $\mu$ is its mean.
\end{theorem}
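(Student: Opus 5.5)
To prove Theorem~\ref{thm:median}, I would establish the two tail bounds
\[
\Pr[S\ge \floor{\mu}]\ge \tfrac12 \quad\text{and}\quad \Pr[S\le \ceil{\mu}]\ge \tfrac12,
\]
where $S=\pbrv(\vec p)$ and $\mu=\sum_i p_i$. These two inequalities say that the median interval of $S$ meets $[\floor{\mu},\ceil{\mu}]$, so some median lies in $\{\floor{\mu},\ceil{\mu}\}$. By the symmetry $S\mapsto n-S$, which replaces $p_i$ by $1-p_i$ and $\mu$ by $n-\mu$, the second bound follows from the first. So it suffices to show $\Pr[S\ge m]\ge \frac12$ whenever $m\le \mu$ is an integer; the case $m=\floor{\mu}$ is the one we need.

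\textbf{Step 1: reduce to integer mean.} Let $m=\floor{\mu}$. If $\mu>m$, lower some of the $p_i$ continuously until the mean is exactly $m$. Since $\Pr[S\ge m]$ is coordinatewise nondecreasing in $\vec p$, this can only decrease the probability. It therefore suffices to prove $\Pr[S\ge \mu]\ge \frac12$ when $\mu$ is an integer.

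\textbf{Step 2: extremal configuration.} Fix an integer $\mu$ and minimize $g(\vec p)=\Pr[S\ge \mu]$ over the compact set $\{\vec p\in[0,1]^n:\sum_i p_i=\mu\}$. This is Hoeffding's classical extremal argument. Take two coordinates $p_1,p_2$ and hold their sum $s$ fixed. Writing $T$ for the sum of the remaining Bernoullis, $g$ is affine in the product $p_1p_2$:
\[
g = A + B\,s + C\,p_1p_2,\qquad C=\Pr[T=\mu-2]-\Pr[T=\mu-1].
\]
If $C\neq 0$, we can move $(p_1,p_2)$ to a more extreme or more equal pair, which weakly decreases $g$. If $C=0$, $g$ does not depend on the split, and we may still move. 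Iterating this shows that a minimizer can be taken with every non-degenerate $p_i$ equal to a common value $q$, and the remaining coordinates in $\{0,1\}$.

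Coordinates equal to $1$ shift both $S$ and $\mu$ by the same amount, and zeros can be deleted. This reduces the problem to a binomial $\mathrm{Bin}(N,q)$ with integer mean $Nq$.

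\textbf{Step 3: the binomial case.} For a binomial with integer mean $Nq$, the mean equals the median, which is classical (Kaas--Buhrman). This gives $\Pr[\mathrm{Bin}(N,q)\ge Nq]\ge \frac12$.

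The main obstacle is making Step 2 rigorous. The sign of $C$ varies with the configuration, so one must argue at a minimizer rather than by a monotone path. Concretely, at a minimizer with two distinct interior values, the first-order conditions force $C=0$, and then equalizing is free. I would follow Hoeffding's 1956 argument for this step.

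If that argument proves too fiddly, I would simply cite Jogdeo--Samuels, as the paper itself does, since the statement is a direct quotation of their corollary.
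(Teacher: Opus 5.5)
The paper gives no proof of this statement. It is imported from Jogdeo--Samuels and used only through the two tail bounds $\Pr[S\ge\lfloor\mu\rfloor]\ge\frac12$ (Lemma~\ref{lem:rhs-conc}) and $\Pr[S\le\lceil\mu\rceil]\ge\frac12$ (Lemma~\ref{lem:small-stage-lb}). These are exactly the inequalities you reduce to, so your reformulation is the right one, including the symmetry $p_i\mapsto 1-p_i$ via $n-\lceil\mu\rceil=\lfloor n-\mu\rfloor$.

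Your route is a genuine proof sketch rather than a citation. Monotone coupling reduces to an integer mean, Hoeffding's extremal principle reduces an integer-mean PBD to an integer-mean binomial, and the classical binomial median theorem finishes; this is correct. What it buys is a structural explanation: the PBD statement is no harder than the binomial one. It does not remove the dependence on external results, since the integer-mean binomial theorem carries the real content. The paper's citation buys brevity and, per its own remark, sharper information about which of $\lfloor\mu\rfloor,\lceil\mu\rceil$ occurs.

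The one step to tighten is the end of Step~2. At a minimizer, $C=0$ holds for a pair of distinct interior coordinates. However, equalizing that pair changes the law of $T$, and hence $C$, for other pairs. Repeated pairwise averaging therefore need not terminate: from values $a,a,b$ with $a\neq b$, every coordinate stays a dyadic combination of $a$ and $b$, so the common value $(2a+b)/3$ is never reached.

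Replace the iteration by a selection argument. Among all minimizers, take one with the fewest coordinates in $(0,1)$. Suppose two of its interior coordinates differed. Then $p_1p_2$ can move in both directions with the sum fixed, so minimality forces $C=0$. Sliding the pair along $p_1+p_2=s$ to an endpoint of the feasible segment then leaves $g$ unchanged and puts at least one coordinate in $\{0,1\}$, contradicting the choice. Hence all interior coordinates are equal, and your reduction to $\mathrm{Bin}(N,q)$ with $Nq=\mu-j\in\mathbb{Z}$ goes through as written.
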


\begin{theorem}[{\cite[Theorem~4]{darrochDistributionNumberSuccesses1964}}] \label{thm:mode}
The mode of  \(\pbrv(\vec p)\) is either  $\floor{\mu}$ or $\ceil{\mu}$, where   $\mu$ is its mean.
\end{theorem}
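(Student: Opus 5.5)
The statement is Darroch's theorem: the mode of $\pbrv(\vec p)$ is $\floor{\mu}$ or $\ceil{\mu}$. The plan is to use the real-rootedness of the probability generating function together with Newton's inequalities, and then locate the peak of the resulting log-concave sequence relative to the mean.

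\textbf{Step 1 (log-concavity).} Let $\pi_s=\Pr[\pbrv(\vec p)=s]$. Its generating function is
\[
G(x)=\sum_s \pi_s x^s=\prod_{i=1}^n (1-p_i+p_i x),
\]
which has only real nonpositive roots. Degenerate tests with $p_i\in\{0,1\}$ merely shift or truncate the support, so I handle them separately and assume $0<p_i<1$. Newton's inequalities for a real-rooted polynomial give
\[
\pi_s^2\ \ge\ \pi_{s-1}\pi_{s+1}\Bigl(1+\tfrac1s\Bigr)\Bigl(1+\tfrac1{n-s}\Bigr).
\]
In particular the ratio $\rho_s=\pi_s/\pi_{s-1}$ is strictly decreasing in $s$. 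Hence the sequence is unimodal, and its mode is the largest $s$ with $\rho_s\ge 1$.

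\textbf{Step 2 (the mode is at least $\floor{\mu}$).} Suppose the mode $m$ satisfies $m\le \floor{\mu}-1$, so $m+1\le\mu$. The ratios $\rho_s$ for $s>m$ are below $1$ and decay faster than those of a binomial with the same shape. The cleanest route is Darroch's original one: express $\mu$ through the roots of $G$, then compare with $\rho_{m+1}<1$ via the strengthened Newton factor. This should show that a mean at least $m+1$ forces $\rho_{m+1}\ge1$, a contradiction.

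An alternative I would try first, as it is more transparent, uses Step 1 directly. The strict Newton inequality gives $\pi_s$ "more concentrated" than a geometric sequence. One then writes
\[
\mu-m=\sum_s (s-m)\pi_s
\]
and bounds the contributions on each side of $m$. Let $\alpha=\rho_{m+1}<1$ and $\beta=\rho_m^{-1}\le 1$. The improved ratio decay from the $(1+1/s)(1+1/(n-s))$ factor should yield $\mu<m+1$. Symmetrically, $\mu>m-1$ follows by applying the same argument to $n-\pbrv(\vec p)$, which is a PBD with probabilities $1-p_i$ and mean $n-\mu$.

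\textbf{Main obstacle.} Combining the two bounds gives $m-1<\mu<m+1$, i.e.\ $m\in\{\floor{\mu},\ceil{\mu}\}$. The hard step is the quantitative inequality $\mu<m+1$. Plain log-concavity does not suffice: a geometric-like tail could push the mean past $m+1$. The extra Newton factor $(1+1/s)$ is essential. If the direct summation gets messy, I would fall back to Darroch's root-based argument. In that argument $\mu=\sum_i p_i$, and one analyzes $\rho_{m+1}$ as a ratio of elementary symmetric functions of the odds $p_i/(1-p_i)$, using Newton–Maclaurin to compare it with the mean.

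Since the paper cites the result to \cite{darrochDistributionNumberSuccesses1964}, it can be invoked as a black box. The plan above is how I would reconstruct it.
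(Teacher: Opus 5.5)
There is no in-paper argument to compare against: the paper imports this statement from Darroch (1964) as a black box, which is legitimate. Your reconstruction, however, has a genuine gap exactly at the step you flag. Everything hinges on the implication $\mu\ge k\Rightarrow \pi_k\ge\pi_{k-1}$ (your claim that ``$\mu\ge m+1$ forces $\rho_{m+1}\ge 1$''). Neither route proves it; both end in ``should''.

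Worse, the route you would try first cannot succeed. Newton's inequalities say precisely that $\pi_s/\binom{n}{s}$ is log-concave, and that alone does not bound the mean by the mode plus one. Take $n=10$, $\pi_s\propto\binom{10}{s}(0.21)^s$ for $1\le s\le 10$, and $\pi_0=\delta$ tiny.
\begin{itemize}
\item For $2\le s\le 9$ your Newton inequality holds with equality, and at $s=1$ it holds because $\delta$ is small.
\item Since $\rho_2=4.5\cdot 0.21<1$, the mode is $m=1$.
\item The mean is essentially that of $\mathrm{Bin}(10,q)$ with $q=0.21/1.21$, conditioned on being positive: $10q/(1-(1-q)^{10})\approx 2.04>m+1$.
\item Letting $n\to\infty$ with the ratio just below $2/(n-1)$, the mean tends to $2/(1-e^{-2})\approx 2.31$.
\end{itemize}
So no summation argument that uses only the $(1+\frac1s)(1+\frac1{n-s})$ factor can give $\mu<m+1$. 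Your fallback, as described (Newton--Maclaurin on the elementary symmetric functions of the odds), rests on the same inequalities and fails for the same reason. Real-rootedness, i.e.\ the product form $\prod_i(1-p_i+p_ix)$, carries strictly more information than Newton's inequalities, and the proof must exploit it some other way.

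One route that works is Hoeffding-style smoothing.
\begin{itemize}
\item Fix $c=p_i+p_j$. Then $\Pr[X_i+X_j=0],\Pr[X_i+X_j=1],\Pr[X_i+X_j=2]$ equal $1-c+t$, $c-2t$, $t$ with $t=p_ip_j$, so $F(\vec p)=\pi_k-\pi_{k-1}$ is affine in $t$.
\item Minimize $F$ over the compact set $\{\vec p\in[0,1]^n:\sum_i p_i=\mu\}$, and pick a minimizer with the most coordinates in $\{0,1\}$.
\item Two unequal fractional coordinates are then impossible. If the slope in $t$ is nonzero you can strictly decrease $F$; if it is zero you can slide to a minimizer with one more integral coordinate.
\item Hence $S=a+B$ with $B\sim\mathrm{Bin}(N,p)$ and $a+Np=\mu\ge k$. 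For $1\le j=k-a\le Np$ the binomial ratio $\frac{(N-j+1)p}{j(1-p)}$ is at least $1$, and $j\le 0$ is trivial. This gives $F\ge 0$.
\item Apply the same argument to $n-S$ for the other side.
\end{itemize}
Finally, you also need the strict inequality (for $0<p_i<1$) to handle two adjacent modes. Otherwise your largest-mode step, together with the complement step (which controls the smallest mode), only yields $m-2<\mu<m+1$ when $\pi_{m-1}=\pi_m$.
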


\begin{theorem}[{\cite[Theorem~2]{wang1993number}}]\label{thm:unimodal} 
    The \gls{pbd} is unimodal, i.e.,
    the mode is either unique or shared by two adjacent integers.
    The probability mass decreases monotonically away from the mode.
\end{theorem}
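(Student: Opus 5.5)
The plan is to prove the stronger statement that the probability mass function $a_s := \Pr[\pbrv(\vec p)=s]$ is \emph{strictly} log-concave on its support, i.e.\ $a_s^2 > a_{s-1}a_{s+1}$ for every interior index $s$. Unimodality with a plateau of length at most two then follows directly.

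First I would reduce to the case $p_i\in(0,1)$ for all $i$. A test with $p_i=0$ contributes nothing to the sum. A test with $p_i=1$ only shifts the distribution by one. Neither operation affects unimodality or the shape of the mode set, so we may drop the former and shift by the number of the latter. Now let $m$ be the number of remaining tests. The support of $\pbrv(\vec p)$ is then all of $\{0,1,\dots,m\}$ with every $a_s>0$, so there are no internal zeros.

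Next I would use the probability generating function
\[ G(x)=\sum_{s=0}^m a_s x^s=\prod_{i=1}^m\bigl((1-p_i)+p_i x\bigr). \]
This is a degree-$m$ polynomial whose roots $-(1-p_i)/p_i$ are all real and negative. Newton's inequalities for real-rooted polynomials then give
\[ a_s^2 \;\ge\; a_{s-1}a_{s+1}\Bigl(1+\tfrac1s\Bigr)\Bigl(1+\tfrac1{m-s}\Bigr) \qquad (1\le s\le m-1). \]
Since every $a_s>0$, this yields the strict inequality $a_s^2>a_{s-1}a_{s+1}$.

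As an alternative, if one prefers to avoid quoting Newton, I would argue by induction on $m$. The convolution of a log-concave positive sequence with the two-term sequence $(1-p,p)$ is again log-concave. A short direct computation checks this and tracks strictness, which is where most of the care would go. Newton's route is the cleaner one to commit to.

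Strict log-concavity together with positivity means the ratios $r_s:=a_{s+1}/a_s$ are strictly decreasing in $s$. Hence $a$ strictly increases while $r_s>1$ and strictly decreases once $r_s<1$. There is at most one index $s$ with $r_s=1$. So the maximum is attained either at a single point or at two adjacent integers $s,s+1$, and the mass decreases monotonically (in fact strictly) moving away from the mode on either side.

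The main obstacle is obtaining \emph{strict} rather than weak log-concavity: weak log-concavity alone would allow a longer plateau of modes. The strict factor $(1+\frac1s)(1+\frac1{m-s})>1$ in Newton's inequality, combined with positivity of all $a_s$ after the reduction to $p_i\in(0,1)$, is what resolves this.
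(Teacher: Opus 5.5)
Your proof is correct, and it takes a different route from the paper only in the sense that the paper gives no proof at all. The paper quotes the statement from Wang (1993), alongside the companion median and mode facts, and remarks that only this weak form is needed.

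Your argument is a self-contained substitute. After removing tests with $p_i\in\{0,1\}$, all $m+1$ masses are positive. Real-rootedness of $\prod_{i=1}^m((1-p_i)+p_i x)$ and Newton's inequalities then give
$a_s^2\ge(1+\tfrac1s)(1+\tfrac1{m-s})\,a_{s-1}a_{s+1}>a_{s-1}a_{s+1}$.
The constant is right: it equals $\binom{m}{s}^2/(\binom{m}{s-1}\binom{m}{s+1})$. Hence the ratios $a_{s+1}/a_s$ are strictly decreasing, and both parts of the statement follow at once, including the trivial cases $m\le 1$.

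What your route buys is a stronger conclusion than the paper uses, namely strict (indeed ultra-) log-concavity. It also gives a clear explanation of why the set of modes has at most two adjacent points: strictness, not mere log-concavity, rules out a longer plateau. The paper only uses weak monotonicity away from the mode, in Lemmas \ref{lem:disjoint-optimal-above} and \ref{lem:large-stg-lb}.

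What the citation buys is brevity and access to Wang's sharper conditions for deciding which candidate is the mode. Your version rests on Newton's inequalities, a classical fact you quote rather than prove, so it is self-contained only modulo that standard result.
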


The original papers contain stronger statements that specify the conditions to determine the median/mode.
However, the weaker statement above suffices for the proofs in this paper.

The following theorems bound the loss when using a normal distribution to approximate a \gls{pbd}.
\begin{theorem}[{\cite{shiganovRefinementUpperBound1986}}] \label{thm:normal-apx-to-pbd}
    Let $\mu$ and $\sigma^2$ denote the mean and variance of \(\pbrv(\vec p)\),
    \begin{equation*}
        \max_{0 \leq k \leq n} \abs{\Pr[\pbrv(\vec p)\leq s] - \Phi\left(\frac{k-\mu}{\sigma}\right)}\leq \frac{0.7915}{\sigma},
    \end{equation*}
    where $\Phi$ is the cumulative distribution function of the standard normal distribution.
\end{theorem}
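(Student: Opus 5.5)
This is a cited result (with a typo: the event should read $\pbrv(\vec p)\le k$, matching the maximum over $k$). I would derive it from the Berry--Esseen theorem for sums of independent, non-identically distributed random variables, using the refined constant of \cite{shiganovRefinementUpperBound1986}. That theorem says the following. If $Y_1,\dots,Y_n$ are independent with $\E[Y_i]=0$, $\E[Y_i^2]=\sigma_i^2$, and $\rho_i=\E\abs{Y_i}^3<\infty$, and $\sigma^2=\sum_i\sigma_i^2$, then
\[
\sup_{x\in\reals}\abs{\Pr\Bigl[\textstyle\sum_i Y_i\le x\sigma\Bigr]-\Phi(x)} \;\le\; C_0\,\frac{\sum_{i=1}^n\rho_i}{\sigma^3}
\]
with $C_0=0.7915$. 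I take this general inequality as the black box. Its proof is the standard smoothing-inequality and characteristic-function argument, and Shiganov's contribution is only the sharpening of the constant.

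First I apply it with $Y_i=X_i-p_i$. Then $\sum_i Y_i=\pbrv(\vec p)-\mu$ and $\sigma^2=\sum_i p_i(1-p_i)$, which matches Definition~\ref{def:pbd}. For each integer $k$, set $x=(k-\mu)/\sigma$. The event $\{\pbrv(\vec p)\le k\}$ is exactly $\{\sum_iY_i\le x\sigma\}$, so the supremum bound gives the claimed inequality simultaneously for all $0\le k\le n$.

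Next I compute the third absolute moment of a centered Bernoulli variable. The variable $Y_i$ equals $1-p_i$ with probability $p_i$ and $-p_i$ with probability $1-p_i$, so
\[
\rho_i=p_i(1-p_i)^3+(1-p_i)p_i^3=p_i(1-p_i)\bigl[(1-p_i)^2+p_i^2\bigr].
\]
The bracket is at most $1$ because $p_i^2+(1-p_i)^2\le (p_i+(1-p_i))^2=1$. Hence $\rho_i\le p_i(1-p_i)=\sigma_i^2$, and therefore $\sum_i\rho_i\le\sigma^2$. Substituting, the Berry--Esseen bound becomes $C_0\sigma^2/\sigma^3=0.7915/\sigma$, as stated.

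The only substantive step is the numerical value of the constant $C_0$ in the non-i.i.d.\ Berry--Esseen inequality. Re-deriving $0.7915$ would require Shiganov's careful optimization of the Esseen smoothing lemma and characteristic-function estimates. I would not reprove it; I would cite it. Any absolute constant gives an $O(1/\sigma)$ bound, which is presumably all the later arguments need, since they work with $\epsilon$-bounded instances where $\sigma^2\ge\nu\epsilon(1-\epsilon)$. The rest of the reduction is the elementary moment computation above.
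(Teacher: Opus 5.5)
Your proposal is correct and matches the paper, which gives no proof and simply cites Shiganov: $0.7915$ is exactly Shiganov's Berry--Esseen constant for independent, non-identically distributed summands, and your centered-Bernoulli computation $\rho_i = p_i(1-p_i)\bigl[p_i^2+(1-p_i)^2\bigr]\le \sigma_i^2$ is the standard specialization that turns $C_0\sum_i\rho_i/\sigma^3$ into $0.7915/\sigma$. You are also right that the event should read $\pbrv(\vec p)\le k$, not $\pbrv(\vec p)\le s$.
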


\begin{theorem} [{\cite[Theorem~1.2]{auld2024explicit}}]\label{thm:local-lim}
    Let $\mu$ and $\sigma^2$ denote the mean and variance of \(\pbrv(\vec p)\),

    \[\abs{\Pr[\pbrv(\vec p)  = k] - \frac{1}{\sigma\sqrt{2\pi}}\exp\left(-\frac{(k - \mu)^2}{2 \sigma^2}\right)} \leq \frac{3.23}{\sigma^2} + \frac{1.35}{\sigma^3
    } + \frac{0.25}{\sigma^4}\]
\end{theorem}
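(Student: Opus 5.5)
The statement is a quantitative local limit theorem, and I would prove it by Fourier inversion, comparing the characteristic function of $S=\pbrv(\vec p)$ with that of a Gaussian with the same mean and variance. Let $\varphi(t)=\E[e^{itS}]=\prod_{j=1}^n\bigl(1-p_j+p_je^{it}\bigr)$. Since $S$ is integer-valued, $\Pr[S=k]=\frac1{2\pi}\int_{-\pi}^{\pi}\varphi(t)e^{-itk}\,dt$. The Gaussian density at $k$ can be written similarly as $\frac1{2\pi}\int_{\reals}e^{i\mu t-\sigma^2t^2/2}e^{-itk}\,dt$. Subtracting the two representations, the left side of the claimed bound is at most
\[
\frac1{2\pi}\int_{-\pi}^{\pi}\abs{\varphi(t)e^{-i\mu t}-e^{-\sigma^2t^2/2}}\,dt+\frac1{2\pi}\int_{\abs{t}>\pi}e^{-\sigma^2t^2/2}\,dt .
\]
The second term is a Gaussian tail of order $e^{-\pi^2\sigma^2/2}/\sigma^2$, which is dominated by the $\sigma^{-2}$ term.

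For the first integral I would split at a threshold $\abs{t}\le T$ versus $T<\abs{t}\le\pi$, with $T$ a small absolute constant.

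\textbf{Far range.} For each factor, $\abs{1-p+pe^{it}}^2=1-2p(1-p)(1-\cos t)\le\exp\bigl(-2p(1-p)(1-\cos t)\bigr)$. Combined with $1-\cos t\ge 2t^2/\pi^2$ on $[-\pi,\pi]$, this gives $\abs{\varphi(t)}\le e^{-2\sigma^2t^2/\pi^2}$. Hence on $T<\abs{t}\le\pi$ both $\abs{\varphi}$ and the Gaussian are at most $e^{-c\sigma^2}$. This contribution is exponentially small in $\sigma^2$, and can be absorbed into the polynomial terms $1.35/\sigma^3+0.25/\sigma^4$, using $x^me^{-cx}\le C_m$.

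\textbf{Near range.} Write $\log\bigl(1-p_j+p_je^{it}\bigr)-ip_jt$ as a cumulant expansion. The quadratic term is $-p_j(1-p_j)t^2/2$. The cubic remainder is bounded by $\kappa\,p_j(1-p_j)\abs{t}^3$ for an explicit $\kappa$; this uses that the third absolute central moment of a Bernoulli is at most its variance. Summing over $j$ gives $\varphi(t)e^{-i\mu t}=e^{-\sigma^2t^2/2}e^{R(t)}$ with $\abs{R(t)}\le\kappa\sigma^2\abs{t}^3$. Then $\abs{e^{R}-1}\le\abs{R}e^{\abs{R}}$, and for $\abs{t}\le T$ with $\kappa T\le1/4$ the factor $e^{\abs{R}}$ is absorbed into $e^{\sigma^2t^2/4}$. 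The integrand is therefore at most $\kappa\sigma^2\abs{t}^3e^{-\sigma^2t^2/4}$. Integrating over $\reals$ gives $O(\sigma^2\cdot\sigma^{-4})=O(1/\sigma^2)$, which is the source of the leading $3.23/\sigma^2$ term. A sharper version would expand one order further: the odd cubic term integrates to zero against the even Gaussian up to the phase $e^{it(\mu-k)}$, and the next-order terms produce the $\sigma^{-3}$ and $\sigma^{-4}$ corrections.

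\textbf{Main obstacle.} The method is standard; the real difficulty is obtaining the explicit constants $3.23$, $1.35$ and $0.25$. Doing so requires optimizing the cutoff $T$, using sharp Taylor remainder bounds for $\log(1-p+pe^{it})$ uniformly in $p\in[0,1]$, and carefully bounding the exponentially small far-range terms by inverse powers of $\sigma$. I would first prove the bound with unspecified absolute constants, which is all that the applications in this paper need qualitatively. I would then track constants explicitly following this decomposition, deferring to \cite{auld2024explicit} for the precise numerical values.
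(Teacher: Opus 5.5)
The paper does not prove this statement. It imports it verbatim as Theorem~1.2 of \cite{auld2024explicit} and uses it once, in Lemma~\ref{lem:mass-ub}. Your Fourier-inversion argument is therefore a genuinely different, self-contained route, and it is sound. The two inversion formulas, the bound $\lvert\varphi(t)\rvert\le e^{-\sigma^2(1-\cos t)}\le e^{-2\sigma^2t^2/\pi^2}$, the factorization $\varphi(t)e^{-i\mu t}=e^{-\sigma^2t^2/2+R(t)}$, and the estimate via $\lvert e^R-1\rvert\le\lvert R\rvert e^{\lvert R\rvert}$ all check out.

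What you misjudge is the difficulty of the constants. You only need to \emph{beat} $3.23$, $1.35$, $0.25$, and your own decomposition already does this with $T=1$.
\begin{itemize}
\item Write $\psi_p(t)=\log(1-p+pe^{it})$ and $q=pe^{it}/(1-p+pe^{it})$. Then $\psi_p'''=-iq(1-q)(1-2q)$, so for $\lvert s\rvert\le 1$ we get $\lvert\psi_p'''(s)\rvert\le p(1-p)\lvert 1-p+pe^{is}\rvert^{-3}\le p(1-p)\bigl(\frac{1+\cos 1}{2}\bigr)^{-3/2}<1.5\,p(1-p)$. Hence $\kappa=\frac14$ works and $\kappa T=\frac14$. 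This is also a cleaner justification than your third-absolute-moment remark, which controls $\E e^{it(X-p)}$ rather than its logarithm.
\item The near range then contributes at most $\frac{1}{2\pi}\int_{\reals}\frac{\sigma^2}{4}\lvert t\rvert^3e^{-\sigma^2t^2/4}\,dt=\frac{2}{\pi\sigma^2}$.
\item The far range plus the Gaussian tail contribute at most $\frac{1}{2\pi}\bigl(\frac{\pi^2}{2\sigma^2}e^{-2\sigma^2/\pi^2}+\frac{2}{\sigma^2}e^{-\sigma^2/2}\bigr)\le\bigl(\frac{\pi}{4}+\frac{1}{\pi}\bigr)\frac{1}{\sigma^2}$. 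Here use $\int_{\lvert t\rvert>1}e^{-at^2}\,dt\le e^{-a}/a$ rather than the crude $x^me^{-cx}\le C_m$.
\end{itemize}
Altogether $\bigl\lvert\Pr[S=k]-\frac{1}{\sigma\sqrt{2\pi}}e^{-(k-\mu)^2/(2\sigma^2)}\bigr\rvert\le 1.75/\sigma^2$ for every $\sigma>0$. This implies the stated bound outright, with no appeal to the reference. At $\sigma\ge 5$ it still gives the $0.24$ used in Lemma~\ref{lem:mass-ub}, so even the threshold $\stg\ge 50/\epsilon$ there survives. With genuinely unspecified constants, that threshold would become $C/\epsilon$, which the hypothesis $\epsilon=\Omega(1/n)$ of Theorem~\ref{thm:gen-cost} absorbs.

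One side remark in your sketch is off. The cubic cumulant term does not integrate to zero against the phase $e^{-it(k-\mu)}$. It produces the Edgeworth correction $\frac{\kappa_3}{6\sigma^4}(x^3-3x)\phi(x)$, where $x=(k-\mu)/\sigma$, $\phi$ is the standard normal density, and $\kappa_3=\sum_j p_j(1-p_j)(1-2p_j)$. This term is of exact order $\sigma^{-2}$ whenever $\lvert\kappa_3\rvert$ is comparable to $\sigma^2$, e.g.\ a binomial with $p\neq\frac12$. So against the plain Gaussian the $\sigma^{-2}$ rate is intrinsic, and expanding further would only sharpen constants, not improve the order.
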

A detailed review of the properties of the \gls{pbd} is found in \cite{tangPoissonBinomialDistribution2023}.

\subsubsection{Dynamic Programming Table}\label{sec:intro-dp}
Our analysis crucially relies on visualizing a dynamic programming table, which we now describe.
Fix any non-adaptive solution $\sol$ of testing and assume without loss of generality (by renumbering) that $\sol = \langle1, 2, \dots, n\rangle$.
The  \emph{stages} in the dynamic program correspond to tests: the  $\stg^{th}$ test of $\sol$ is performed in each  stage $\stg\in [n]$. 
We create a dynamic program table $D$ of size $n \times n$.
The entry $D_\stg[j]$ equals the probability of seeing exactly $j$ passes at the end of stage $\stg$, i.e., after having performed tests $\langle1, 2, \dots, \stg\rangle$.

The value of $D_\stg[j]$ can be computed recursively as follows:
For each $\stg \in [n]$ and $j \in [n]$,
\begin{equation*}
    D_\stg[j] =  D_{\stg - 1}[j-1]\cdot p_\stg + D_{\stg-1}[j]\cdot (1-p_\stg),
\end{equation*}
There cannot be more successes than components tested, so $D_\stg[j] = 0$ whenever $j > \stg$.
For the base case, we have $D_\stg[0] = \prod_{i =1}^\stg (1-p_\stg)$ for all $\stg\in [n]$ and $D_1[1] = p_1$.

\paragraph{Non-Completion Probabilities}
From the DP table $D$, we can compute the probability that any particular test  is conducted.
Note that test $\stg+1$ is conducted if and only if  the test outcomes of $\{1,2,\dots ,\stg\}$ are insufficient to evaluate $f$; this corresponds to  the stopping conditions  not being met after stage $\stg$.
Specifically, this means that the number of passes among the first  $\stg$ tests lies in  the interval $N_\stg \coloneq \{(\stg-n+k)^+, \dots, k-1\}$. We  refer to $N_\stg$ as the \emph{non-stopping window} of stage $\stg$. For convenience, we use $N^\ell_\stg := (\stg-n+k)^+$ and $N^h_\stg:= k-1$  to denote the lower and upper bounds in $N_\stg$.
With this definition, we can write
\[\Pr[\text{Test $\stg+1$ conducted}] = \Pr[\pbrv_{\stg}(\vec p)\in N_\stg] = \sum_{j=\stg-n+k}^{k-1} D_\stg[j].\]
Hence, the expected cost $C(\sigma, \vec p)$ from \eqref{eq:cost-fn} can be written  as
\[\sum_{\stg=1}^n c_\stg\cdot \Pr[\pbrv_{\stg-1}(\vec p) \in N_{\stg-1}].\]
See Figure~\ref{fig:dp-visualization} for an illustration of the DP table and the non-stopping window.

\begin{figure}
    \centering
    \tikzmath{\n = 20; \k = 9; \q = 17; \tb = 13;} 

\begin{tikzpicture}

\begin{scope}[yscale=0.4, xscale=0.4]
\fill[fill=green!20] (0, \k-1) rectangle (\n-1, \tb); 

\fill[fill=red!10] ({\n-\k}, 0) -- (\n-1, {\k-1}) -- (\n-1, 0);

\draw (\n-1, 0.3) -- (\n-1, -0.3) node[below]{$n-1$};

\draw[->,thick] (0,0)--(\n,0) node[right]{$\#$ tests};
\draw[->,thick] (0,0)--(0, \tb) node[above]{$\#$ passes};

\draw[thick] (0, \k-1) node[left]{$k-1$} -- (\n-1, \k-1) ;

\draw[thick] ({\n-\k}, 0) -- (\n-1, {\k-1}) node[right]{$k-1$};


\draw[{Bracket[width=0.5em]}-{Bracket[width=0.5em]}] (\q,-\n + \k+ \q) -- node[left]{$N_\nu$} (\q, \k-1);
\draw [decorate,decoration={brace}](\q,1.8) -- node[left]{$E_\nu$} (\q, 4.1);
\pgfmathsetseed{120}

\begin{scope}
    \clip(0,0) rectangle (\n-1,\tb);
\draw [rounded corners] (0,0)
  \foreach \i in {1,...,30} {
    -- (\i, 0.1*\i+0.1*rand)
};
\draw [rounded corners] (0,0)
  \foreach \i in {1,...,30} {
    -- (\i, 0.25*\i+0.1*rand)
};
\end{scope}

\end{scope}
\end{tikzpicture}
    \caption{Visualization of the DP table, and the windows $N_\stg$ and $E_\stg$.
    In the green region, testing stops and we conclude that $f(X) = 1$; in the red region, testing stops and $f(X) = 0$.}
    \label{fig:dp-visualization}
\end{figure}
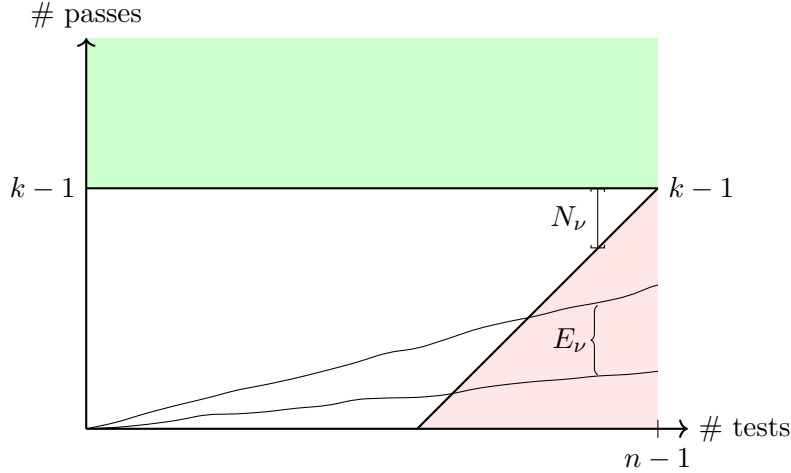
\paragraph{Boundaries for expected number of passes.} We can also view each probability-vector $\vec{p} \in \unset$ as  inducing a path  that traces the expected number of passes at each stage $\stg\in [n]$. 
Formally, path $\Pi(\vec p) \in \mathbb{R}^n$ consists of the values  $\Pi_{\stg}(\vec p) := \sum_{i=1}^\stg p_i$ for $\stg\in [n]$. Our algorithms make use of the ``boundaries'' of such feasible paths. For each stage $\stg$, the lower boundary is given by $E^\ell_{\stg} = \sum_{i=1}^\stg \ell_i$ and the upper boundary is $E_{\stg}^r = \sum_{i=1}^\stg r_i$. We define the {\em expected value window} in stage $\stg$ as $E_{\stg} := [E^\ell_{\stg}, E^h_{\stg}]$. See Figure~\ref{fig:dp-visualization}.

\paragraph{Assumption on $k$.} For the DR \kofn problem we may assume, without loss of generality, that the threshold $k\le \frac{n}2$.   
\begin{proposition} \label{prop:n2-equiv}
    Any distributionally robust $k$-of-$n$ instance with $k > n/2$ has an equivalent \((n-k+1)\)-of-$n$ instance.
\end{proposition}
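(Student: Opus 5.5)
The plan is to swap the roles of passes and failures. Given an instance with $n$ tests, intervals $[\ell_i,r_i]$, costs $c_i$ and threshold $k>n/2$, we build a new instance on the same tests with the same costs. In the new instance the outcome is $Y_i = 1 - X_i$, so test $i$ "passes" exactly when it failed originally. Its success probability is $q_i = 1-p_i$, and the uncertainty interval becomes $[1-r_i, 1-\ell_i]$. The new threshold is $k' = n-k+1$. Since $k>n/2$, we get $k' \le n/2 + 1/2$, so $k'\le \lceil n/2\rceil$. If the reduction is used in the form "$k\le n/2$", one may need to treat the boundary case $k=(n+1)/2$ (odd $n$) separately; there $k'=k$ and nothing changes.

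First step: the stopping conditions are symmetric under this swap. In the original instance we stop on the prefix $S$ when $\sum_{i\in S}X_i\ge k$ or when at least $n-k+1$ tests in $S$ have failed. In terms of $Y$, the second condition reads $\sum_{i\in S}Y_i \ge n-k+1 = k'$. The first condition says at least $k = n-k'+1$ tests have $Y_i=0$, i.e. at least $n-k'+1$ failures in the new instance. So for every outcome vector and every permutation $\sol$, the stopping time is the same in both instances. Equivalently, the non-stopping window $N_\stg$ for $(X,k)$ maps to the window for $(Y,k')$ under $j\mapsto \stg-j$: the condition $\stg-n+k\le j\le k-1$ becomes $\stg-k+1\le \stg-j\le n-k$, which is exactly $\stg-n+k'\le \stg-j\le k'-1$.

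Second step: deduce equality of objectives. From the above, $C(\sol,\vec p)=C'(\sol,\vec 1-\vec p)$ for every permutation, and $\vec p\mapsto \vec 1-\vec p$ is a bijection from $\unset$ onto the new uncertainty set $\unset'$. So $\adv(\sol)=\adv'(\sol)$ for every $\sol$, and therefore the two $\mathtt{DRST}$ problems have identical optimal solutions and values. $\epsilon$-boundedness is also preserved, since $[1-r_i,1-\ell_i]\subseteq[\epsilon,1-\epsilon]$.

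The only delicate point is the index bookkeeping in the window identity, which is a routine check. I expect no real obstacle.
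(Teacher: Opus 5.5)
Your proof is correct and is essentially the paper's argument: complement each outcome ($Y_i=1-X_i$), map $[\ell_i,r_i]$ to $[1-r_i,1-\ell_i]$, and use threshold $n-k+1$. Your explicit check of the stopping conditions and of $C(\sigma,\vec p)=C'(\sigma,\vec 1-\vec p)$ spells out what the paper asserts via the equivalence of $f$ and $\hat f$, and your remark that odd $n$ with $k=(n+1)/2$ gives $k'=k$ (so the reduction only guarantees $k'\le\lceil n/2\rceil$) is a fair point that the paper's later ``assume $k\le n/2$'' glosses over.
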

\begin{proof}
Observe that evaluating $f(X_1,\dots, X_n)=  \indicator{\sum_{i=1}^n X_i \geq k}$ for \(0\)-\(1\) valued $X_i$s is equivalent to evaluating
$\hat{f}(\hat{X}_1,\dots, \hat{X}_n)=  \indicator{\sum_{i=1}^n \hat{X}_i \geq n-k+1}$ where $\hat{X}_i=1-X_i$. Hence, any \kofn instance with probabilities $\{p_i\}_{i=1}^n$ is equivalent to an \((n-k+1)\)-of-$n$ instance with probabilities $\{1-p_i\}_{i=1}^n$.
It now follows that  for any DR \kofn instance  with threshold $k > \frac{n}{2}$ and uncertainty intervals $[\ell_i,r_i]$, we have an equivalent 
    instance  with threshold $n-k+1$ and uncertainty intervals  $[1-r_i , 1- \ell_i]$. 
\end{proof}
Therefore,  we will assume that $k\le n/2$ throughout the paper.

\section{Unit Cost \texorpdfstring{\kofn}{k-of-n}} \label{sec:unit-cost-kofn}
In this section, we show that there is a constant factor approximation for the distributionally robust sequential testing problem when the costs are unit, proving Theorem~\ref{thm:unit-cost}.
The main difficulty in the robust   
\kofn problem lies in the fact that there are two stopping conditions:
\begin{enumerate}[label=(S.\arabic*)]
    \item \label{drst-stop1} at least $k$  tests passed, or 
    \item \label{drst-stop2} at least $n - k + 1$  tests failed. 
\end{enumerate}
These stopping conditions are conflicting: an algorithm that prioritizes \ref{drst-stop1} tends to delay \ref{drst-stop2} and vice versa.
 We will show that under the unit-cost assumption, 
 we can ignore stopping condition \ref{drst-stop2} and still achieve a $2$-approximation algorithm for the \kofn adversary problem.
To this end, we define a new problem \probb with the following stopping conditions:
\begin{enumerate}[label=(S.\Roman*)]
    \item \label{mod-stop1}  at least $k$ tests  passed, or
    \item \label{mod-stop2} at least $ \frac{n}{2}$ tests performed.
\end{enumerate}
For any non-adaptive solution (permutation) $\sol$ and probability distribution $\vec p\in \unset$, the expected testing cost in this new problem  \probb is as follows:
\begin{equation} \label{eq:advbar-cost}
    \bar C(\sol, \vec p) = \sum_{\stg=1}^{n/2} \Pr\left[\pbrv(p_{\sol_1}, p_{\sol_2},\dots, p_{\sol_{\stg-1}}) < k\right].
\end{equation}
Furthermore, let $\advb(\sol) = \max_{\vec p\in \unset}\bar{C}(\sol, \vec p)$ denote the maximum expected cost of solution $\sol$, taken over the uncertainty set $\unset$. Note that this is similar to Definition~\ref{def:adv-defn} (the stopping condition is different). 

The modified problem is easier to analyze, since as we show next, its optimal adversary always  sets $\vec p = \vec \ell$.
The rest of the proof relies on  relating the cost of the new problem \probb  to that of the original problem.

\begin{lemma} \label{lem:robb-bound}
    For every solution $\sol$, the objective for $\probb$ can be bounded as follows:
    \[ \frac12 \adv(\sol) \leq \advb(\sol) \leq \adv(\sol). \]
\end{lemma}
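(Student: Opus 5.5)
Both inequalities come from comparing, for each fixed $\vec p\in\unset$, the cost $C(\sol,\vec p)$ with $\bar C(\sol,\vec p)$ term by term. Then we take the maximum over $\vec p$. Under unit costs, $C(\sol,\vec p)=\sum_{\stg=1}^n \Pr[\pbrv_{\stg-1}(\vec p)\in N_{\stg-1}]$. Each term is the probability that test $\stg$ is performed, and these probabilities are nonincreasing in $\stg$. The same holds for the terms of $\bar C$ in \eqref{eq:advbar-cost}. We use $k\le n/2$ (Proposition~\ref{prop:n2-equiv}) throughout.

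\textbf{Upper bound $\advb(\sol)\le \adv(\sol)$.} Fix $\vec p$. For $\stg\le n/2$ we claim the stopping condition \ref{drst-stop2} cannot yet have occurred before stage $\stg$. Indeed, after $\stg-1<n/2$ tests, the number of failures is at most $\stg-1<n-k+1$, because $k\le n/2$ gives $n-k+1> n/2$. So $N^\ell_{\stg-1}=0$, and the event $\{\pbrv_{\stg-1}\in N_{\stg-1}\}$ coincides with $\{\pbrv_{\stg-1}<k\}$. Hence each term of $\bar C$ equals the corresponding term of $C$. Since $\bar C$ sums only the first $n/2$ stages and all terms are nonnegative, $\bar C(\sol,\vec p)\le C(\sol,\vec p)$. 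Maximizing over $\vec p$ gives the claim.

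\textbf{Lower bound $\frac12\adv(\sol)\le\advb(\sol)$.} Fix $\vec p$ (for instance the maximizer for $\adv$). Split $C(\sol,\vec p)$ into the first $n/2$ stages and the remaining stages. The first part equals $\bar C(\sol,\vec p)$ by the argument above. For the second part, each of the $n/2$ later terms is at most the probability $q:=\Pr[\pbrv_{\lceil n/2\rceil-1}(\vec p)<k]$ that testing survives to stage $n/2$. This is because the non-completion probability is monotone nonincreasing in the stage and $\{\pbrv_{\stg-1}\in N_{\stg-1}\}\subseteq\{\pbrv_{n/2-1}<k\}$ for $\stg\ge n/2$ (pass counts only grow). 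So the second part is at most $(n/2)\,q$. On the other hand, every term of $\bar C$ is at least $q$, since $\Pr[\pbrv_{\stg-1}<k]$ is nonincreasing in $\stg$. Hence $\bar C(\sol,\vec p)\ge (n/2)q$. Therefore $C(\sol,\vec p)\le 2\bar C(\sol,\vec p)\le 2\advb(\sol)$, and maximizing over $\vec p$ finishes the proof.

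\textbf{Main obstacle.} The only delicate point is the bookkeeping when $n$ is odd, i.e.\ interpreting ``$n/2$ tests performed'' as $\lceil n/2\rceil$ or $\lfloor n/2\rfloor$ stages. The later part must contain no more terms than the first part, so I would take the summation limit in \eqref{eq:advbar-cost} as $\lceil n/2\rceil$. I would then check that the claim $N^\ell_{\stg-1}=0$ still holds for $\stg\le\lceil n/2\rceil$. It does: $\stg-1\le \lceil n/2\rceil-1\le n-k$ whenever $k\le n/2$.
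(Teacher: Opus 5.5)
Your proof is correct and is essentially the paper's argument. The first $\lceil n/2\rceil$ stages coincide because the failure-stopping condition cannot fire before then. The at most $\lfloor n/2\rfloor$ remaining unit-cost stages are then paid for by the $\lceil n/2\rceil$ units that \probb already charges whenever testing survives that long. You phrase this stagewise through the survival probability $q$ and monotonicity, whereas the paper argues realization by realization. Your explicit choice of $\lceil n/2\rceil$ for odd $n$ is the right reading of the truncated stopping rule and is needed for the exact factor $2$: with $\lfloor n/2\rfloor$ one only gets $2+O(1/n)$. The paper glosses over this point.
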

\begin{proof}
The two problems $\adv$ and $\advb$ only differ in their second stopping condition. 

    With the assumption that $ k \leq \frac n2$, stopping condition \ref{drst-stop2} can occur only after we have performed at least $\frac n2$ tests.
    However, in \probb, we stop as soon as  we have performed  $\frac n2$ tests.
    In other words, stopping condition \ref{mod-stop2} occurs no later than \ref{drst-stop2} under any realization. So, $\advb(\sol) \leq \adv(\sol)$.

    To prove the lower bound on $\advb(\sol)$, fix any outcome $x \in \{0, 1\}^n$ of the tests.
    If there are $k$ passes from the first $\frac n2$ tests according to permutation $\sol$, then $\adv(\sol) = \advb(\sol)$.
    On the other hand, if the outcome $x$ terminates after $\frac n2$ tests in the original \gls{adv} problem, it would have a cost between $\frac{n}{2}$ and $n$ (since all costs are unit).
    This will have cost exactly $\frac n2$ in \probb.
    Thus, the lower bound follows.
\end{proof}

\begin{lemma} \label{lem:unit-cost-adv}
    For the modified problem \(\probb\), regardless of the solution $\sol$, it is optimal to set
    $\vec p = \vec \ell$
\end{lemma}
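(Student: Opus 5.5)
We will show that $\bar C(\sol,\vec p)$ is coordinatewise non-increasing in $\vec p$. Then, since $\vec \ell \le \vec p$ componentwise for every $\vec p\in\unset$, the choice $\vec p=\vec \ell$ maximizes $\bar C(\sol,\cdot)$ over $\unset$, whatever $\sol$ is. By \eqref{eq:advbar-cost}, $\bar C$ is a sum over stages $\stg\le n/2$ of terms $\Pr[\pbrv(p_{\sol_1},\dots,p_{\sol_{\stg-1}})<k]$. So it suffices to show that each term is non-increasing in each coordinate $p_{\sol_j}$; a sum of non-increasing functions is non-increasing.

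For a single term, fix a stage $\stg$ and an index $j\le \stg-1$. We condition on the outcomes of all tests other than $\sol_j$ and write $T=\sum_{i\le \stg-1,\, i\neq j}X_{\sol_i}$, which does not depend on $p_{\sol_j}$. Then
\[
\Pr[T+X_{\sol_j}<k] = \Pr[T<k-1] + (1-p_{\sol_j})\Pr[T=k-1].
\]
This is affine in $p_{\sol_j}$ with slope $-\Pr[T=k-1]\le 0$, so each term is non-increasing in $p_{\sol_j}$.

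The same conclusion follows from a coupling argument: realize $X_i=\indicator{U_i\le p_i}$ with independent uniform $U_i$. Raising any $p_i$ can only turn failures into passes, which can only make the event $\{\text{fewer than }k\text{ passes}\}$ smaller, pointwise.

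The argument is essentially routine; the only subtlety is that it relies on the modified problem having a single, monotone stopping condition. In the original $\adv$ the second stopping condition makes the non-stopping window two-sided, and the monotonicity fails, which is exactly why the lemma is stated for $\probb$. We should also note that the number of terms in $\bar C$ ($n/2$ stages, unit costs) is independent of $\vec p$, so monotonicity of each term suffices.
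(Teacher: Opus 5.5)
Your proof is correct, and it takes a different route from the paper's.

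The paper argues backward through the cost-to-go table $T_\stg[j]$, the expected remaining cost given $j$ passes before stage $\stg$. It first proves by backward induction that $T_\stg[j]$ is non-increasing in $j$. It then shows that lowering a single $p_{\hat\stg}$ to $\ell_{\hat\stg}$ raises $T_{\hat\stg}[j]$ entrywise, by $(p_{\hat\stg}-\ell_{\hat\stg})\bigl(T_{\hat\stg+1}[j]-T_{\hat\stg+1}[j+1]\bigr)\ge 0$. Finally it propagates this increase back to $T_1[0]$, using that the recurrence has non-negative coefficients.

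You instead work forward with the closed form \eqref{eq:advbar-cost}. Each summand is a lower-tail probability $\Pr[\pbrv(p_{\sol_1},\dots,p_{\sol_{\stg-1}})\le k-1]$ of a prefix PBD. Your conditioning identity $\Pr[T<k-1]+(1-p_{\sol_j})\Pr[T=k-1]$, or equivalently the monotone coupling, shows that each summand is non-increasing in every coordinate. This is first-order stochastic dominance.

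Your argument is shorter and needs no induction. It also isolates exactly why the lemma holds: the stopping region of $\probb$ is one-sided, so every stage term is a monotone tail event. It carries over unchanged to arbitrary non-negative stage costs.

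The paper's DP argument proves a stronger structural fact, namely that the value function is monotone in the current number of passes. That is the natural tool if one wants to reason about the recurrence itself, but this lemma does not need it.
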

\begin{proof}
    Fix any solution $\sol$ and assume without loss of generality that $\sol = \langle 1, 2, \dots, n\rangle.$
    Moreover, fix any vector $\vec p \in \unset$ with some index $j$ where $p_j > \ell_j$.
    We want to show that lowering $p_j$ will only increase the expected cost $\bar{C}$.
    The proof analyzes a DP table, denoted by $T$, for computing $\bar C(\sol, \vec p)$.
    The value $T_{\stg}[j]$ gives the expected cost of testing the suffix $\langle \nu, \nu+1, \dots, n\rangle$ given that $j\in \{0, 1,\dots, n\}$ passed tests were observed  in the prefix $\langle 1, 2, \dots, \stg - 1\rangle$.
    For every stage $\stg \in [\frac n2]$ and number of passes $ j \in \{0, 1, \dots, n-1\}$ the recurrence is given by 
    \begin{equation} \label{eq:cbar-recurs}
        T_{\stg}[j] = 
        \left\{ \begin{array}{ll}
            0 & \mbox{ if $\nu \geq \frac{n}{2}$ or $j \geq k$} \\
                    1 + T_{\stg+1}[j](1-p_{\nu}) + T_{\nu + 1}[j+1] p_\stg & \mbox{ otherwise} 
        \end{array}
        \right. .        
     \end{equation}
The first case above ($\nu \geq \frac{n}{2}$ or $j \geq k$) corresponds to the stopping conditions.
    The objective value $\bar{C}(\asol, \vec p)$ is stored in $T_{1}[0]$.
    First, we claim that 
    \begin{quote}
For each $\nu\in [n]$, $T_\nu[j]$ is non-increasing in $j$.         
    \end{quote}
    
    We prove this claim by induction on the stage $\nu=n,n-1,\dots, 1$.
    For the base case, where $\nu \geq \frac n 2$, all entries are $0$, so it is non-increasing in $j$.
    For the inductive step, assume that $T_{\stg+1}[j]$ is non-increasing in $j$ for some stage $\stg< n/2$. Clearly, $T_\stg[j]$ is non-increasing for $j\ge k$ (all these values are zero). 
    For any $j <k$,  
    \begin{align*}
        T_{\stg}[j] &= 1 + T_{\stg+1}[j]\cdot (1-p_{\stg}) + T_{\stg + 1}[j+1] \cdot p_\stg  \\
        &\ge1 + T_{\stg+1}[j+1] \cdot (1-p_{\stg}) + T_{\stg + 1}[j+2] \cdot p_\stg  \,\,\ge\,\, T_{\stg}[j+1],
    \end{align*}
    where the first inequality is by induction. The last inequality is by the recurrence \eqref{eq:cbar-recurs}.
    
    Fix any stage $\hat\stg$ where $p_{\hat \stg} > \ell_{\hat \stg}$. 
    Consider $\vec p'$, obtained by lowering $p_{\hat \stg}$ to $\ell_{\hat \stg}$. We will show that $\bar{C}(\sol, \vec p)\le \bar{C}(\sol, \vec p')$. If $\hat\stg\ge n/2$ then   $\bar{C}(\sol, \vec p) = \bar{C}(\sol, \vec p')$; so we assume  $\hat\stg < n/2$ below.     
    Let $U$ be the new DP table for computing the expected cost $C(\sol, \vec p')$; we will show that $U_\stg[j] \ge T_\stg[j]$ for all $\stg$ and $j$. Note that $U_\stg[j] = T_\stg[j]$ for all $\stg > \hat\stg$ as all probabilities in stages  after $\hat \stg$ are equal. Also, $U_\stg[j] = T_\stg[j]=0$ for $j\ge k$. For any $j<k$, we have 
\begin{align*}
{U}_{\hat{\stg}}[j] & = 1 + T_{\hat \stg + 1}[j] \cdot (1-\ell_{\hat \stg}) + T_{\hat\stg + 1}[j+1] \cdot \ell_{\hat\stg} = 1+  T_{\hat\stg + 1}[j] +  \ell_{\hat\stg} \cdot ( T_{\hat\stg + 1}[j+1] - T_{\hat\stg + 1}[j])  \\
& \geq  1+  T_{\hat\stg + 1}[j] +  p_{\hat\stg} \cdot ( T_{\hat\stg + 1}[j+1] - T_{\hat\stg + 1}[j])
= 
1 + T_{\hat \stg + 1}[j] \cdot (1-p_{\hat\stg}) + T_{\hat \stg + 1}[j+1] \cdot p_{\hat\stg}  = T_{\hat\stg}[j].
     \end{align*}
     The first equality uses $U_{\hat \stg + 1}=T_{\hat \stg + 1}$, and the inequality   uses  $T_{\hat\stg + 1}[j+1] \le T_{\hat\stg + 1}[j]$ and $p_{\hat \stg} > \ell_{\hat \stg}$. So, we have $U_{\hat \stg} [j]  \ge T_{\hat \stg} [j]$ for all  $j$. For any stage $\stg< \hat \stg$, the recurrence for both $T_\stg$ and $U_\stg$ are the same~\eqref{eq:cbar-recurs}. Using the fact that all coefficients in this recurrence are non-negative, it now follows that $U_{\hat \stg} [j]  \ge T_{\hat \stg} [j]$ for all $\stg$ and  $j$.
    Hence, $\bar C(\sol, \vec p') = U_1[0] \geq T_1[0] = \bar C(\sol, \vec p)$. 

    Finally, we repeat the above argument  for each stage $\hat\stg\in [n]$ to obtain $\bar C(\sol, \vec 
    \ell)\ge \bar C(\sol, \vec p)$ for  any $\vec p\in \unset$.   
\end{proof}

Combining Lemmas~\ref{lem:robb-bound} and~\ref{lem:unit-cost-adv}, we obtain a $2$-approximation algorithm for the unit-cost adversary problem. We now use this to obtain an approximation algorithm for  the DR \kofn problem.

\begin{lemma} \label{lem:unit-cost-policy}
    The solution $\asol$ that minimizes $\advb$ sorts by decreasing $\ell_i$.
\end{lemma}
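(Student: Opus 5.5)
By Lemma~\ref{lem:unit-cost-adv}, for every permutation $\sol$ we have $\advb(\sol)=\bar C(\sol,\vec\ell)$. Minimizing $\advb$ therefore reduces to a problem with known probabilities: choose a permutation minimizing $\bar C(\sol,\vec\ell)=\sum_{\stg=1}^{n/2}\Pr[\pbrv(\ell_{\sol_1},\dots,\ell_{\sol_{\stg-1}})<k]$. I will show that sorting by decreasing $\ell_i$ minimizes every term of this sum at once, and the plan has two steps.

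\textbf{Step 1: only the set of the first $\stg-1$ tests matters.} For a fixed stage $\stg$, the term $\Pr[\pbrv(\ell_{\sol_1},\dots,\ell_{\sol_{\stg-1}})<k]$ depends only on the set $S=\{\sol_1,\dots,\sol_{\stg-1}\}$, not on the order within it. I want to show that for $|S|=\stg-1$ fixed, this probability is minimized when $S$ consists of the $\stg-1$ tests with largest $\ell_i$. The tool is a coupling (stochastic dominance) argument. If $S$ contains some $i$ and omits some $j$ with $\ell_j\ge \ell_i$, swap $i$ for $j$. Couple $X_j\ge X_i$ using a common uniform variable, keeping the other Bernoullis unchanged. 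The sum then only increases pointwise, so $\Pr[\text{sum}<k]$ does not increase. Repeated swaps reach the top-$(\stg-1)$ set without increasing the probability.

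\textbf{Step 2: one permutation is optimal for all stages.} Let $\asol$ sort tests by decreasing $\ell_i$. For every $\stg$, the prefix $\{\asol_1,\dots,\asol_{\stg-1}\}$ is exactly a top-$(\stg-1)$ set. By Step~1, each summand of $\bar C(\asol,\vec\ell)$ is at most the corresponding summand of $\bar C(\sol,\vec\ell)$ for any $\sol$. Summing over stages gives
\[
\advb(\asol)=\bar C(\asol,\vec\ell)\le \bar C(\sol,\vec\ell)=\advb(\sol).
\]
Ties among equal $\ell_i$ do not matter, since swapping tests with equal $\ell$ leaves the distribution unchanged.

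The main obstacle, though a mild one, is making the monotonicity in Step~1 rigorous. An alternative to the coupling is to condition on the other variables: the probability $\Pr[X_i+R<k]$ is $\Pr[R<k]-p_i\Pr[R=k-1]$, which is non-increasing in $p_i$. Either argument suffices, and the rest is bookkeeping.
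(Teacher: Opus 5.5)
Your proof is correct. It uses the same two ingredients as the paper's. The first is the reduction $\overline{\mathrm{ADV}}(\sigma)=\bar C(\sigma,\vec\ell)$ from Lemma~\ref{lem:unit-cost-adv}. The second is the monotonicity fact: replacing one Bernoulli in a sum by one with a larger parameter cannot increase $\Pr[\text{sum}<k]$. Your conditioning identity $\Pr[X_i+R<k]=\Pr[R<k]-p_i\Pr[R=k-1]$ is a clean way to state this.

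The difference is in how the comparison is organized. The paper uses an adjacent-transposition argument. If $t$ immediately precedes $s$ with $\ell_t\le\ell_s$, swapping them changes only the summand at stage $|I|+2$, and that summand does not increase. Optimality of the sorted order then follows from bubble-sorting an optimal permutation into sorted order, a step the paper leaves implicit.

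You instead compare the sorted permutation against an arbitrary $\sigma$ stage by stage. You observe that each summand depends only on the prefix set, and that every prefix of the sorted order is a top-$(\nu-1)$ set by $\ell$, which minimizes that summand among all sets of that size.

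What your route buys:
\begin{itemize}
\item It makes the final comparison explicit instead of relying on an iterated-swap argument.
\item It gives a slightly stronger conclusion: the sorted order minimizes every stage's non-completion probability simultaneously. It would therefore stay optimal for any nonnegative position-dependent weights on the stages.
\end{itemize}
The paper's route is more local: it only needs the one-variable comparison at a single stage.
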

\begin{proof}
    We can prove this claim by a swapping argument.
    Let $\asol$ be the solution that sorts by decreasing $\ell_i$, and assume that $\asol = \langle 1, 2, \dots, n\rangle.$
    This implies that $\ell_1 \geq \ell_2 \geq\dots\geq \ell_n$.
    Suppose that we have a solution $\asol'$ where some test $t$ is conducted immediately before test $s$, but $s<t$, i.e., $\asol' =\langle I, t, s, O\rangle$, where $I$ is the prefix of tests conducted before $t$, and $O$ is the suffix of tests conducted after $s$.
    We want to show that the solution $\asol = \langle I, s, t, O\rangle$ that swaps $s$ and $t$ is no worse than $\asol'$.

    Using Lemma~\ref{lem:unit-cost-adv}, we know that $\vec p = \vec \ell$ maximizes $\bar{C}(\sol,\vec p)$ for all solutions $\sol$. 
    Let $X_i\sim \bern(\ell_i)$ for all $i\in [n]$. Then, using \eqref{eq:advbar-cost}, the worst-case expected cost
    \begin{equation} \label{eq:adv-cost-expanded}
        \advb(\sol) = \bar C(\sol, \vec \ell) = \sum_{\stg=1}^{n/2} \Pr\left[\sum_{i=1}^{\stg-1} X_{\sol_i}< k\right], \quad\text{for all $\sol$}.
    \end{equation}
    We use this expanded definition \eqref{eq:adv-cost-expanded} to compare the difference between $\bar C(\asol, \vec \ell)$ and $\bar C(\asol', \vec \ell)$.
    In particular, we compare them 
    term by term. For $\stg \le \abs I + 1$, the probability terms in \eqref{eq:adv-cost-expanded} are the same, since the $\stg^{th}$ prefix of $\asol$ and $\asol'$ are equal.
    Moreover, for $\stg \geq \abs I + 3$, the terms within the summation $\sum_{i=1}^{\stg-1}X_{\sol_i}$ are the same despite swapping $s$ and $t$.
    So, the only difference between $\bar C(\asol, \vec \ell)$ and $\bar C(\asol, \vec \ell)$ is the $(I + 2)$'th term, where the probability term is $\Pr[\sum_{i\in I}  X_{i} + X_s < k]$ for $\asol$ but $\Pr[\sum_{i\in I}  X_{i} + X_t<k]$ for $\asol'$.
     Thus,
    \begin{align*}
        \advb(\asol') - \advb(\asol) &= \bar C(\asol', \vec\ell) - \bar C(\asol, \vec \ell) \\
        &= \Pr\left[\sum_{i\in I} X_{i} + X_t < k \right] -\Pr\left[\sum_{i\in  I} X_{i} + X_s < k \right]\,\, \geq\,\, 0,
    \end{align*}
    where the inequality uses $\ell_t \le \ell_s$.
\end{proof}

Combining Lemmas~\ref{lem:robb-bound} and~\ref{lem:unit-cost-policy} we obtain a $2$-approximation algorithm for the DR \kofn problem, completing the proof of Theorem~\ref{thm:unit-cost}. We note that the algorithm  discussed above assumes that $k\le n/2$, which by Proposition~\ref{prop:n2-equiv} is without loss of generality. If we combine the ``reduction'' in  Proposition~\ref{prop:n2-equiv} with the above approach, the algorithm  for the case $k>n/2$ just sorts by increasing $r_i$.

\section{General Cost \texorpdfstring{\kofn}{k-of-n}} \label{sec:general-cost-kofn}

We now consider the setting with general costs and prove Theorem~\ref{thm:gen-cost}.  This is much more complex than the unit-cost case because we can no longer  eliminate one stopping condition. In fact, our approximation ratio for the general problem depends on a ``boundedness''  parameter of the instance: see Definition~\ref{def:eps-bounded}. Recall that in an $\epsilon$-bounded instance, all the uncertainty intervals are contained in $[\epsilon,1-\epsilon]$. Our main result is an $O(1/\sqrt{\epsilon})$-approximation algorithm for both the adversary and DR \kofn problems.    

A natural approach for the general cost problem is to extend the greedy algorithm from the unit-cost case. In Appendix~\ref{app:extension-to-gen-cost} we show that the approximation ratio of such greedy algorithms is at least $\Omega(1/\epsilon)$. 

Instead, our approach for the general cost setting is as follows. 
First, we (approximately) maximize the stage-wise non-completion probabilities. That is, for each stage $\stg \in [n]$, we find a vector $\vec p \in \unset$   that maximizes $\Pr[\pbrv_{\stg}(\vec p) \in N_\stg]$. We obtain an $O(1/\sqrt{\epsilon})$-approximation algorithm  that involves  two cases, depending on whether/not the expected-value window $E_\stg$ overlaps with the non-stopping $N_\stg$.   The analysis here relies on various properties of PBDs (such as the unimodal property and normal approximations) as well as an anti-concentration property for $\epsilon$-bounded instances. Next, we show that these non-completion probabilities across stages can  be (approximately) maximized simultaneously (i.e., the same probability-vector $\vec p$ works for all stages), leading to the approximation algorithm for the adversary problem. The main idea here is to look at the relative positions of the  expected-value and non-stopping windows in the last stage.  Finally, we use this (approximate) characterization of the adversary to obtain the 
approximation algorithm for the DR problem.

\subsection{Maximizing  Non-Completion Probabilities at a Single Stage} \label{sec:single-stg-apx}

In this section, we assume without loss of generality that $\sol = \langle 1, 2, \dots, n\rangle.$
Recall that we visualize running $\sol$ on a realization as a path along the DP table (Section~\ref{sec:intro-dp}).
Moreover, when tests $\langle 1, 2, \dots, \nu\rangle$ are conducted, the term $\pbrv_\nu(\vec p)$ is the \gls{rv}\@ representing the number of tests passed (i.e., $\pbrv_\nu (\vec p) = \sum_{i=1}^\stg X_i$).
When the number of passes lies within the non-stopping window $N_\nu = \{(\nu - n+k)^+,\dots,  k-1\}$, testing is not complete, and the $(\nu+1)$th test has to be conducted.

In this section, we maximize the non-completion probabilities for a fixed stage $\stg \in [n]$.
i.e.,\@
\begin{equation} \label{eq:stagewise-noncompletion}
    \max_{\substack{\vec p \in \unset}} \Pr[\pbrv_\stg(\vec p) \in N_\stg].
\end{equation}
In particular, we show the following lemma:
\begin{lemma} \label{lem:stagewise-noncomp}
    At every stage $\stg \in [n]$, there is a $\bigOh{1/\sqrt\epsilon}$-approximation for maximizing the non-completion probability $\Pr[\pbrv_{\stg}(\vec p)  \in N_{\stg}]$.
\end{lemma}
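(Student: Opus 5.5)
Fix a stage $\stg$ and write $N=N_\stg=\{a,\dots,b\}$ with $a=N^\ell_\stg$ and $b=N^h_\stg=k-1$. As the paper indicates, I would split into two cases according to whether the expected-value window $E_\stg=[E^\ell_\stg,E^r_\stg]$ meets $N$. The feasible prefix means $\mu=\sum_{i\le\stg}p_i$ sweep out exactly $E_\stg$ continuously. In an $\epsilon$-bounded instance every feasible $\vec p$ has variance $\sigma^2=\sum_{i\le \stg}p_i(1-p_i)\ge \stg\epsilon/2$. The algorithm outputs one of three candidates, each computable in polynomial time: $\vec\ell$, $\vec r$, or a vector with prefix mean at a chosen point of $E_\stg\cap[a,b]$ (obtained by interpolating coordinates one at a time). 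It keeps whichever gives the largest $\Pr[\pbrv_\stg(\vec p)\in N]$, which the DP table evaluates exactly.

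\emph{Case 1: $E_\stg\cap[a,b]\neq\emptyset$.} Pick $\vec p$ whose mean $\mu$ lies in $[a,b]$. By Theorem~\ref{thm:median} the median is $\floor{\mu}$ or $\ceil{\mu}$. If $\floor{\mu},\ceil{\mu}\in N$, the mass in $N$ is at least a constant whenever $N$ is wide enough. The difficulty comes when $N$ is short, i.e. $|N|\ll\sigma$. In that regime I would compare with the optimum, using Theorem~\ref{thm:mode} and Theorem~\ref{thm:unimodal}. For any $\vec q$ the largest point mass is at most $O(1/\sigma_q)$: via Theorem~\ref{thm:local-lim} this holds when $\sigma_q$ is large, and trivially when $\sigma_q=O(1)$. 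So $\opt\le \min(1,|N|\cdot O(1/\sqrt{\stg\epsilon}))$. Conversely, for our $\vec p$ with mean inside $[a,b]$, the mass near the mode is $\Omega(1/\sigma_p)\ge\Omega(1/\sqrt{\stg})$ per point for points within $O(\sigma_p)$ of the mean; I would get this from Theorem~\ref{thm:local-lim} or Theorem~\ref{thm:normal-apx-to-pbd}, falling back on unimodality for small $\sigma$. Summing over $\min(|N|,\sigma_p)$ points gives $\Omega(\min(1,|N|/\sqrt{\stg}))$. The ratio to $\opt$ is then $O(1/\sqrt\epsilon)$, because $\sigma_p\le\sqrt{\stg}/2$ while $\sigma_q\ge\sqrt{\stg\epsilon/2}$.

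\emph{Case 2: $E_\stg$ lies entirely on one side of $N$,} say $E^\ell_\stg>b$; the case $E^r_\stg<a$ is symmetric. I would claim $\vec\ell$ is near-optimal. The intuition is that moving mass toward $N$ helps, but the comparison is not a pointwise coupling, because lowering $p$ also lowers variance. Write $\vec q$ for the optimizer. Coupling $\vec\ell\le\vec q$ coordinatewise gives stochastic dominance: $\Pr[\pbrv_\stg(\vec\ell)\le b]\ge\Pr[\pbrv_\stg(\vec q)\le b]\ge\opt$. The remaining task is to convert the tail $\le b$ into mass in $[a,b]$. Two facts do this. First, since $\mu_\ell> b$, unimodality with mode $\ge\floor{\mu_\ell}\ge b$ makes the pmf nondecreasing on $\{0,\dots,b\}$. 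Second, $b-a+1=|N|$. Thus the mass on $[a,b]$ is at least a $\min(1,|N|/(b+1))$ fraction of the tail, which is insufficient when $N$ is short. To handle short $N$, I would again use the upper bound $\opt\le|N|\cdot O(1/\sigma_q)$ together with a lower bound: $\Pr[\pbrv(\vec\ell)=b]\ge \Pr[\pbrv(\vec\ell)\le b]/\Theta(\sigma_\ell)$-ish. I would derive that from a normal-tail (Mills-ratio) comparison via Theorem~\ref{thm:normal-apx-to-pbd} and log-concavity of the PBD pmf, which also gives geometric decay of the tail. Combining these should again give ratio $O(\sigma_\ell/\sigma_q)\cdot O(1)=O(1/\sqrt\epsilon)$.

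The main obstacle is Case 2 with a short window, where one must compare a far tail probability against point masses uniformly. My plan is to use log-concavity of PBD pmfs: $\vec\ell$'s pmf ratio $D[b-1]/D[b]$ is bounded by a quantity making the tail sum at most $O(\sigma_\ell+\text{const})$ times $D[b]$. The additive errors of Theorems~\ref{thm:normal-apx-to-pbd}–\ref{thm:local-lim} force separate handling when $\sigma=O(1)$; this is where I expect the condition $\epsilon=\Omega(1/n)$, or a direct small-$\stg$ argument, to be needed.
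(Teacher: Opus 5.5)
Your Case~1 is sound. It is a variant of the paper's overlap argument, with Theorem~\ref{thm:local-lim} point-mass bounds in place of the paper's median, Hoeffding and unimodal averaging. Case~2 is even exact for $\stg<\frac n2$: there $a=0$, so stochastic dominance gives $\Pr[\pbrv_\stg(\vec\ell)\in N]=\Pr[\pbrv_\stg(\vec\ell)\le b]\ge\opt$.

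The genuine gap is Case~2 with $\stg\ge\frac n2$ and a short window. Your only upper bounds on $\opt$ there are $T_\ell:=\Pr[\pbrv_\stg(\vec\ell)\le b]$ and $|N|\cdot O(1/\sigma_q)$. Both can exceed $\opt$ by an unbounded factor, so no lower bound on $\Pr[\pbrv_\stg(\vec\ell)\in N]$, however sharp, can finish the proof. For instance, take $\stg=n-1$ (so $N=\{k-1\}$), every interval equal to $[\frac12,\frac34]$ (so $\epsilon=\frac14$ and an $O(1)$ ratio is required), and $k-1\approx\mu_\ell-t\sigma_\ell$ with $t=\sqrt{\log n}$. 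Moderate-deviation estimates for $\mathrm{Bin}(n-1,\frac12)$ give $\Pr[\pbrv(\vec\ell)=k-1]=\Theta(1/n)$, and this equals $\opt$ by the argument below. Meanwhile $T_\ell=\Theta(1/\sqrt{n\log n})$ and $1/\sigma_q=\Theta(1/\sqrt n)$. Moreover, your target inequality $\Pr[\pbrv(\vec\ell)=b]\gtrsim T_\ell/\sigma_\ell$, combined with $\opt\le T_\ell$, only gives a ratio $O(\sigma_\ell)$, not $O(\sigma_\ell/\sigma_q)$. It also cannot be obtained from Theorem~\ref{thm:normal-apx-to-pbd} once $T_\ell\lesssim 1/\sigma_\ell$, since the additive error $0.7915/\sigma$ then swamps the tail.

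The missing idea is the paper's one-coordinate exchange (Lemmas~\ref{lem:disjoint-optimal-above} and~\ref{lem:disjoint-optimal-below}), which proves $\vec\ell$ exactly optimal instead of sandwiching $\opt$. The probability $\Pr[\pbrv(\vec p)\in N]$ is affine in each $p_j$, with slope $\Pr[\pbrv_{-j}=a-1]-\Pr[\pbrv_{-j}=b]$. If $E^\ell_\stg>b+1$, then $\E[\pbrv_{-j}]\ge E^\ell_\stg-\ell_j>b$ for every $\vec p\in\unset$. So the mode of $\pbrv_{-j}$ is at least $b$ (Theorem~\ref{thm:mode}), and unimodality (Theorem~\ref{thm:unimodal}) makes the slope nonpositive. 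Hence lowering coordinates to $\ell_j$ one at a time never decreases the probability.

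The strip $b<E^\ell_\stg\le b+1$, where dropping one coordinate can pull the mean below $b$, is precisely why the paper widens $N$ to $\tn$ and treats the strip as an overlap case (using Lemma~\ref{lem:mass-ub}). In your framework, the Case~1 point-mass estimates cover the strip too, since $\mu_\ell$ is within distance $1$ of $N$. This requires $\sigma_\ell$ to exceed a suitable constant, which is where $\stg\ge\frac n2$ and $\epsilon=\Omega(\frac1n)$ are used. The case $E^r_\stg<a$ is symmetric with $\vec r$.
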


Although the non-stopping window $N_\stg$ is  a discrete set, we treat it as a continuous interval for our analysis.
Moreover, for the analysis, we widen this interval at stages $\stg \geq \frac n2$, as follows:
\begin{definition}
    The modified non-stopping window, $\tn$, is equal $N$ at stages $\stg < \frac n2$ and extended by one (i.e., $[N^\ell-1, N^h+1]$) when $\stg \geq \frac n2$ :
    \begin{align*}
    \tn= \begin{cases}
        [0, k-1] & \text{if } \stg < \frac n2,\\
        [(\stg-n+k-1)^+, k] & \text{otherwise.}
    \end{cases}
    \end{align*}
\end{definition}

As the results in this subsection focus on a fixed stage $\stg$, we often drop the subscript  $\stg$ to reduce clutter.
Recall that the window $E= [\sum_{i=1}^\stg \ell_i, \sum_{i=1}^\stg r_i]$ gives the range of  expected number of passes among $\stg$ tests, and $N$ gives the non-stopping window.

The results in this section depend on the relative position of $E$ and $\tn$.
When $E\cap \tn \neq \emptyset$, we say that the windows \emph{overlap}; otherwise, we say that they are \emph{disjoint}.
If $E$ and $\tn$ are disjoint, we say that $E$ is \emph{above} $\tn$ when $E^\ell > \tn^h$, and that $E$ is \emph{below} $\tn$ if $E^h < \tn^\ell$.
When the windows overlap, we can pick some $\vec p\in \unset$ such that $\vec p \in \tn$.
We will show that such a setting of $\vec p$ ensures that the probability that $\pbrv(\vec p) \in N$ is significant, and thus guarantees an $O(1/\sqrt\epsilon)$ approximation for \eqref{eq:stagewise-noncompletion}.
On the other hand, when the windows are disjoint, then we would want to pick some $\vec p$ such that the expectation $\sum_{i=1}^\stg p_i$ is as close to the non-stopping window as possible.
This turns out to be optimal for \eqref{eq:stagewise-noncompletion}.

The analysis in this section will differ based on whether $\stg < \frac n2$ or $\stg \geq \frac n2$, even though the results are very similar.
Because of our assumption that $k \leq \frac n2$, the non-stopping window $N_\stg$ cannot be exited from below when $\stg < \frac n2$ since even seeing $\stg = (\frac n2-1)$ failed tests is inconclusive.
Exploiting this property is necessary at small stages because concentration inequalities are weak in the regime where $\stg \ll \frac n2$.
On the other hand, when $\stg \geq \frac n2$, we might exit the non-stopping window from below or above (i.e., testing stops because we see too many fails or too many passes). Thus, we need to use concentration inequalities to bound the tail probabilities.

\begin{figure}
    \centering
    \tikzmath{\n = 20; \k = 9; \q = 15; \tb = 13;} 
    \tikzmath{\n = 20; \k = 9; \q = 15; \tb = 13;} 
\begin{tikzpicture}
\begin{scope}[yscale=0.4, xscale=0.4]

\fill[fill=green!20] (0, \k-1) rectangle (\n/2, \tb); 
\fill[fill=green!20] (0, \k) rectangle (\n, \tb); 

\fill[fill=red!10] ({\n-\k}, 0) -- (\n, {\k-1}) -- (\n, 0);

\draw[->,thick] (0,0)--(\n+0.5,0) node[right]{$\#$ tests};
\draw[->,thick] (0,0)--(0, \tb+0.5) node[above]{$\#$ passes};
\draw (\n, 0.3) -- (\n, -0.3) node[below]{$n$};

\draw[dotted] (\n/2, 0) -- (\n/2, \tb);

\draw (0, \k-1) node[left]{$k-1$} -- (\n/2, \k-1);
\draw (\n/2, \k) -- (\n, \k) node[right]{$k$};

\draw ({\n-\k}, 0) -- (\n, {\k-1}) node[right]{$k-1$};

\node at (\n/2, 11) {\textbf{Lemma~\ref{lem:disjoint-optimal-above}}};
\node at (\n/2, 5) {\textbf{Corollary~\ref{cor:overlap-prob-bound}}};
\node at (\n/2+6.3, 1.5) {\textbf{Lemma~\ref{lem:disjoint-optimal-below}}};

\end{scope}
\end{tikzpicture}
    \caption{Summary of Cases and their corresponding results.}
    \label{fig:cases-signpost}
\end{figure}
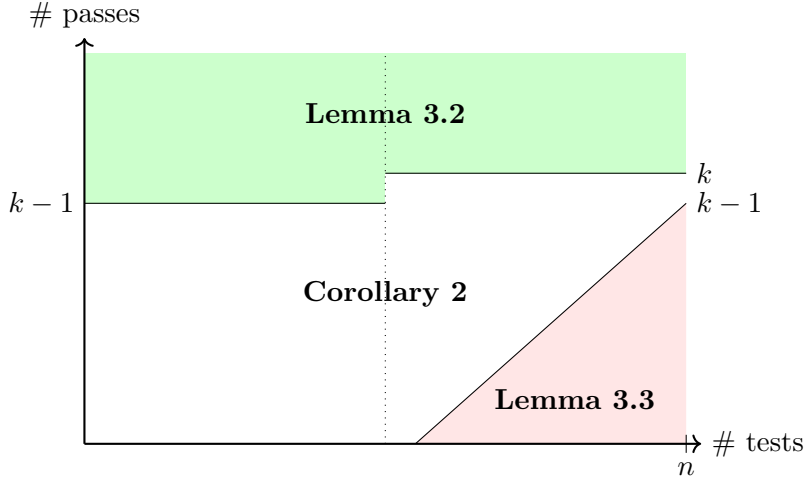

\paragraph{Disjoint Windows}
When the windows are disjoint, we have an optimal setting of $p_i$.
The case where $\stg < \frac n2$ is easier, since we can use the property that $\tn^\ell = 0$.
On the other hand, when $\stg \geq \frac n2$, we no longer have the guarantee that $N^\ell = 0$.
In fact, at stage $n-1$, we have $N^\ell = N^h$.
Thus, we need a stronger assumption that $E^\ell \geq N^h + 1$.

\begin{lemma} \label{lem:disjoint-optimal-above}
    If $E$ is above $\tn$, it is optimal to set \(\vec p = \vec \ell\).
\end{lemma}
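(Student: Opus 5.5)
The plan is to lower the coordinates of an arbitrary $\vec p\in\unset$ to their lower endpoints one at a time, and to show that no single step decreases $\Pr[\pbrv_\stg(\vec p)\in N_\stg]$. Since this probability is multilinear in $\vec p$, it suffices to show that its partial derivative with respect to each $p_i$ ($i\le \stg$) is non-positive whenever the other coordinates lie in their intervals. Coordinates $i>\stg$ do not affect the stage-$\stg$ probability, so they can be set to $\ell_i$ at no cost.

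\textbf{The derivative.} Fix $i\le\stg$. Let $S_{-i}=\sum_{j\le \stg, j\ne i}X_j$, and write $N^\ell=N^\ell_\stg$ and $N^h=k-1$. Conditioning on $X_i$,
\[
\Pr[\pbrv_\stg(\vec p)\in N]=\Pr[S_{-i}\in N]+p_i\bigl(\Pr[S_{-i}=N^\ell-1]-\Pr[S_{-i}=k-1]\bigr),
\]
since moving from $X_i=0$ to $X_i=1$ shifts the window down by one. The bracket does not depend on $p_i$, so it is exactly the derivative, and it suffices to show
\[
\Pr[S_{-i}=k-1]\ \ge\ \Pr[S_{-i}=N^\ell-1].
\]

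\textbf{Case $\stg<\frac n2$.} Here $N^\ell=0$, so the right-hand side is $\Pr[S_{-i}=-1]=0$ and the inequality is trivial. Equivalently, $\Pr[S\le k-1]$ is decreasing in each $p_i$ by stochastic dominance.

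\textbf{Case $\stg\ge\frac n2$.} This is where the hypothesis is used, and it is the only real step. Here $\tn^h=k$, so ``$E$ above $\tn$'' means $E^\ell>k$. For any admissible values of the other coordinates,
\[
\E[S_{-i}]=\sum_{j\le\stg, j\neq i}p_j\ \ge\ E^\ell-\ell_i\ >\ k-1 .
\]
By Theorem~\ref{thm:mode}, every mode of $S_{-i}$ is $\floor{\mu_{-i}}$ or $\ceil{\mu_{-i}}$, where $\mu_{-i}=\E[S_{-i}]$; hence every mode is $\ge k-1$. By Theorem~\ref{thm:unimodal}, the pmf of $S_{-i}$ is non-decreasing on integers up to the mode. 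Since $N^\ell-1<k-1\le$ mode, we get $\Pr[S_{-i}=N^\ell-1]\le\Pr[S_{-i}=k-1]$, as required. (If $N^\ell=0$, the left side is again $0$.)

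\textbf{Conclusion.} In both cases the derivative is $\le 0$ for every admissible setting of the other coordinates. Therefore replacing $p_i$ by $\ell_i$, one coordinate at a time, never decreases the non-completion probability, and $\vec p=\vec\ell$ is optimal. The subtle point is that the mean bound for $S_{-i}$ must hold uniformly over the remaining coordinates. It does, because it only uses $p_j\ge\ell_j$ and $\ell_i<1$. This uniformity is precisely why the widened window $\tn$, with upper end $k$ rather than $k-1$, is needed at stages $\stg\ge\frac n2$.
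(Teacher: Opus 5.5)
Your proof is correct and follows essentially the same route as the paper. Both condition on the single coordinate being lowered, which gives the factor $\Pr[\pbrv_{-j}=N^h]-\Pr[\pbrv_{-j}=N^\ell-1]$. Both then split into the trivial case $N^\ell=0$ and, for stages $\stg\ge\frac n2$, the mode/unimodality argument, using that the mean of $\pbrv_{-j}$ exceeds $k-1$ because $E^\ell>k$. Your write-up is a bit more explicit than the paper's about why this mean bound holds for every admissible setting of the other coordinates during the coordinate-by-coordinate lowering.
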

\begin{proof}
    Fix any window $E$ that $E^\ell > \tn^h$.
    Suppose that there exists some index $j$ such that $p_j > \ell_j$.
    We want to show that we can lower $p_j$ to $\ell_j$ to get $\vec p'$ such that 
    \(\Pr[\pbrv(\vec p')\in N] \geq \Pr[\pbrv(\vec p) \in N].\)
    We define $\pbrv_{-j} = \sum_{i\in [n]: i \neq j} X_i$, so $\pbrv_{-j}$ follows a \gls{pbd} of order $\stg-1$ (i.e., $\pbrv_{-j} \sim \pobin(p_1, \dots, p_{j-1},p_{j+1}, \dots, p_\stg)$).
    We can thus write the \gls{rv}\@ $\pbrv'$, obtained by lowering $p_j$ to $\ell_j$, as $\pbrv'=\pbrv_{-j} + X'$ where $X'\sim\bern(\ell_j)$.
    We  decompose the event $\pbrv\in \{N^\ell, \dots, N^h\}$ as

     \begin{equation} \label{eq:event-decomp}
         ((\pbrv_{-j}=N^\ell - 1) \land (X_j  =1)) \lor (\pbrv_{-j} \in \{N^\ell, \dots, N^h-1\}) \lor ((\pbrv_{-j} = N^h) \land X_j = 0).
     \end{equation}
    A similar decomposition holds for the event $\pbrv' \in \{N^\ell, \dots, N^h\}$, where we replace all mentions of $X_j$ in \eqref{eq:event-decomp} with $X_j'$.
    As a consequence, we have
    \begin{align}
        &\Pr[\pbrv' \in \{N^\ell,\dots,  N^h\}] - \Pr[\pbrv\in \{N^\ell, \dots, N^h\}] \notag\\
        &\qquad =\Pr[\pbrv_{-j} = N^\ell - 1]\Pr[X_j' =1] + \Pr[\pbrv_{-j} \in \{N^\ell,\dots, N^h-1\}] + \Pr[\pbrv_{-j} = N^h]\Pr[X_j' = 0] \notag\\
            &\qquad\quad -\left(\Pr[\pbrv_{-j} = N^\ell - 1]\Pr[X_j =1] + \Pr[\pbrv_{-j} \in \{N^\ell,\dots, N^h-1\}] + \Pr[\pbrv_{-j} = N^h]\Pr[X_j = 0]\right) \notag\\
        &\qquad = \Pr[\pbrv_{-j} = N^\ell - 1]\ell_j + \Pr[\pbrv_{-j} = N^h](1-\ell_j) - \Pr[\pbrv_{-j} = N^\ell - 1]p_j - \Pr[\pbrv_{-j} = N^h](1-p_j) \notag\\
        &\qquad= (\Pr[\pbrv_{-j} = N^h] -\Pr[\pbrv_{-j} = N^\ell - 1])(p_j - \ell_j) \notag\\
        &\qquad \geq 0 \label{eq:disjoint-geq},
    \end{align}

    We now prove \eqref{eq:disjoint-geq} by cases:
    \begin{itemize}
        \item Case $N^\ell = 0$: Under this case, $\Pr[\pbrv_{-j} = N^\ell - 1] = 0$ since the support of a \gls{pbd} is non-negative.
        In this case, the LHS is clearly non-negative.
        Moreover, by the definition of $N$, all stages $\stg < \frac n2$ fall within this case.
        \item Case $N^\ell > 0$:
        This case can occur only at stages $\stg \ge \frac n2$. By the definition of $\tn$, the condition that $E$ is above $\tn$ is equivalent to $E^\ell > N^h+1$.This implies $\pbrv_{-j}$ has mean greater than $N^h$, which by Theorem~\ref{thm:mode} means that the mode of $\pbrv_{-j}$ is at least $N^h$.
        Since the probability mass of a \gls{pbd} decreases monotonically from the mode (Theorem~\ref{thm:unimodal}), we have \(\Pr[\pbrv_{-j} = N^h] \geq \Pr[\pbrv_{-j} = N^\ell - 1]\).
    \end{itemize}
\end{proof}
Symmetrically, we have:
\begin{lemma} \label{lem:disjoint-optimal-below}
    If \(E\) is below \(\tn\), it is optimal to set \(\vec p = \vec r\).
\end{lemma}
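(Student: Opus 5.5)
We mirror the proof of Lemma~\ref{lem:disjoint-optimal-above}, raising probabilities instead of lowering them. Fix a stage $\stg$ with $E^h < \tn^\ell$, and any $\vec p \in \unset$ with some index $j \le \stg$ such that $p_j < r_j$. Let $\vec p'$ be obtained by raising $p_j$ to $r_j$. Write $\pbrv = \pbrv_{-j} + X_j$ and $\pbrv' = \pbrv_{-j} + X_j'$ with $X_j' \sim \bern(r_j)$. Use the same event decomposition \eqref{eq:event-decomp} for $\pbrv \in \{N^\ell,\dots,N^h\}$. The middle term cancels, and the same algebra gives
\[
\Pr[\pbrv' \in N] - \Pr[\pbrv \in N] = \left(\Pr[\pbrv_{-j} = N^\ell - 1] - \Pr[\pbrv_{-j} = N^h]\right)(r_j - p_j).
\]
Since $r_j - p_j > 0$, it suffices to show $\Pr[\pbrv_{-j} = N^\ell - 1] \ge \Pr[\pbrv_{-j} = N^h]$. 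Repeating this for every such $j$ then moves $\vec p$ to $\vec r$ without decreasing the non-completion probability, because the hypothesis depends only on $E$ and is unchanged along the way.

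We prove the needed inequality by cases on the stage.

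\emph{Case $\stg < \frac n2$.} Here $\tn^\ell = 0$, so $E^h < \tn^\ell$ would force $E^h < 0$. This is impossible, so the case is vacuous.

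\emph{Case $\stg \ge \frac n2$.} Here $\tn^\ell = (\stg - n + k - 1)^+$, and the hypothesis $E^h < \tn^\ell$ forces $\tn^\ell > 0$. Hence $\tn^\ell = N^\ell - 1$, and the hypothesis reads $E^h < N^\ell - 1$.
\begin{itemize}
\item The mean of $\pbrv_{-j}$ is at most $\sum_{i \le \stg} p_i \le E^h < N^\ell - 1$. By Theorem~\ref{thm:mode}, the mode of $\pbrv_{-j}$ is at most $\lceil E^h \rceil \le N^\ell - 1$.
\item We also have $N^\ell - 1 < N^h$, since $N^\ell \le N^h$.
\item So both points $N^\ell - 1$ and $N^h$ lie at or above the mode, with $N^\ell - 1$ the closer of the two.
\item By unimodality (Theorem~\ref{thm:unimodal}), the mass decreases monotonically above the mode, so $\Pr[\pbrv_{-j} = N^\ell - 1] \ge \Pr[\pbrv_{-j} = N^h]$.
\end{itemize}

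The only delicate step is the mode bound. Theorem~\ref{thm:mode} only places the mode in $\{\lfloor\mu\rfloor, \lceil\mu\rceil\}$, so we need $\lceil \mu \rceil \le N^\ell - 1$. This follows from $\mu < N^\ell - 1$ with $N^\ell - 1$ an integer. The widening in $\tn$ is exactly what provides this slack, just as in the ``above'' case.

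Alternatively, the whole lemma follows from Lemma~\ref{lem:disjoint-optimal-above} via the complement reduction of Proposition~\ref{prop:n2-equiv}: replace each $p_i$ by $1-p_i$ and count failures. However, that reduction changes the threshold to $n-k+1 > \frac n2$, so the direct argument above is cleaner.
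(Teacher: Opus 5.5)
Your proposal is correct and follows the paper's proof essentially verbatim. You raise $p_j$ to $r_j$, reuse the decomposition \eqref{eq:event-decomp} to obtain the factor $(\Pr[\pbrv_{-j} = N^\ell - 1] - \Pr[\pbrv_{-j} = N^h])(r_j - p_j)$, and sign it using the mode bound (Theorem~\ref{thm:mode}) and unimodality (Theorem~\ref{thm:unimodal}); your explicit observations that the hypothesis forces $\tn^\ell > 0$ (hence $\tn^\ell = N^\ell - 1$) and that $\lceil \mu \rceil \le N^\ell - 1$ by integrality fill in details the paper leaves implicit.
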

\begin{proof}
    We follow the same structure as Lemma~\ref{lem:drst-disjoint} to show that we would want to raise $p_j$ to $r_j$.
    The same decomposition in \eqref{eq:event-decomp} holds.
    This gives us:
    \begin{align*}
        \Pr[\pbrv' \in \{N^\ell, \dots, N^h\}] - \Pr[\pbrv \in \{N^\ell, \dots, N^h\}] &= (\Pr[\pbrv_{-j} \in N^\ell - 1] - \Pr[\pbrv_{-j} = N^h])(r_j - p_j)\\
        &\geq 0.
    \end{align*}

    The difference in this proof is in showing the last inequality.
    Here, note that $E^h < N^\ell - 1$, so the mode of \(\pbrv_{-j}\) is at most $N^\ell - 1$.
    Then Theorem~\ref{thm:unimodal} tells us that the probability mass decreases monotonically away from the mode, so $\Pr[\pbrv_{-j} \in N^\ell - 1] \ge \Pr[\pbrv_{-j} = N^h]$. 
\end{proof}

\begin{corollary}
    When $E$ and $\tn$ are disjoint.
    There is an optimal setting of \(\vec p \in \unset\) that maximizes the non-stopping probability.
    Namely, if $E$ is above $\tn$, then we want to set \(\vec p = \vec \ell\), otherwise, if $E$ is below $\tn$, then we want to set \(\vec p = \vec r\).
\end{corollary}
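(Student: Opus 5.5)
This corollary follows by combining Lemma~\ref{lem:disjoint-optimal-above} and Lemma~\ref{lem:disjoint-optimal-below}. The plan is a case split on the relative position of the two windows.

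First, I would observe that disjointness gives a clean dichotomy. Both $E$ and $\tn$ are closed intervals of the real line. If they are disjoint, then either $E^\ell > \tn^h$ ($E$ is above $\tn$) or $E^h < \tn^\ell$ ($E$ is below $\tn$). Both cannot hold, since $E^\ell \le E^h$ and $\tn^\ell \le \tn^h$. So every disjoint configuration falls into exactly one of the two lemmas' hypotheses.

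In the first case, Lemma~\ref{lem:disjoint-optimal-above} applies. Its proof shows that lowering any single coordinate $p_j$ to $\ell_j$ does not decrease $\Pr[\pbrv(\vec p)\in N]$. Applying this one coordinate at a time turns any $\vec p\in\unset$ into $\vec\ell$ without decreasing the objective.

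The point to check is that the hypothesis stays valid throughout this iteration. The hypothesis concerns $E$, which depends only on the intervals and not on the current $\vec p$. The mode argument inside the proof uses $\E[\pbrv_{-j}] \ge E^\ell - 1 > N^h$ (for stages $\stg \ge \frac n2$). Every $\vec p\in\unset$ has mean at least $E^\ell$, so this bound holds for each intermediate vector. Hence $\vec\ell$ is optimal.

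In the second case, Lemma~\ref{lem:disjoint-optimal-below} applies in the same way: raising coordinates one at a time turns any $\vec p\in\unset$ into $\vec r$ without decreasing the objective, so $\vec r$ is optimal.

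The only potential obstacle is this uniformity over intermediate vectors, and it is handled by the fact that the window $E$ is fixed by the intervals. Combining the two cases proves the corollary.
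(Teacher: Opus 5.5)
Your proposal is correct and takes the same approach as the paper, which gives no separate argument for this corollary and simply combines Lemma~\ref{lem:disjoint-optimal-above} and Lemma~\ref{lem:disjoint-optimal-below}. Your extra check, that the one-coordinate exchange stays valid at every intermediate vector because the hypothesis depends only on the fixed window $E$ and each intermediate vector lies in $\unset$, makes explicit the iteration that the lemmas' proofs leave implicit.
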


\paragraph{Overlapping Windows}
When the windows overlap, we can set $\vec p \in \unset$ such that $\sum_{i=1}^\stg p_i \in \tn$.
We lower bound the non-completion probability by any such setting of $\vec p$. We then compare the lower bound with an upper bound on the non-completion probability achievable by any valid setting of $\vec p \in \unset$.
Let $w_\stg = \abs{N} = N^h - N^\ell + 1=\min(n-\stg,k-1)$ be the cardinality of the non-stopping window $N$ when $\stg$ components are tested.
Since $k$ and $n$ are specified by the problem instance, $w$ is a function of $\stg$ only.
We first state the lower and upper bounds on $\Pr[\pbrv(\vec p) \in N]$.
The rest of this section is used to prove these bounds.

\begin{definition} \label{def:alpha-beta-bounds}
    We define the lower bound $\alpha_\stg$ and upper bound $\beta_\stg$ as 
    \[\alpha_{\stg} = \begin{cases}
        \frac 14 & \text{if } 3\sqrt\stg \leq \ceil{\frac w2}\\
        \frac{w/4}{6\sqrt \stg}&\text{otherwise}\\
    \end{cases} \qquad \text{and} \qquad 
    \beta_{\stg} =     \min\left\{\frac{2\sqrt2 \, w}{\sqrt{\epsilon \stg}}, 1\right\},\]
where $w=\min(n-\stg, k-1)$. 
\end{definition}

\begin{lemma} \label{lem:alpha-lb}
    Any probability vector $\vec p\in\unset$ such that $\sum_{i=1}^\stg p_i\in \tn$ satisfies the lower bound
    \[ \Pr[\pbrv(\vec p) \in N] \geq \alpha_\stg.\]
\end{lemma}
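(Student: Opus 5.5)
We prove Lemma~\ref{lem:alpha-lb} by using unimodality together with the fact that the median of a \gls{pbd} lies within one of its mean. The mean $\mu=\sum_{i=1}^\stg p_i$ lies in $\tn$. The window $N$ has $w$ integer points, and by construction the widened window $\tn$ differs from $N$ by at most one unit on each side. We split into two regimes according to the definition of $\alpha_\stg$: a \emph{wide} regime $3\sqrt\stg\le\ceil{w/2}$, and a \emph{narrow} regime otherwise. Throughout, the variance satisfies $\sigma^2=\sum p_i(1-p_i)\le \stg/4$, so $\sigma\le \sqrt\stg/2$. Only this upper bound on $\sigma$ is needed; the $\epsilon$-boundedness is not used here, and it will enter only through $\beta_\stg$.

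\textbf{Case $\stg<\frac n2$.} Here $N=\{0,\dots,k-1\}$ and $\mu\le k-1$. By Theorem~\ref{thm:median}, the median is $\floor\mu$ or $\ceil\mu\le k-1$. Hence $\Pr[\pbrv\le k-1]\ge \frac12\ge\frac14$, and $N$ cannot be exited from below. This already gives $\alpha_\stg$ in both regimes, since $\frac{w/4}{6\sqrt\stg}\le\frac14$ whenever the narrow regime fails to give $\frac14$ directly. (In the narrow regime $\alpha_\stg<\frac14$, so the bound $\frac12$ dominates.)

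\textbf{Case $\stg\ge\frac n2$, wide regime.} We apply Chebyshev's inequality. The mean $\mu$ lies in $\tn$, but it may sit near an edge, so a symmetric Chebyshev interval around $\mu$ need not fit inside $N$. We combine the median with one-sided bounds instead. Suppose $\mu$ is in the lower half of $\tn$; the other case is symmetric. The median $m\in\{\floor\mu,\ceil\mu\}$ gives $\Pr[\pbrv\ge m]\ge\frac12$. The upper endpoint $N^h$ is at distance at least about $\ceil{w/2}-1\ge 3\sqrt\stg-1\ge 6\sigma-1$ above $m$, so Chebyshev (or Cantelli) gives $\Pr[\pbrv> N^h]\le \frac1{25}$. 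If $m\ge N^\ell$, then $\Pr[\pbrv\in N]\ge \frac12-\frac1{25}\ge\frac14$. The boundary issue is when $\mu\in[N^\ell-1,N^\ell)$, so that $m=N^\ell-1$. Here we use unimodality (Theorem~\ref{thm:unimodal}) with the mode at $\floor\mu$ or $\ceil\mu$: the single atom at $N^\ell-1$ is at most comparable to its neighbour $N^\ell$. This lets us shift a constant fraction of the median mass into $N$. Constants such as $\frac14$ leave slack for this step.

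\textbf{Case $\stg\ge\frac n2$, narrow regime.} Now $w<6\sqrt\stg+2$, and we need $\Pr[\pbrv\in N]\ge w/(24\sqrt\stg)$. The plan is to show that a contiguous block of integers of length $w$ inside $N$ is adjacent to the mode, and that each point near the mode carries mass on the order of $1/\sqrt\stg$. We establish the latter by an averaging argument. By Chebyshev, $\Pr[|\pbrv-\mu|\le 2\sigma]\ge\frac34$, and this interval contains at most $4\sigma+1\le 2\sqrt\stg+1$ integers. Therefore the mode's atom has mass at least $\frac{3/4}{2\sqrt\stg+1}$. By unimodality, the masses decrease away from the mode, so the atoms of $N$ closest to $\mu$ are comparable to the atoms in the Chebyshev band. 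Concretely, we partition the band $[\mu-3\sqrt\stg,\mu+3\sqrt\stg]$ into blocks of $w$ consecutive integers. The block of $N$ that contains (or sits next to) the mode carries at least the average mass, and Chebyshev shows that the band carries at least about $\frac34$ of the total mass. This yields roughly $w\cdot\frac{1}{6\sqrt\stg}\cdot\frac14$.

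\textbf{Main obstacle.} The delicate point is that $\mu$ may lie in the widened part $\tn\setminus N$, at distance up to one from $N$, so the mode can sit just outside $N$. We handle this with Theorem~\ref{thm:mode} and unimodality. If the mode is at $N^\ell-1$ or $N^h+1$, the adjacent $w$ points inside $N$ still dominate every other length-$w$ block on the far side of the mode. This is what the averaging needs, at the cost of a factor of about $2$, which the constant $24$ absorbs.
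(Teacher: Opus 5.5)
\textbf{Gap: the boundary case $\mu\in\tn\setminus N$.} Your case $\stg<\frac n2$ matches the paper: the median is at most $\ceil\mu\le k-1$ and $N^\ell=0$. Your wide-regime argument is also fine when $\mu\in N$. The gap is the case $\stg\ge\frac n2$ with $\mu\in[N^\ell-1,N^\ell)$, or symmetrically $\mu\in(N^h,N^h+1]$.

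\begin{itemize}
\item Here the mode can be $\floor\mu=N^\ell-1\notin N$. Unimodality only says the masses decrease away from the mode. It gives no upper bound on the atom at the mode relative to its neighbour.
\item So your claim that the atom at $N^\ell-1$ is ``at most comparable'' to the atom at $N^\ell$ is unsupported.
\item Your remark that $\epsilon$-boundedness is not used is in fact false. Take $n=10$, $k=5$, $\stg=8$, with $\ell_i=r_i$, two tests of probability $1$ and the rest $0$. Then $N=\{3,4\}$, $\tn=[2,5]$ and $\mu=2\in\tn$. But $\Pr[\pbrv\in N]=0<\alpha_8=\frac{1}{12\sqrt8}$.
\item The same example breaks your narrow-regime argument. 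All the mass of your Chebyshev band can sit on a mode outside $N$, so ``shifting a constant fraction of the median mass into $N$'' is impossible without more information.
\end{itemize}

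\textbf{The missing ingredient: an anti-concentration bound on the largest atom.} The paper supplies this in Lemma~\ref{lem:mass-ub}. By the local limit theorem (Theorem~\ref{thm:local-lim}), $\max_m\Pr[\pbrv=m]\le 0.24$ once $\sigma\ge 5$. This follows from $\sigma\ge\sqrt{\epsilon\stg/2}$ (Lemma~\ref{lem:var-bound}) and $\stg\ge\frac n2>\frac{50}{\epsilon}$. This is exactly where $\epsilon$-boundedness and the assumption $\epsilon=\Omega(1/n)$ enter the lower bound, not only through $\beta_\stg$.

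Combine it with the one-sided bound $\Pr[\floor\mu\le\pbrv\le\mu+3\sqrt\stg]\ge 0.49$ (median plus Hoeffding, Lemma~\ref{lem:rhs-conc}). This gives $\Pr[N^\ell\le\pbrv\le\mu+3\sqrt\stg]\ge 0.25$, and averaging then yields $\alpha_\stg$.

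\textbf{Secondary fix for the narrow regime.} The averaging should be done on such a one-sided band, not on a two-sided Chebyshev band partitioned into blocks. In a fixed partition, the block containing the mode need not carry the average block mass: if the mode sits at the block's edge, the next block can be heavier. What does hold, by monotonicity, is that the first $\ceil{w/2}$ points starting at the mode on its decreasing side dominate every other group of that size on that side. You also need the median to guarantee that this particular side carries constant mass.
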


First, consider the case where $\stg < \frac n2$.
\begin{lemma} \label{lem:small-stage-lb}
    Fix any stage $\stg < \frac n2$.
    Any setting of $\vec p \in \unset$ with $\mu = \sum_{i=1}^\stg p_i \in \tn$ will result in
    \[\Pr[\pbrv(\vec p) \in N] \geq \frac 12 \geq \alpha_\stg.\]
\end{lemma}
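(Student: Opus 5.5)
For $\stg < \frac n2$ we have $N^\ell_\stg = (\stg - n + k)^+ = 0$, since $\stg - n + k < k - \frac n2 \le 0$ under the standing assumption $k \le \frac n2$. Hence $N = \{0,1,\dots,k-1\}$, $\tn = [0,k-1]$, and the event $\pbrv(\vec p) \in N$ is simply $\pbrv(\vec p) \le k-1$. The plan is to lower bound $\Pr[\pbrv(\vec p) \le k-1]$ using the median result, Theorem~\ref{thm:median}, and then observe that $\alpha_\stg \le \frac14 < \frac12$ always.

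Let $\mu = \sum_{i=1}^\stg p_i \in [0,k-1]$. By Theorem~\ref{thm:median}, the median $m$ of $\pbrv(\vec p)$ is either $\floor{\mu}$ or $\ceil{\mu}$, and by definition of the median, $\Pr[\pbrv(\vec p) \le m] \ge \frac12$.

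\begin{itemize}
\item If $m = \floor{\mu}$, then $m \le \mu \le k-1$, so $\Pr[\pbrv(\vec p) \le k-1] \ge \Pr[\pbrv(\vec p) \le m] \ge \frac12$.
\item If $m = \ceil{\mu}$, then since $k-1$ is an integer and $\mu \le k-1$, we have $\ceil{\mu} \le k-1$, and the same monotonicity argument applies.
\end{itemize}

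In both cases $\Pr[\pbrv(\vec p) \in N] \ge \frac12$.

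Finally, Definition~\ref{def:alpha-beta-bounds} gives either $\alpha_\stg = \frac14$, or $\alpha_\stg = \frac{w}{24\sqrt\stg}$ in the case $3\sqrt\stg > \ceil{w/2}$. In the latter case, $w \le 2\ceil{w/2} < 6\sqrt\stg$, so $\alpha_\stg < \frac14$. Thus $\alpha_\stg \le \frac14 \le \frac12$.

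The only delicate point is the integrality step, that $\mu \le k-1$ forces $\ceil{\mu} \le k-1$. This is immediate because $k-1$ is an integer, so I expect no real obstacle. One caveat concerns the degenerate case $k = 0$, or more generally an empty window. Here the window is empty and the hypothesis $\mu \in \tn$ cannot hold, so the statement is vacuous.
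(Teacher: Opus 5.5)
Your proof is correct and follows essentially the same route as the paper. Both arguments use Theorem~\ref{thm:median} to find a median $m \le \ceil{\mu} \le k-1$, with the integrality of $k-1$ as the key step, and conclude $\Pr[\pbrv(\vec p) \le k-1] \ge \frac12$. You also check explicitly that $N^\ell_\stg = 0$ for $\stg < \frac n2$ and that $\alpha_\stg \le \frac14$, both of which the paper leaves implicit.
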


\begin{proof}
    Fix any setting of ${\vec p} \in \unset$ with $\mu =  \sum_{i=1}^\stg  p_i \in N$ (i.e., $\mu \leq k-1$).
    Since the median is at most $\ceil{\mu}$ by Theorem~\ref{thm:median}, we have $\Pr[\pbrv({\vec p}) \leq \ceil{\mu}] \geq 0.5$.
    Note that the support of a \gls{pbd} is non-negative so $\Pr[0 \leq \pbrv(\vec p) \leq \ceil\mu] \ge 0.5 $.
    Moreover, since $k-1$ is an integer, the fact that $ \mu\leq k-1$ implies that $\ceil \mu \leq k-1$. Hence, the interval $[0, \ceil\mu] \subseteq N$ and 
    \(\Pr[{\pbrv}({\vec p}) \in N] \geq \Pr[0 \leq \pbrv({\vec p}) \leq \ceil \mu] \ge 0.5.\)
\end{proof}

Before proving the lower bound for the case where $\stg \geq n/2$, we first establish some bounds on the standard deviation.
\begin{lemma} \label{lem:var-bound}
    For any stage \(\stg \in[n]\) and for any \(\vec p \in \unset\), we have
    \[ \sqrt{\frac{\epsilon\nu}2} \leq \sigma_{\pbrv(\vec p)} \leq \frac{\sqrt\nu}{2}.\]
\end{lemma}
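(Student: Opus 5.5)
The plan is to work directly from the variance formula in Definition~\ref{def:pbd}. For $\pbrv(\vec p)=\pbrv_\stg(\vec p)$ we have
\[
\sigma^2_{\pbrv(\vec p)}=\sum_{i=1}^\stg p_i(1-p_i).
\]
Both bounds then follow from term-by-term estimates of the function $g(x)=x(1-x)$ on the relevant range of each $p_i$, summed over $i\in[\stg]$ and followed by a square root.

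\textbf{Upper bound.} The function $g$ is a concave parabola on $[0,1]$, maximized at $x=\frac12$ with value $\frac14$. Hence $p_i(1-p_i)\le \frac14$ for every $i$, regardless of the uncertainty intervals. Summing gives $\sigma^2\le \frac{\stg}{4}$, and taking square roots gives $\sigma\le \frac{\sqrt\stg}{2}$.

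\textbf{Lower bound.} Here I use $\epsilon$-boundedness (Definition~\ref{def:eps-bounded}), which is the standing assumption in this section. Every $\vec p\in\unset$ satisfies $p_i\in[\ell_i,r_i]\subseteq[\epsilon,1-\epsilon]$. Since $g$ is concave and symmetric about $\frac12$, its minimum over $[\epsilon,1-\epsilon]$ is attained at the endpoints, so
\[
p_i(1-p_i)\ge \epsilon(1-\epsilon)\ge \frac{\epsilon}{2},
\]
where the last step uses $\epsilon<\frac12$, so that $1-\epsilon>\frac12$. Summing over the $\stg$ tests gives $\sigma^2\ge \frac{\epsilon\stg}{2}$, and hence $\sigma\ge\sqrt{\epsilon\stg/2}$.

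No step here is a real obstacle. The only point to check is that the lower bound depends on the $\epsilon$-bounded assumption rather than holding for general instances, and that the constant $\frac12$ comes from $1-\epsilon\ge\frac12$.
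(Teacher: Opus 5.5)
Your proposal is correct and follows the paper's own proof. Both bound each term $p_i(1-p_i)$ of the variance: from below by its value at the endpoints $\epsilon$ and $1-\epsilon$, giving $\epsilon(1-\epsilon)\ge \epsilon/2$, and from above by $\frac14$ at $p_i=\frac12$, then sum over the $\stg$ tests and take square roots.
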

\begin{proof}
    The variance $\sigma^2_{\pbrv(\vec p)} = \sum_{i=1}^\nu p_i (1-p_i)$.
    The variance is minimized when $p_i \in \{\epsilon, 1-\epsilon\}$ for all $i$, so $\sigma_{\pbrv(\vec p)} \geq \sqrt{\nu\epsilon (1-\epsilon)} \geq \sqrt{\frac {\epsilon\stg} 2}$ for $\epsilon \in [0, \frac 12]$.
    On the other hand, the variance is maximized when $p_i = \frac 12$ for all $i$, so $\sigma_{\pbrv(\vec p)} \leq\frac{\sqrt\nu}{2}$.
\end{proof}

\begin{lemma}\label{lem:rhs-conc}
    For every $\vec p \in \unset$.
    Letting $\mu = \sum_{i=1}^\stg p_i$,
    we have 
    \[ \Pr[\floor \mu \leq \pbrv(\vec p) \le \mu + 3 \sqrt \nu ] \geq 0.49 \quad \text{and} \quad     \Pr[\mu - 3 \sqrt \stg \leq \pbrv(\vec p) \leq \ceil\mu]\geq 0.49.\]
\end{lemma}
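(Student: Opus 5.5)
The plan is to combine the median property (Theorem~\ref{thm:median}) with a Chebyshev tail bound, using the variance bound of Lemma~\ref{lem:var-bound}. I treat the first inequality; the second follows by the symmetric argument, or by applying the first to the complementary PBD $\stg-\pbrv(\vec p)$, which has probabilities $1-p_i$ and mean $\stg-\mu$.

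\textbf{Step 1 (median mass).} By Theorem~\ref{thm:median}, the median $m$ of $\pbrv(\vec p)$ lies in $\{\floor{\mu},\ceil{\mu}\}$, so $m\ge\floor{\mu}$. By the definition of a median, $\Pr[\pbrv(\vec p)\ge m]\ge \frac12$. Therefore
\[\Pr[\pbrv(\vec p)\ge\floor{\mu}]\ge \Pr[\pbrv(\vec p)\ge m]\ge\tfrac12.\]

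\textbf{Step 2 (upper tail).} By Chebyshev's inequality together with $\sigma^2\le \stg/4$ from Lemma~\ref{lem:var-bound},
\[\Pr[\pbrv(\vec p)>\mu+3\sqrt{\stg}]\le \frac{\sigma^2}{9\stg}\le \frac{1}{36}.\]
This needs no $\epsilon$-boundedness. One-sided Cantelli would give a slightly better constant, but it is not needed.

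\textbf{Step 3 (combine).} Subtracting the tail from the median mass,
\[\Pr[\floor{\mu}\le\pbrv(\vec p)\le\mu+3\sqrt{\stg}]\ge \tfrac12-\tfrac1{36}>0.47.\]

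\textbf{Main obstacle.} The stated constant is $0.49$, and the combination in Step 3 only yields about $0.472$. To recover $0.49$ I would first try Cantelli, which bounds the upper tail by $\sigma^2/(\sigma^2+9\stg)\le 1/37$. This still falls short.

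The next option is a sharper tail bound. Hoeffding's inequality gives $\Pr[\pbrv(\vec p)-\mu>3\sqrt{\stg}]\le e^{-2\cdot 9\stg/\stg}=e^{-18}$, which is negligible. Then
\[\Pr[\floor{\mu}\le\pbrv(\vec p)\le\mu+3\sqrt{\stg}]\ge \tfrac12-e^{-18}\ge 0.49.\]
So the route I would commit to is the median bound (Step 1) combined with Hoeffding in place of Chebyshev. The only delicate point is using the correct direction of the median inequality, $m\ge\floor{\mu}$, so that the event $\{\pbrv(\vec p)\ge\floor{\mu}\}$ carries mass at least $\frac12$.
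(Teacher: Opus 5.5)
Your committed route is the paper's proof: the median bound $\Pr[\pbrv(\vec p)\ge\floor{\mu}]\ge\frac12$ from Theorem~\ref{thm:median}, minus Hoeffding's $e^{-18}$ bound on the upper tail. Your complementation argument via $\stg-\pbrv(\vec p)$, using $\stg-\floor{\stg-\mu}=\ceil{\mu}$, also correctly handles the second inequality, which the paper only calls symmetric; the Chebyshev/Cantelli detour cannot reach $0.49$, as you noted, so you can drop it.
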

\begin{proof}
    We show only the first inequality \( \Pr[\floor{\mu} \leq \pbrv(\vec p) \leq \mu + 3 \sqrt\stg] \geq 0.49\).
    The proof of the other inequality is symmetrical.
    Fix any $\vec p \in \unset$.
    Using Hoeffding's inequality, we get
    \[\Pr[\pbrv(\vec p) \geq \mu + 3 \sqrt \stg] \leq\exp\left(-\frac{2(3\sqrt \stg)^2}{\stg}\right) = e^{-18} \leq 0.01.\]
    The median of $\pbrv(\vec p)$ is at least $\floor \mu$ by Theorem~\ref{thm:median}.
    Thus, $\Pr[\pbrv(\vec p) \geq \floor \mu] \geq 0.5$.
    The lemma follows from the relation
    \[\Pr[\floor \mu \leq \pbrv(\vec p) \leq \mu + 3 \sqrt \stg] = \Pr[\pbrv(\vec p) \geq \floor \mu] - \Pr[\pbrv(\vec p) \geq \mu + 3 \sqrt \stg] \ge 0.5 - 0.01 = 0.49.\]
\end{proof}

We show that when the probabilities are bounded away from $0$ and $1$, we have a constant bound on the probability mass at the mode.
\begin{lemma} \label{lem:mass-ub}
    Suppose that \(\pbrv(\vec p)\) has \(\vec p \in [\epsilon, 1-\epsilon]^\stg\) and order \(\stg \geq \frac{50}{\epsilon}\), then \(\Pr[\pbrv(\vec p) = m] \leq 0.24\) for every $m\in [\stg]$.
\end{lemma}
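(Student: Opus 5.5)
By Theorem~\ref{thm:unimodal}, the largest point probability of $\pbrv(\vec p)$ is attained at a mode. So it suffices to bound $\Pr[\pbrv(\vec p)=m^*]$ at a mode $m^*$, which by Theorem~\ref{thm:mode} lies in $\{\floor{\mu},\ceil{\mu}\}$. I will use the local limit theorem, Theorem~\ref{thm:local-lim}. The Gaussian density is at most $\frac{1}{\sigma\sqrt{2\pi}}$, so
\[
\Pr[\pbrv(\vec p)=m]\le \frac{1}{\sigma\sqrt{2\pi}}+\frac{3.23}{\sigma^2}+\frac{1.35}{\sigma^3}+\frac{0.25}{\sigma^4}.
\]
This bound holds for every $m$, not only at the mode.

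Next I lower bound $\sigma$ using Lemma~\ref{lem:var-bound}: $\sigma\ge\sqrt{\epsilon\stg/2}$. Since $\stg\ge 50/\epsilon$, this gives $\sigma^2\ge 25$, i.e.\ $\sigma\ge 5$. Plugging $\sigma=5$ into the right-hand side gives
\[
\frac{1}{5\sqrt{2\pi}}+\frac{3.23}{25}+\frac{1.35}{125}+\frac{0.25}{625}\approx 0.0798+0.1292+0.0108+0.0004\approx 0.220\le 0.24.
\]
The right-hand side is decreasing in $\sigma$, so the same bound holds for all $\sigma\ge 5$. Hence $\Pr[\pbrv(\vec p)=m]\le 0.24$ for every $m$.

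The main obstacle is purely numerical: the constant $3.23/\sigma^2$ dominates, and the threshold $50/\epsilon$ must make it small enough. The calculation above shows that it does, with some slack. One point needs checking. Lemma~\ref{lem:var-bound} uses $\epsilon(1-\epsilon)\ge\epsilon/2$, which requires $\epsilon\le\frac12$; this holds by Definition~\ref{def:eps-bounded}. If that bound were too lossy, I would instead use $\sigma^2\ge\stg\epsilon(1-\epsilon)$ directly. No case analysis is needed otherwise.
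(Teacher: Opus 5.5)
Your proposal is correct and follows the paper's proof essentially verbatim: bound the Gaussian density by $\frac{1}{\sigma\sqrt{2\pi}}$, add the error terms from Theorem~\ref{thm:local-lim}, and use $\sigma \ge \sqrt{\epsilon\stg/2} \ge 5$ from Lemma~\ref{lem:var-bound}; your explicit evaluation at $\sigma = 5$ (about $0.220$) supplies the arithmetic the paper leaves implicit. The opening appeal to unimodality and the mode is unnecessary, since, as you note yourself, the local-limit bound already holds for every $m$.
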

\begin{proof}
    Fix any $\vec p \in [\epsilon, 1-\epsilon]^\stg$.
    Define $\mu \coloneq\sum_{i=1}^\stg p_i$ be the mean of $\pbrv(\vec p)$ and let $\sigma$ denote the standard deviation of $\pbrv(\vec p)$.
    We define the normal \gls{rv}\@ $Z\sim \normaldist(\mu, \sigma^2)$.
    The probability density function of
    $Z$ is $f(x) = \frac{1}{\sigma\sqrt{2\pi}}\exp\left(-\frac{(x-\mu)^2}{2\sigma^2}\right)$.
    The maximum density is achieved when $x = \mu$ with density $f(\mu) = \frac{1}{\sigma\sqrt{2\pi}}$.
    Using Theorem~\ref{thm:local-lim}, we can bound the difference between the density of the normal pdf and the pmf of $\pbrv(\vec p)$.
    Thus, for every $m\in [\stg]$, we have
    \[\Pr[\pbrv(\vec p)= m] \leq \frac{1}{\sigma \sqrt{2\pi}} + \frac{3.23}{\sigma^2} + \frac{1.35}{\sigma^3} + \frac{0.25}{\sigma^4}\leq 0.24\]
    whenever $\sigma \geq 5$.
    Since $\sigma \geq \sqrt{\frac{\epsilon \stg}{2}}$, it suffices to have $\stg \geq \frac{50}{\epsilon}$.
\end{proof}

The following lemma is similar to Lemma~\ref{lem:small-stage-lb}, but they differ in the property used.
Lemma~\ref{lem:small-stage-lb} used the property that $N^\ell = 0$; the next lemma uses a concentration inequality, which is permissible since $\stg \geq \frac n2$.

\begin{lemma} \label{lem:large-stg-lb}
    When $\stg \geq \frac n2$, any setting of $\vec p \in \unset$ with $\sum_{i=1}^\stg p_i \in \tilde N$ will result in
    \[\Pr[\pbrv(\vec p) \in N] \geq \alpha_\stg.\]
\end{lemma}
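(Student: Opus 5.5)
Fix $\stg \ge \frac n2$ and $\vec p\in\unset$ with $\mu=\sum_{i=1}^\stg p_i \in \tn=[N^\ell-1,N^h+1]$, where $N^\ell=(\stg-n+k)^+$, $N^h=k-1$ and $w=\abs{N}$. The plan is to split according to the two cases in the definition of $\alpha_\stg$, using the one-sided concentration of Lemma~\ref{lem:rhs-conc} in the first case and a unimodality/averaging argument in the second. Throughout I use that $\mu$ lies within distance $1$ of $N$; the widening from $N$ to $\tn$ is exactly what makes $\floor\mu$ or $\ceil\mu$ land in, or adjacent to, $N$.

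\textbf{Case $3\sqrt\stg \le \ceil{w/2}$.} Here $N$ is wide compared to the spread of $\pbrv(\vec p)$. If $\mu$ lies in the lower half of $\tn$ (say $\mu \le N^\ell + \ceil{w/2}-1$), I apply the first inequality of Lemma~\ref{lem:rhs-conc}: $\Pr[\floor\mu \le \pbrv \le \mu+3\sqrt\stg]\ge 0.49$. The upper end satisfies $\mu+3\sqrt\stg \le N^\ell+2\ceil{w/2}-1 \le N^h$ up to an off-by-one that needs care. The lower end $\floor\mu$ is at least $N^\ell-1$. If $\floor\mu=N^\ell-1$, I drop that single atom. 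If $\stg\ge 50/\epsilon$, Lemma~\ref{lem:mass-ub} bounds that atom by $0.24$, leaving at least $0.25$. Otherwise I handle it directly: the window condition $3\sqrt\stg\le\ceil{w/2}$ forces $w$ to be large, so I instead use Hoeffding and the median on a slightly shifted interval. If $\mu$ lies in the upper half, the symmetric second inequality of Lemma~\ref{lem:rhs-conc} applies. Either way this should give $\Pr[\pbrv\in N]\ge \frac14$.

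\textbf{Case $3\sqrt\stg > \ceil{w/2}$.} I again use Lemma~\ref{lem:rhs-conc} to get mass at least $0.49$ on an interval $I$ of at most $3\sqrt\stg+2$ integers starting at $\floor\mu$ (or ending at $\ceil\mu$), chosen to point into $N$. By Theorems~\ref{thm:mode} and~\ref{thm:unimodal}, the pmf is non-increasing moving away from $\floor\mu$ or $\ceil\mu$ into $I$. So the $\min(w,\abs I)$ integers of $I$ closest to the mode, which lie in $N$ apart from a possible single boundary point at distance one, carry at least a $\frac{w}{\abs I}$ fraction of the mass of $I$. This gives roughly $0.49\cdot \frac{w}{3\sqrt\stg+2}$, and the constants should be absorbed into $\frac{w/4}{6\sqrt\stg}$. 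The boundary point $N^\ell-1$ or $N^h+1$ is handled by shifting by one: I lose one integer out of $w+1$, which costs at most a factor $2$ when $w\ge1$.

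\textbf{Main obstacle.} The step I expect to be hardest is the bookkeeping at the boundary. Because $\tn$ extends $N$ by one on each side, $\floor\mu$ or $\ceil\mu$ may fall just outside $N$. I need to show that dropping or shifting by that one point loses at most a constant factor. The plan is to use unimodality to compare the outside atom against an adjacent inside atom, rather than relying on Lemma~\ref{lem:mass-ub}, since that lemma requires $\stg\ge 50/\epsilon$. I would try this comparison first, and fall back on Lemma~\ref{lem:mass-ub} only in the wide-window case.
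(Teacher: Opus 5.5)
Your decomposition matches the paper's: you split by the two regimes of $\alpha_\stg$ and by whether $\mu$ lies below, inside, or above $N$, and you combine Lemma~\ref{lem:rhs-conc} with unimodal averaging. The gap is the boundary step in the narrow-window regime, where you plan to replace Lemma~\ref{lem:mass-ub} by a unimodality comparison.

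\textbf{Why the comparison fails.} When $\mu\in[N^\ell-1,N^\ell)$, Theorem~\ref{thm:mode} only says the mode is $N^\ell-1$ or $N^\ell$. If the mode is $N^\ell-1$, unimodality gives
\[
\Pr[\pbrv(\vec p)=N^\ell-1]\ge\Pr[\pbrv(\vec p)=N^\ell]\ge\Pr[\pbrv(\vec p)=N^\ell+1]\ge\cdots.
\]
So the excluded atom is the \emph{largest} one, and comparing it with an adjacent inside atom goes the wrong way. Your claim that dropping one of $w+1$ integers costs at most a factor $2$ would need the dropped atom to be at most the average of the $w+1$ atoms, and nothing you invoke guarantees that.

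\textbf{No qualitative refinement can work.} The lemma is false without a quantitative lower bound on $\epsilon\stg$, even though mode and median within one of $\mu$, unimodality, and Hoeffding hold for every PBD. Take $k\ge 2$, $\ell_i=r_i=\epsilon$ and $\stg=n-k+1\ge\frac n2$, so that $N^\ell=1$ and $w=k-1\ge1$. Choose $\epsilon<\frac{1}{24}\stg^{-3/2}$. Then $\mu=\stg\epsilon\in[0,1)\subseteq\tn$, but
\[
\Pr[\pbrv(\vec p)\in N]\le\Pr[\pbrv(\vec p)\ge 1]\le\stg\epsilon<\frac{1}{24\sqrt\stg}\le\alpha_\stg .
\]
The same example shows that your wide-window fallback for $\stg<50/\epsilon$ (``Hoeffding and the median on a slightly shifted interval'') cannot be carried out either.

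\textbf{The missing ingredient is Lemma~\ref{lem:mass-ub}.} The paper relies on its standing assumption $\epsilon=\Omega(1/n)$, used concretely as $n>100/\epsilon$. Hence $\stg\ge\frac n2>\frac{50}{\epsilon}$ at every stage this lemma covers, so the case $\stg<50/\epsilon$ is vacuous, and Lemma~\ref{lem:mass-ub} bounds every atom by $0.24$. In every boundary case, in both regimes, this gives
\[
\Pr[\ceil{\mu}\le\pbrv(\vec p)\le\mu+3\sqrt\stg]\ge 0.49-0.24=0.25
\]
on an interval starting at $\ceil{\mu}=N^\ell$. Since the mode is at most $\ceil{\mu}$, the pmf is non-increasing from $\ceil{\mu}$ onward. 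In the wide regime this interval lies inside $N$, giving $0.25$. In the narrow regime, averaging over roughly $3\sqrt\stg$ points gives $\frac{0.25\,(w/2)}{3\sqrt\stg}=\alpha_\stg$. If you use this lemma in the narrow-window boundary case instead of the unimodality comparison, the rest of your outline goes through.
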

\begin{proof}
    Fix any setting of $\pbrv({\vec p})$ such that $ \mu = \sum_{i=1}^\stg  p_i \in \tilde N$.
    Under the assumption that $\stg > \frac n2$, $\tn = [N^\ell-1, N^h +1]$.
    We prove the lemma under the cases $\mu \in [N^\ell - 1, N^\ell)$, $\mu \in [N^\ell, N^h]$ or $\mu \in (N^h, N^h + 1]$.
    We first show the case where $\mu \in [N^\ell, N^h]$.

    When $\mu \in N$, at least one of $[\mu, N^h]$ or $[N^\ell, \mu] $ contain at least $\ceil{w/2}$ integer points.
    Without loss of generality, assume that $ [\mu, N^h]$ has at least $\ceil{w/2}$ integer points.
    Since $ N^\ell $ is an integer, $ \mu \geq N^\ell $ implies that $ \floor \mu \geq N^\ell$, so $ [\floor\mu,  \mu + \frac w2 ]\subseteq [N^\ell, N^h]$.
    If $\ceil {\frac w2} > 3 \sqrt\stg$, then we are done, since we now have $[\floor \mu, \mu + 3\sqrt \stg] \subseteq[\floor \mu, \mu + \frac w2]$ and
    \[\Pr[\pbrv(\stg) \in N] \geq \Pr\left[\pbrv(\vec p) \in \left[\floor\mu, \mu + \frac w2\right]\right] \geq \Pr[\pbrv(\vec p) \in [\floor \mu , \mu+ 3\sqrt\stg]] \geq 0.49,\]
    where the last inequality follows from Lemma~\ref{lem:rhs-conc}.

    On the other hand, suppose that $\ceil {\frac w2} \leq 3 \sqrt \stg$.
    Then we need to use the property that the mode is either $\floor \mu$ or $\ceil \mu$ and that the \gls{pbd} is a unimodal distribution where the probability mass decreases monotonically away from the mode (Theorem~\ref{thm:unimodal}).
    These properties imply that the interval $[\floor\mu, \mu + \frac w2]$ captures the highest $\ceil{\frac w2}$ probability mass within the interval $[\floor{\mu}, \mu + 3\sqrt\stg]$.
    Thus,
    \[\Pr[\pbrv(\vec p) \in N] \geq \Pr\left[\pbrv(\vec p)\in \left[\floor \mu, \mu + \frac w 2\right]\right]\geq \frac{w/2}{3\sqrt \stg} \Pr[\pbrv(\vec p) \in [\floor \mu, \mu + 3\sqrt\stg] \geq \frac{0.49(w/2)}{3\sqrt\stg}.\]

    We now consider the case where $\mu \in [N^\ell -1, N^\ell)$.
    The analysis is largely similar, but we need to account for the fact that the mode of $\pbrv(\vec p)$ might be on $N^\ell$.
    As $\stg \geq \frac n2 > \frac {50}\epsilon$, we can apply Lemma~\ref{lem:mass-ub} to get 
    \begin{align*}
        \Pr[N^\ell \leq \pbrv(\vec p) \leq \mu + 3\sqrt\stg] &= \Pr[\ceil\mu \leq \pbrv(\vec p) \leq \mu + 3\sqrt \stg] \\
        &= \Pr[\floor \mu \leq \pbrv(\vec p) \le \mu + 3 \sqrt\stg] - \Pr[\pbrv(\vec p) = \floor\mu]\\
        &\ge 0.49 - 0.24 = 0.25.
    \end{align*}
    If $\ceil {w /2}\geq 3 \sqrt \stg$, then we are done, since that would mean $[N^\ell, N^h] \supseteq [N^\ell, \mu + 3\sqrt \stg]$ and thus,
    \[\Pr[\pbrv(\vec p)\in N] \geq \Pr[N^\ell \leq \pbrv(\vec p) \leq \mu + 3\sqrt\stg] \geq 0.25.\]
    On the other hand, suppose $\ceil {\frac w2} \leq 3 \sqrt\stg$.
    Again, the same argument that the mode is at most $\ceil \mu$ and the \gls{pbd} decreases monotonically away from the mode implies that $[N^\ell, N^h]$ is capturing the top $\frac w2$ probability mass among the $3\sqrt\stg$ points.
    By an averaging argument, we get
    \[\Pr[\pbrv(\vec p) \in N] \geq \frac{0.25 (w/2)}{3 \sqrt \stg}.\]
    The case where $\mu \in (N^h, N^{h+1}]$ is symmetrical.
\end{proof}

We have now proved that $\alpha_\stg$ is indeed a lower bound (Lemma~\ref{lem:alpha-lb}) with Lemmas~\ref{lem:small-stage-lb} and~\ref{lem:large-stg-lb}.

\begin{lemma} \label{lem:beta-ub}
    Any probability vector $\vec p \in \unset$ satisfies the upper bound
    \[\Pr[\pbrv(\vec p)\in N] \leq \beta_\stg.\]
\end{lemma}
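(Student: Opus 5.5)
The bound $\beta_\stg=\min\{2\sqrt2\,w/\sqrt{\epsilon\stg},1\}$ is trivial when it equals $1$, since it is then just an upper bound on a probability. So the content of the statement is the bound $\Pr[\pbrv(\vec p)\in N]\le 2\sqrt2\,w/\sqrt{\epsilon\stg}$. I would obtain it from a uniform bound on the point masses of the \gls{pbd}. The window $N$ has exactly $w$ integer points, so a union bound gives
\[\Pr[\pbrv(\vec p)\in N]\le w\cdot\max_m \Pr[\pbrv(\vec p)=m],\]
and it suffices to show $\max_m\Pr[\pbrv(\vec p)=m]\le 2/\sigma$, where $\sigma$ is the standard deviation of $\pbrv(\vec p)$. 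Lemma~\ref{lem:var-bound} gives $\sigma\ge\sqrt{\epsilon\stg/2}$, hence $2/\sigma\le 2\sqrt2/\sqrt{\epsilon\stg}$, and multiplying by $w$ yields exactly $\beta_\stg$.

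The main step is therefore a point-mass bound of the form $\Pr[\pbrv(\vec p)=m]\le C/\sigma$ with $C\le 2$, valid for every $\sigma$. My first attempt would use Theorem~\ref{thm:normal-apx-to-pbd}, since it gives clean constants. Writing
\[\Pr[\pbrv=m]=\Pr[\pbrv\le m]-\Pr[\pbrv\le m-1],\]
and replacing each CDF by $\Phi$ costs at most $2\cdot 0.7915/\sigma$. The remaining term $\Phi\!\left(\frac{m-\mu}{\sigma}\right)-\Phi\!\left(\frac{m-1-\mu}{\sigma}\right)$ is at most $\frac{1}{\sigma\sqrt{2\pi}}\approx 0.399/\sigma$, because the standard normal density is at most $1/\sqrt{2\pi}$ and the interval has length $1/\sigma$. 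The total is $(1.583+0.399)/\sigma<2/\sigma$, as needed.

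This argument holds for all $\sigma>0$, so no case split on small $\sigma$ is required. It also avoids the local limit theorem (Theorem~\ref{thm:local-lim}), which has worse lower-order terms when $\sigma$ is small. In the regime $2/\sigma\ge 1$ the claimed bound is trivial anyway.

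The step I expect to be the main obstacle is the constant bookkeeping. I need to check that Theorem~\ref{thm:normal-apx-to-pbd} applies at both arguments $m$ and $m-1$, including the boundary case $m=0$, where $\Pr[\pbrv\le -1]=0$ and $\Phi$ at that point is small, so the same bound still holds. Everything else follows from combining the $w$ point masses with Lemma~\ref{lem:var-bound}.
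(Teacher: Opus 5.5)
Your proof is correct. It differs from the paper's only in how the normal approximation is used.

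The paper applies Theorem~\ref{thm:normal-apx-to-pbd} once to the whole window, writing $\Pr[\pbrv(\vec p)\in N]$ as a difference of two CDF values. It pays the additive error $2\cdot 0.7915/\sigma$ a single time and bounds the Gaussian mass of a length-$w$ interval by $w/(\sigma\sqrt{2\pi})$, giving $(0.399\,w+1.583)/\sigma$. It then uses $w\ge 1$ to absorb the constant. You instead apply the same theorem to each point mass, getting $\max_m\Pr[\pbrv(\vec p)=m]\le 1.982/\sigma$, and union-bound over the $w$ points, so the Berry--Esseen error is paid once per point.

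Both fit under $\beta_\stg$, because $\beta_\stg$ equals $2w/\sigma$ at the smallest admissible $\sigma=\sqrt{\epsilon\stg/2}$ from Lemma~\ref{lem:var-bound}. The paper's estimate is sharper for large $w$, since its additive term does not grow with $w$. Yours isolates a clean, reusable point-mass lemma, but has essentially no slack ($1.982$ versus $2$).

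Your handling of $m=0$ is fine. The Berry--Esseen bound holds at every real argument, and in any case $\Phi((-1-\mu)/\sigma)\le 0.4/\sigma$ by Mills' ratio, since $\mu\ge\sigma^2$. The paper's proof meets the same point at $N^\ell-1=-1$ without comment.

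The lack of slack matters in one place. You use that $N$ has exactly $w$ integer points, which matches the paper's stated meaning $w_\stg=\abs{N}$. However, the closed form $\min(n-\stg,k-1)$ in Definition~\ref{def:alpha-beta-bounds} is off by one when $\stg\le n-k$: then $N=\{0,\dots,k-1\}$ has $k$ points, so $\abs{N}=\min(n-\stg,k)$.

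With $w=\abs{N}$ your argument is exactly right. If one took the literal closed form, your union bound would give only about $1.98(w+1)/\sigma$, which does not imply $\beta_\stg$ for small $w$. The paper's single-shot estimate still fits for $w\ge 2$. For $k=1$ the literal statement fails outright, since $\beta_\stg=0$.
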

\begin{proof}
    Since the lower bound in the case where $\ceil {w/2} \geq 3 \sqrt \stg$ is constant, a trivial upper bound of 1 on the probability is sufficient.
    When $\ceil{w /2} < 3 \sqrt \nu$, we need a tighter bound.

    Fix any $\vec p \in \unset$.
    Let $Z \sim \normaldist(\mu, \sigma^2)$.
    By properties of the normal distribution, we have $\Pr[Z\in N] \leq \frac{w}{\sigma\sqrt{2\pi}}$.
    From Proposition~\ref{thm:normal-apx-to-pbd}, we can approximate the cumulative distribution function of a \gls{pbd} with an additive loss of $0.7915/\sigma$.
    Therefore, approximating $\Pr[\pbrv(\vec p) \in N]$ with $\Pr[Z\in N]$ incurs an additive loss of $\frac{2\cdot 0.7915}{\sigma}$.
    Therefore,
    \begin{equation*}
        \Pr[\pbrv(\vec p) \in N] \leq \frac{w}{\sigma\sqrt{2\pi}} + \frac{1.583}{\sigma} \leq \frac{w\sqrt2}{\sqrt{2\pi} \sqrt{\epsilon \nu }} + \frac{1.583\sqrt2}{\sqrt{\epsilon \nu }} \leq \frac{2w\sqrt2}{\sqrt{\epsilon\stg}}.
    \end{equation*}
    where the second inequality uses Lemma~\ref{lem:var-bound} and the last inequality holds because $w \geq 1$.
\end{proof}

\begin{corollary} \label{cor:overlap-prob-bound}
    For all stages $\stg \in [n]$ and $\vec p$ such that $\sum_{i=1}^\stg p_i \in \tn$,
    \begin{equation*}
        \Pr[\pbrv(\vec p) \in N] \ge \Omega(\sqrt\epsilon)\Pr[\pbrv(\vec p^*) \in N]
    \end{equation*}
    where $\vec{p}^*\in \unset$ is the optimal setting that maximizes the probability of not stopping \eqref{eq:stagewise-noncompletion}.
\end{corollary}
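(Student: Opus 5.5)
The corollary follows by comparing the lower bound $\alpha_\stg$ of Lemma~\ref{lem:alpha-lb} with the upper bound $\beta_\stg$ of Lemma~\ref{lem:beta-ub}. Fix a stage $\stg$ and any $\vec p\in\unset$ with $\sum_{i=1}^\stg p_i\in\tn$. Lemma~\ref{lem:alpha-lb} gives $\Pr[\pbrv(\vec p)\in N]\ge \alpha_\stg$. Since $\vec p^*\in\unset$, Lemma~\ref{lem:beta-ub} gives $\Pr[\pbrv(\vec p^*)\in N]\le \beta_\stg$. It therefore suffices to show $\alpha_\stg\ge c\sqrt\epsilon\,\beta_\stg$ for an absolute constant $c>0$. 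I will check this against the two cases in Definition~\ref{def:alpha-beta-bounds}.

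\emph{Case $3\sqrt\stg\le\ceil{w/2}$.} Here $\alpha_\stg=\frac14$ and $\beta_\stg\le 1$, so the ratio is at least $\frac14\ge\frac14\sqrt\epsilon$, using $\epsilon<\frac12$.

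\emph{Case $3\sqrt\stg>\ceil{w/2}$.} Here $\alpha_\stg=\frac{w}{24\sqrt\stg}$ and $\beta_\stg\le\frac{2\sqrt2\,w}{\sqrt{\epsilon\stg}}$. The factors of $w$ and $\sqrt\stg$ cancel, giving
\[\frac{\alpha_\stg}{\beta_\stg}\ \ge\ \frac{w/(24\sqrt\stg)}{2\sqrt2\,w/\sqrt{\epsilon\stg}}\ =\ \frac{\sqrt\epsilon}{48\sqrt2}.\]
This is the only place where the $\sqrt\epsilon$ loss appears. It comes from the standard-deviation lower bound $\sigma\ge\sqrt{\epsilon\stg/2}$ of Lemma~\ref{lem:var-bound}, which enters $\beta_\stg$, whereas $\alpha_\stg$ uses the scale $\sqrt\stg$.

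Combining the two cases gives $\Pr[\pbrv(\vec p)\in N]\ge\alpha_\stg\ge \frac{\sqrt\epsilon}{48\sqrt2}\,\beta_\stg\ge \Omega(\sqrt\epsilon)\Pr[\pbrv(\vec p^*)\in N]$.

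The only delicate point is that the lemmas must apply at every stage. For $\stg<\frac n2$, Lemma~\ref{lem:small-stage-lb} gives the stronger bound $\frac12\ge\alpha_\stg$, so nothing changes. For $\stg\ge\frac n2$, Lemma~\ref{lem:large-stg-lb} uses Lemma~\ref{lem:mass-ub}, which needs $\stg\ge 50/\epsilon$. This holds because $\stg\ge n/2$ and the hypothesis $\epsilon=\Omega(1/n)$ of Theorem~\ref{thm:gen-cost} is taken with a suitable constant. If that constant were unavailable, I would treat $\stg<50/\epsilon$ directly: then $\sqrt\stg=O(1/\sqrt\epsilon)$, so the trivial bound $\beta_\stg\le1$ together with a lower bound of order $1/\sqrt\stg$ (from unimodality and the median bound) still gives a ratio of $\Omega(\sqrt\epsilon)$.
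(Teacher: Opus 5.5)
Your proposal is correct and follows the paper's route: the corollary comes from pairing the lower bound $\alpha_\stg$ of Lemma~\ref{lem:alpha-lb} with the upper bound $\beta_\stg$ of Lemma~\ref{lem:beta-ub} and checking that $\beta_\stg/\alpha_\stg=O(1/\sqrt\epsilon)$ in each case (the paper uses exactly this ratio in Lemma~\ref{lem:drst-overlap-alg}), and your use of $\stg\ge n/2>50/\epsilon$ for Lemma~\ref{lem:mass-ub} matches the paper's reliance on the hypothesis $\epsilon=\Omega(1/n)$. Your closing fallback is unnecessary and, as phrased, not justified: when $\mu\in[N^\ell-1,N^\ell)$ the median may be $N^\ell-1$, and the median bound alone does not stop almost all the mass from sitting there, so it is best dropped.
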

\subsection{Adversary Problem} \label{sec:adv-apx}
We now provide a $\bigOh{1/\sqrt{\epsilon}}$-approximation algorithm for \gls{adv} (Definition~\ref{def:adv-defn}).
Recall that for \gls{adv}, the sequence of testing $\sol$ is given, and the goal is to set $\vec p\in \unset$ so that the expected cost of testing, $C(\sol, \vec p)$, is maximized.
The algorithm uses results from Section~\ref{sec:single-stg-apx}, setting the vector $\vec p$ based on how the windows overlap.

Assume by renumbering that $\sol = \langle 1, 2, \dots, n\rangle$.
The cost of testing (see~\eqref{eq:cost-fn}) can be rewritten as 
\begin{equation} \label{eq:cost-restate}
    C(\sol, \vec p)  =\sum_{\stg=1}^n c_\stg \Pr[\pbrv_{\stg-1} (\vec p) \in N_{\stg-1}].
\end{equation}
The results of the previous section tell us how to set $(p_1, p_2, \dots, p_\stg)$ to maximize the probability term in \eqref{eq:cost-restate} for each stage $\stg$.
In this section, we show how to use the stage-wise result to maximize $C(\sol, \vec p)$.
This requires us to ensure consistency in the choice of $\vec p$ over stages.

The approximation algorithm considers the relative position of the windows $E_{n} = [\sum_{i=1}^{n}\ell_i, \sum_{i=1}^{n} r_i]$ and $ \tn_{n} \coloneq [N_{n}^\ell -1, N_{n}^h +1] = [k-1, k]$ in the last stage.
When $E_{n}^\ell > \tilde N_{n}^h$, the window $E_{n}$ is above the window $\tilde N_n$, thus, to increase the number of tests conducted, the adversary would want to set $\vec p = \vec \ell$ so that the expected number of passes $\sum_{i=1}^\stg p_i$ is close to, or within the non-stopping window for every stage $\stg$.
Symmetrically, if $E_{n}$ is below $\tn_{n}$, the adversary would want to set $\vec p = \vec r$.
In all other cases, $E_{n}$ and $\tilde N_{n}$ overlap ($E_{n} \cap \tilde N_{n} \neq \emptyset$).
As we will show, there exists a setting $\hat{\vec p}$ such that the expected number of passes lies in $\tilde N_\stg$ for all stages $\stg$ i.e., $\Pi_\stg(\hat{\vec p}) \coloneq \sum_{i=1}^\stg \hat p_i \in \tilde N_\stg$ for all $\stg \in [n]$.
We will show that any such setting of $\hat{\vec p}$ is an $\bigOh{1/\sqrt \epsilon}$ approximation.
A pseudocode is provided in Algorithm~\ref{alg:apx-adv}.

\begin{algorithm}
\caption{General Cost Adversary Algorithm} \label{alg:apx-adv}
\begin{algorithmic}[1]
    \State Compute $E^\ell_{n} = \sum_{i=1}^{n} \ell_i$ and $E^h_{n} =\sum_{i=1}^{n} r_i$.
    \State Compute the approximate solution $\hat{\vec p}$ based on the relative position of $E_{n}$ and $\tn_n$
    \If{$E^\ell_{n} >  \tn^h_{n} (=k)$} \Comment{$E_{n}$ lies above $\tn_{n}$}
        \State $\hat{\vec p} = (\ell_1, \ell_2, \dots, \ell_n)$.
    \ElsIf{$E^h_{n} < \tn^\ell_{n}(=k-1)$} \Comment{$E_{n}$ lies below $\tn_{n}$}
        \State $\hat{\vec p} = (r_1, r_2, \dots, r_n)$.
    \Else \Comment{$E_{n}$ and $\tn_{n}$ overlaps.}
        \State \label{line:window-straddle} Set $\hat{\vec p}$ such that $\sum_{i=1}^\stg \hat p_i \in \tn_\stg$ for all stages $\stg \in [n]$.
    \EndIf
\end{algorithmic}
\end{algorithm}
\begin{remark}
    While our algorithm considers the last-stage windows $E_n$ and $\tn_n$, the same result can be derived by considering the second-last-stage windows $E_{n-1}$ and $\tn_{n-1}$.
    In fact, the expression in \eqref{eq:cost-restate} does not even consider the non-stopping probability at stage $n$.
Using the stage $n$ window is crucial for our \gls{drst} algorithm because $E_n$ is the same regardless of the permutation $\sol$.
\end{remark}

We show that Algorithm~\ref{alg:apx-adv} is a $\bigOh{1/\sqrt \epsilon}$-approximation for \gls{adv}.
As with Section~\ref{sec:single-stg-apx}, the proof proceeds by cases. Here, the cases are whether $E_{n}$ is below, above or overlaps with $\tn_{n}$. 
\paragraph{Disjoint Windows}
When $E_{n}$ is above $\tn_{n}$, we first show that $E_{\stg}$ is never below $\tilde N_{\stg}$ for all stages $\stg \in [n]$.
We then show that setting $\vec p = \vec \ell$ is a $\bigOh{1/\sqrt \epsilon}$-maximizes the stagewise non-completion probability for all stages $\stg\in [n]$. 

\begin{lemma} \label{lem:below-once}
    If there is a stage $\stg\in [n]$ with $E_\stg^h < \tn_\stg^\ell$, then $E_{n}^h <\tn_{n}^\ell$.
\end{lemma}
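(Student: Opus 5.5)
The plan is a direct monotonicity argument: the upper boundary $E^h_\stg=\sum_{i=1}^\stg r_i$ grows by at most one per stage, while the lower end $\tn^\ell_\stg$ of the modified non-stopping window also grows by exactly one per stage once it is positive. So a gap with $E^h$ strictly below $\tn^\ell$ can never close by stage $n$.

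First I rule out small stages. For $\stg<\frac n2$ the definition of $\tn$ gives $\tn^\ell_\stg=0$. Since every $r_i\ge 0$, we have $E^h_\stg\ge 0=\tn^\ell_\stg$, so the hypothesis $E^h_\stg<\tn^\ell_\stg$ cannot hold. Hence any stage $\stg$ satisfying the hypothesis has $\stg\ge\frac n2$, and there $\tn^\ell_\stg=(\stg-n+k-1)^+$. The hypothesis together with $E^h_\stg\ge 0$ forces $\tn^\ell_\stg>0$. So the positive part is inactive, and
\[
E^h_\stg<\stg-n+k-1 .
\]

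Next I propagate this bound to stage $n$. Using $r_i\le 1-\epsilon\le 1$ for each of the remaining $n-\stg$ tests,
\[
E^h_n \;=\; E^h_\stg+\sum_{i=\stg+1}^{n} r_i \;\le\; E^h_\stg+(n-\stg) \;<\; (\stg-n+k-1)+(n-\stg) \;=\; k-1 .
\]
Finally I compare with the last-stage window: since $n\ge\frac n2$ and $k\ge 1$, we have $\tn^\ell_n=(k-1)^+=k-1$. This gives $E^h_n<\tn^\ell_n$, as claimed.

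There is no real obstacle here. The only points needing care are handling the $(\cdot)^+$ in $\tn^\ell_\stg$, namely showing it is inactive whenever the hypothesis holds, and keeping the inequality strict throughout. Both are immediate from the nonnegativity of $E^h_\stg$ and the bound $r_i\le 1$.
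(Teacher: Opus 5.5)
Your proof is correct and follows essentially the same route as the paper: you use $E^h_\stg \ge 0$ to force $\tn^\ell_\stg > 0$, so $\tn^\ell_\stg = \stg-n+k-1$, and then add the remaining $r_i$'s to obtain $E^h_n < k-1 = \tn^\ell_n$. The only difference is that you bound $r_i \le 1$ where the paper uses $r_i \le 1-\epsilon$. Since the strict inequality already comes from the hypothesis, your version shows the lemma holds without the $\epsilon$-boundedness assumption.
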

\begin{proof}
    We can only have $E_\stg^h < \tn_\stg^\ell$ when $\tn_\stg^\ell \ge 1$, so we are in the case where $\tn_\stg^\ell = \stg - n + k - 1$.
    We have
    \begin{align*}
        E_{n}^h &=E_\stg^h + \sum_{i=\stg+1}^{n} r_i
        < \tn_\stg^\ell + \sum_{i=\stg+1}^{n} r_i 
        \leq (\stg  - n + k - 1)  + \sum_{i=\stg +1}^{n} r_i\\
        &\le \stg  - n + k - 1  + (1-\epsilon)(n-\stg)
        = k - 1 - \epsilon (n-\stg)
        \leq k-1
        = \tn_{n}^\ell,
    \end{align*}
    where the first inequality is by our assumption that $E_\stg^h < \tn_\stg^\ell$, the second uses the definition of $\tn_\stg^\ell$, and the third uses $r_i \in [\epsilon, 1-\epsilon]$.
\end{proof}

\begin{lemma} \label{lem:adv-disjoint}
    If $E_{n}^\ell > \tn_{n}^h$, then setting $\vec p = \vec \ell$ is a $\bigOh{1/\sqrt \epsilon}$ approximation for \gls{adv}.
\end{lemma}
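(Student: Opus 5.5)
The plan is to prove the stronger stage-wise statement. For every stage $\stg\in[n-1]$, setting $\vec p=\vec\ell$ should give
\[\Pr[\pbrv_\stg(\vec\ell)\in N_\stg] \;\ge\; \Omega(\sqrt\epsilon)\max_{\vec q\in\unset}\Pr[\pbrv_\stg(\vec q)\in N_\stg].\]
Multiplying by $c_{\stg+1}$ and summing over stages via \eqref{eq:cost-restate} then yields $C(\sol,\vec\ell)\ge \Omega(\sqrt\epsilon)\sum_\stg c_{\stg+1}\max_{\vec q}\Pr[\cdot]\ge\Omega(\sqrt\epsilon)\,\adv(\sol)$. The point is that the maximum of a sum is at most the sum of the stage-wise maxima, so consistency across stages is not an issue once a single vector $\vec\ell$ is good at every stage.

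To get the stage-wise bound, first use the hypothesis $E_n^\ell>\tn_n^h$. It means $E_n$ is not below $\tn_n$, so by the contrapositive of Lemma~\ref{lem:below-once}, no stage $\stg$ has $E_\stg$ below $\tn_\stg$. Hence at every stage exactly one of two cases holds.

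\emph{Case (a): $E_\stg$ is above $\tn_\stg$.} Lemma~\ref{lem:disjoint-optimal-above} says $\vec\ell$ is exactly optimal for that stage. Its proof only uses $\ell_j\le p_j$ coordinate-wise, lowering coordinates one at a time, so it applies to the prefix of $\vec\ell$ directly.

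\emph{Case (b): $E_\stg$ overlaps $\tn_\stg$.} Here I need $\Pi_\stg(\vec\ell)=E_\stg^\ell\in\tn_\stg$. Overlap gives $E_\stg^\ell\le\tn_\stg^h$ and $E_\stg^h\ge\tn_\stg^\ell$, but $E_\stg^\ell$ itself might lie below $\tn_\stg^\ell$. This is the main obstacle. I would resolve it by showing $E_\stg^\ell\ge \tn_\stg^\ell$ at every stage, by a forward-looking argument mirroring Lemma~\ref{lem:below-once}. Suppose $E_\stg^\ell<\tn_\stg^\ell=\stg-n+k-1$ (this requires $\tn_\stg^\ell\ge1$). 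Then
\[E_n^\ell=E_\stg^\ell+\sum_{i>\stg}\ell_i<(\stg-n+k-1)+(1-\epsilon)(n-\stg)\le k-1<\tn_n^h,\]
which contradicts the hypothesis. So $E_\stg^\ell\in\tn_\stg$, and Corollary~\ref{cor:overlap-prob-bound} (via Lemmas~\ref{lem:alpha-lb} and~\ref{lem:beta-ub}, giving ratio $\alpha_\stg/\beta_\stg=\Omega(\sqrt\epsilon)$) yields the $\Omega(\sqrt\epsilon)$ stage-wise guarantee for $\vec\ell$.

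Finally, I would check the boundary bookkeeping. Stage $0$ contributes $c_1$ with probability $1$ for every $\vec p$, so it is harmless. The requirements $\stg\ge 50/\epsilon$ used inside Lemma~\ref{lem:large-stg-lb} for stages $\stg\ge n/2$ are covered by the assumption $\epsilon=\Omega(1/n)$, with a suitable constant. Combining cases (a) and (b) over all stages completes the proof.
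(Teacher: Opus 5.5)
Your proposal is correct and takes essentially the same route as the paper. You use Lemma~\ref{lem:below-once} to rule out stages where $E_\stg$ is below $\tn_\stg$, apply Lemma~\ref{lem:disjoint-optimal-above} on stages where $E_\stg$ is above and Corollary~\ref{cor:overlap-prob-bound} on overlapping stages, and then sum the stage-wise bounds. Your explicit check that $E_\stg^\ell \ge \tn_\stg^\ell$ at every stage (from $E_n^\ell > k$ and $\ell_i \le 1-\epsilon$) is a welcome addition: the paper asserts that $\Pi_{\stg}(\vec \ell)$ lies in the window on overlapping stages but never justifies the lower endpoint.
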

\begin{proof}
    Lemma~\ref{lem:below-once} implies that there are no stages $\stg \in [n]$ where $E_\stg$ is below $\tn_\stg$; at every stage $\stg$, either $E_\stg$ is above, or overlaps $\tn_\stg$.
    The first consequence of this is that the set \(O = \{\stg \in  [n] : E_{\stg-1}^\ell \leq \tn_{\stg-1}^h \}\) is the set of stages where $E_{\stg-1}$ overlaps $\tn_{\stg-1}$.
    The second consequence is that in all other stages $\stg \in [n] \setminus O$,  $E_{\stg-1}$ is above $\tn_{\stg-1}$.

    Since the algorithm sets $\vec p = \vec \ell$, the number of expected passes observed is $\sum_{i=1}^\stg \ell_\stg = E_\stg^\ell$.
    This implies that if $\stg \in O$, then $\sum_{i=1}^{\stg-1} \ell_i \in N_{\stg-1}$.
    Therefore,
    \begin{align*}
        C(\sol, \vec \ell) &= \sum_{\stg=1}^{n} c_{\stg} \Pr[\pbrv_{\stg-1}(\vec \ell) \in N_{\stg-1}] \\
        &= \sum_{\stg \in O} c_{\stg} \Pr[\pbrv_{\stg-1}(\vec \ell) \in N_{\stg-1}] + \sum_{\stg \in [n] \setminus O} c_{\stg} \Pr[\pbrv_{\stg-1}(\vec \ell) \in N_{\stg-1}] \\
        &\geq \sum_{\stg \in O} c_{\stg}\cdot  \Omega(\sqrt\epsilon) \max_{\vec p \in \unset} \Pr[\pbrv_{\stg-1}(\vec p) \in N_{\stg-1}] + \sum_{\stg \in [n] \setminus O} c_{\stg} \cdot \max_{\vec p \in \unset} \Pr[\pbrv_{\stg-1}(\vec p) \in N_{\stg-1}]\\
        &\geq \sum_{\stg \in O} c_{\stg} \cdot \Omega(\sqrt \epsilon)\Pr[\pbrv_{\stg-1}(\vec p^*) \in N_{\stg-1}] + \sum_{\stg \in [n] \setminus O} c_{\stg} \Pr[\pbrv_{\stg-1}(\vec p^*) \in N_{\stg-1}] \\
        &\geq \Omega(\sqrt\epsilon)\cdot \sum_{\stg=1}^{n} c_\stg \Pr[\pbrv_{\stg}(\vec p^*) \in N_{\stg-1}]\\
        &= \Omega(\sqrt\epsilon) \cdot C(\sol, \vec p^*).
    \end{align*}
    where \(\vec p^* = \arg\max_{p\in \unset} c(\sol, \vec p)\) denote the optimal solution to \gls{adv} for the sequence $\sol$. Above, the first inequality uses Corollary~\ref{cor:overlap-prob-bound} for the case where $\stg \in O$ and Lemma~\ref{lem:disjoint-optimal-above} for the case where $\stg \notin O$ (the first and second term respectively). 
    The second inequality uses the fact that $\vec p^*$ is a feasible solution to the maximization problem for all stages $\stg \in [n-1]$.
\end{proof}

The proof when $E_{n}$ is below $\tn_{n}$ is symmetrical. 
\begin{lemma} \label{lem:adv-below}
    If $E_{n}^h < \tn_{n}^\ell$, then setting \(\vec p = \vec \ell\) is a $\bigOh{1/\sqrt \epsilon}$ approximation for \gls{adv}.
\end{lemma}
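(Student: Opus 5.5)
The plan is to reduce this lemma to Lemma~\ref{lem:adv-disjoint} using the complementation map from Proposition~\ref{prop:n2-equiv}. I read the vector $\vec\ell$ in the statement as the lower-endpoint vector of the instance in which the hypothesis ``$E_n$ below $\tn_n$'' becomes ``$E_n$ above $\tn_n$''. Complementing every test ($\hat X_i = 1-X_i$) sends the interval $[\ell_i,r_i]$ to $[1-r_i,1-\ell_i]$ and the threshold $k$ to $\hat k = n-k+1$. It also reflects each non-stopping window $N_\stg$ to $\hat N_\stg = \stg - N_\stg$, and each expected-value window $E_\stg$ to $\hat E_\stg = \stg - E_\stg$. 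Since the map is a bijection on realizations and preserves which tests get performed, $C(\sol,\vec p) = \hat C(\sol, \vec 1 - \vec p)$ for every permutation $\sol$. Hence approximation ratios for $\adv$ transfer exactly. The hypothesis $E^h_n < \tn^\ell_n = k-1$ becomes $\hat E^\ell_n > n-k+1 = \hat k = \hat{\tn}^h_n$, which is the hypothesis of Lemma~\ref{lem:adv-disjoint} in the reflected instance, and its conclusion is that the lower-endpoint vector $\hat{\vec\ell}$ is an $O(1/\sqrt\epsilon)$-approximation.

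Since the reflected instance has $\hat k > n/2$, I would not invoke Lemma~\ref{lem:adv-disjoint} as a black box. Instead I would rerun its proof directly in the original coordinates, in three steps.

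\emph{Step 1 (mirror of Lemma~\ref{lem:below-once}).} If $E^h_n < k-1$, then no stage $\stg$ has $E_\stg$ above $\tn_\stg$. Indeed $\tn^h_\stg \in \{k-1,k\}$, while $E^\ell_\stg \le E^h_\stg \le E^h_n < k-1$. This is even easier than Lemma~\ref{lem:below-once}, since prefix sums are monotone and no $\epsilon$ slack is needed.

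\emph{Step 2 (partition the stages).} Split the stages into $O=\{\stg : E_{\stg-1}\cap \tn_{\stg-1}\neq\emptyset\}$ and the remaining stages, where $E_{\stg-1}$ lies below $\tn_{\stg-1}$. On the latter, Lemma~\ref{lem:disjoint-optimal-below} gives the exact stagewise optimum, namely the endpoint vector that pushes the prefix mean toward the window. On $O$, Corollary~\ref{cor:overlap-prob-bound} gives an $\Omega(\sqrt\epsilon)$ fraction of the stagewise optimum, provided the chosen vector has prefix mean inside $\tn_{\stg-1}$. Here I need that the chosen extreme vector's prefix path lies in $\tn_{\stg-1}$ exactly at the stages of $O$, which holds because in this case that path is the upper boundary of the $E$-windows.

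\emph{Step 3 (sum up).} Sum $c_\stg$ times these bounds and compare termwise with $\vec p^*$, exactly as in the displayed chain of inequalities in Lemma~\ref{lem:adv-disjoint}.

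The main obstacle is Step 2: checking that the vector named in the statement is the one the stagewise lemmas certify. Lemma~\ref{lem:disjoint-optimal-below} certifies the endpoint that brings the mean toward $\tn_\stg$; in the original coordinates this is the upper endpoints. By the exchange computation in that proof, lowering any coordinate below that endpoint can only decrease $\Pr[\pbrv_\stg\in N_\stg]$ in the below case. So the identification of ``$\vec\ell$'' must be made through the reflection, as the lower endpoints $\hat{\vec\ell}=\vec 1-\vec r$ of the complemented instance.

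If instead $\vec\ell$ is read literally as the original lower endpoints, I would try to bound the loss from $\vec r$ to $\vec\ell$ stage by stage using the normal approximation (Theorem~\ref{thm:normal-apx-to-pbd}). I expect this to fail in general: the gap $E^h_\stg - E^\ell_\stg$ can be of order $\stg$, which makes the loss exponentially small rather than $O(1/\sqrt\epsilon)$. That is why I would commit to the reflected reading above.
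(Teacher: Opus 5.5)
Your proposal is correct and is essentially the paper's argument: the paper also rules out any stage with $E_\stg$ above $\tn_\stg$ by monotonicity of the prefix sums, splits the stages into overlapping and below ones, and reruns the summation of Lemma~\ref{lem:adv-disjoint} using Lemma~\ref{lem:disjoint-optimal-below} and Corollary~\ref{cor:overlap-prob-bound}. The paper's proof actually concludes $C(\sol,\vec r)\ge\Omega(\sqrt\epsilon)\,C(\sol,\vec p^*)$, and Algorithm~\ref{alg:apx-adv} sets $\hat{\vec p}=\vec r$ in this case, so the $\vec\ell$ in the statement is a typo for $\vec r$; your reflection-based reading is just a roundabout way of arriving at that same vector.
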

\begin{proof}
    Instead of using Lemma~\ref{lem:below-once}, we use the fact that since $E^\ell_\stg$ is monotone increasing in $\stg$. If there is a stage $\stg\in [n]$ where $E_\stg^\ell > k-1$, then $E_{n}^\ell > k-1$.
    By our assumption that $E_n^h \leq \tn_n^\ell (=k-1)$, there are no stages $\stg \in [n]$ where $E_\stg$ is above $\tn_\stg$.
    We can hence partition the stages into $O = \{\stg \in [n]: E_{\stg -1}^h \geq \tn_{\stg-1}^\ell\}$, where the windows $E_{\stg-1}$ and $\tn_{\stg-1}$ overlaps, and remaining stages $[n] \setminus O$, where $E_{\stg-1}$ is below $\tn_{\stg-1}$.
    Repeating a similar analysis as Lemma~\ref{lem:adv-below} yields $C(\sol, \vec r) \geq \Omega(\sqrt\epsilon)\cdot C(\sol, \vec p^*)$ as desired.
\end{proof}

\paragraph{Overlapping Windows} 
We show that in the overlapping case, we can find a $ {\vec p} \in \unset$ such that the induced path $\Pi({\vec p})$ is always in $\tilde N$ at every stage $\stg$.
This vector $ {\vec p}$ is a $O(1/\sqrt\epsilon)$ approximation for \gls{adv}.
In the next proof, we use the following equivalence: for any stage $\stg \in [n]$,
\[E_\stg \cap \tn_\stg \neq \emptyset \iff (E_\stg^\ell\leq \tn_\stg^h) \land (E_\stg^h \geq \tn_\stg^\ell).\]

\begin{lemma} \label{lem:overlap-throughout}
    If $E_{n} \cap \tn_{n} \neq \emptyset$, then for all $\stg \in [n-1]$, $E_\stg \cap \tn_\stg \neq \emptyset$.
\end{lemma}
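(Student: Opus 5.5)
We prove the contrapositive: if some stage $\stg\in[n-1]$ has $E_\stg\cap\tn_\stg=\emptyset$, then $E_n\cap\tn_n=\emptyset$. By the stated equivalence, disjointness at stage $\stg$ means either $E^h_\stg<\tn^\ell_\stg$ ($E_\stg$ below $\tn_\stg$) or $E^\ell_\stg>\tn^h_\stg$ ($E_\stg$ above $\tn_\stg$). We treat the two cases separately and show that the same relation persists at stage $n$.

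\emph{Below case.} This is exactly Lemma~\ref{lem:below-once}: from $E^h_\stg<\tn^\ell_\stg$ it yields $E^h_n<\tn^\ell_n=k-1$, so $E_n$ is below $\tn_n$ and the last-stage windows are disjoint. The argument uses $r_i\le 1-\epsilon$: the upper path $E^h$ grows by at most $1-\epsilon<1$ per stage, while the lower boundary $\tn^\ell$ grows by exactly $1$ per stage once it is positive.

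\emph{Above case.} Suppose $E^\ell_\stg>\tn^h_\stg$. We use that $E^\ell_\stg=\sum_{i\le\stg}\ell_i$ is nondecreasing in $\stg$ (since $\ell_i\ge 0$) and that $\tn^h_\stg\in\{k-1,k\}$, with $\tn^h_\stg=k-1$ for $\stg<\frac n2$ and $\tn^h_\stg=k$ for $\stg\ge\frac n2$. If $\stg\ge\frac n2$, then $E^\ell_n\ge E^\ell_\stg>k=\tn^h_n$ directly. If $\stg<\frac n2$, we only know $E^\ell_\stg>k-1$, so we also need the increase $\sum_{i=\stg+1}^n\ell_i$.

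\emph{Main obstacle.} The delicate point is this jump of $\tn^h$ from $k-1$ to $k$ at $\stg=\frac n2$. First I would try to use the $\ge\frac n2$ remaining stages, each contributing $\ell_i\ge\epsilon$, so that $E^\ell_n>k-1+\epsilon\cdot\frac n2\ge k$ when $\epsilon n\ge 2$. This is consistent with the standing assumption $\epsilon=\Omega(1/n)$ in Theorem~\ref{thm:gen-cost}; indeed Lemma~\ref{lem:large-stg-lb} already uses $\frac n2>\frac{50}\epsilon$, i.e.\ $\epsilon n>100$. I would therefore invoke this same assumption explicitly. Combining the two cases, any disjoint stage $\stg<n$ forces $E_n$ to be strictly below or strictly above $\tn_n$, contradicting $E_n\cap\tn_n\neq\emptyset$.
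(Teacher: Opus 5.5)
Your proof is correct and is essentially the paper's argument, phrased as a contrapositive. The paper also gets $E^h_\stg\ge\tn^\ell_\stg$ at every stage from Lemma~\ref{lem:below-once}, and gets $E^\ell_\stg\le\tn^h_\stg$ from monotonicity of $E^\ell$ together with $\ell_i\ge\epsilon$ and $n\ge 2/\epsilon$ (yielding $E^\ell_{\floor{n/2}}\le k-1$), which bridges the jump of $\tn^h$ from $k-1$ to $k$ exactly as your explicit use of $\epsilon n\ge 2$ does.
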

\begin{proof}

    Since $E_n^\ell \leq \tn_n^h(=k)$ and $E_\stg^\ell$ is non-decreasing in $\ell$, we know for all stages $\stg \in [n]$, $E_{\stg}^\ell \leq k$ as well.
    Moreover, by our assumption that $p_i \geq \epsilon$ and $n \geq \frac 2\epsilon$, we have $E_{\floor{n/2}}^\ell \leq k-1$, so $E_\stg^\ell \leq k-1$ for all $\stg \in [\floor{\frac n2}]$.
    Thus, $E_\stg^\ell \leq \tn_\stg^h$ for all stages $\stg \in [n-1]$.

    Moreover, we know $E_n^h \geq N_n^\ell$.
    By Lemma~\ref{lem:below-once}, $E_\stg^h \geq \tn_\stg^\ell$ for all $\stg \in [n]$.
    Since $E_\stg^\ell \leq \tn_\stg^h$ and $E_\stg^h \geq \tn_\stg^\ell$ for all $\stg \in [n]$, we have $E_\stg\cap \tn_\stg \ne \emptyset$ for all $\stg \in [n]$.
    
\end{proof}

\begin{lemma} \label{lem:path-in-window-exists}
    Suppose \(E_{n} \cap \tn_{n} \neq \emptyset\).
    there is a setting \(\vec p \in \unset \) such that \(\Pi_\stg(\vec p) \in \tn_\stg\) for all \(\stg \in [n]\).
    Moreover, we can find such $\vec p$ in polynomial time.
\end{lemma}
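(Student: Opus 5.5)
We construct the path $\Pi(\vec p)$ greedily, stage by stage, keeping the invariant $\Pi_\stg(\vec p)\in \tn_\stg$. The relevant facts are these. The windows $\tn_\stg$ move slowly: the upper end $\tn^h_\stg$ is $k-1$ for $\stg<\frac n2$ and $k$ afterwards, so it is non-decreasing. The lower end $\tn^\ell_\stg=(\stg-n+k-1)^+$ (or $0$ for $\stg<\frac n2$) increases by at most one per stage. Each step of the path increases by some $p_\stg\in[\ell_\stg,r_\stg]\subseteq[\epsilon,1-\epsilon]$. The feasible set of prefix sums at stage $\stg$ is exactly the interval $E_\stg$. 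By Lemma~\ref{lem:overlap-throughout}, $E_\stg\cap\tn_\stg\neq\emptyset$ for every $\stg$.

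Rather than a naive greedy, which could move the path to a point from which the next window cannot be reached, I would use a backward-feasibility argument. For each stage define the reachable interval $R_\stg=\{\Pi_\stg(\vec p):\vec p\in\unset,\ \Pi_j(\vec p)\in\tn_j\ \forall j\le\stg\}$. The sets $R_\stg$ are intervals, since each is the projection of a convex polytope (intersection of a box with linear constraints on prefix sums). They satisfy
\[ R_\stg = \bigl(R_{\stg-1}+[\ell_\stg,r_\stg]\bigr)\cap \tn_\stg, \qquad R_0=\{0\}. \]
The claim reduces to showing $R_\stg\neq\emptyset$ for all $\stg$, by induction. Given $R_{\stg-1}=[a,b]\neq\emptyset$, the next set $R_{\stg-1}+[\ell_\stg,r_\stg]=[a+\ell_\stg,b+r_\stg]$ must meet $\tn_\stg$. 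This requires two inequalities. First, $a+\ell_\stg\le\tn^h_\stg$. Here I would argue that $a=\max(E^\ell_{\stg-1},\text{lower-window constraints})$: either $a=E^\ell_{\stg-1}$, and then $E^\ell_\stg\le\tn^h_\stg$ by Lemma~\ref{lem:overlap-throughout}, or $a$ was forced up to some $\tn^\ell_j$ plus later minimum increments. Second, $b+r_\stg\ge\tn^\ell_\stg$, which is symmetric: either $b=E^h_{\stg-1}$, or $b$ was clipped at some $\tn^h_j$ and then grows by at least $\epsilon$ per stage.

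The main obstacle is exactly this clipping case. If the upper reachable end was capped by $\tn^h_j\ge k-1$, we need $\tn^h_j+\sum_{i=j+1}^{\stg} r_i\ge \tn^\ell_\stg$, and this holds because $\tn^\ell_\stg\le k-1\le\tn^h_j$. Similarly, if the lower end was raised to $\tn^\ell_j$, then adding $\sum_{i=j+1}^\stg \ell_i\le(1-\epsilon)(\stg-j)$ gives at most $\tn^\ell_j+(\stg-j)-\epsilon(\stg-j)$. We must show this is at most $\tn^h_\stg$, using $\tn^\ell_j\le \tn^h_\stg-1$ when $\tn^\ell_j>0$ together with the slope bound. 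If this is tight at the switch point $\stg=\frac n2$ (where $\tn^h$ jumps from $k-1$ to $k$ and $\tn^\ell$ starts increasing), I would handle it using the widened windows and the assumption $k\le\frac n2$, which make $\tn^\ell_j\le\tn^h_j-1$.

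Once nonemptiness is established, the construction is polynomial. Compute the intervals $R_\stg$ forward in $O(n)$ time. Then pick $\Pi_n\in R_n$ and trace back: choose $\Pi_{\stg-1}\in R_{\stg-1}\cap[\Pi_\stg-r_\stg,\Pi_\stg-\ell_\stg]$, which is nonempty by the recurrence, and set $p_\stg=\Pi_\stg-\Pi_{\stg-1}$. Alternatively, the whole problem is a linear feasibility program in $\vec p$ with $O(n)$ constraints, solvable in polynomial time.
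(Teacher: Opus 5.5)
Your argument is correct, and it takes a different and more careful route than the paper. The paper runs a backward induction on a target $T\in E_\nu\cap\tilde N_\nu$. It only shows that some $p_\nu\in[\ell_\nu,r_\nu]$ gives $T-p_\nu\in E_{\nu-1}$, and then appeals to the hypothesis at stage $\nu-1$. That step never checks $T-p_\nu\in\tilde N_{\nu-1}$. As written, it therefore certifies only that every point of $E_\nu$ is an attainable sum, not that one vector keeps all prefix sums in their windows. Moreover, the version of its ``stronger statement'' that does carry the window constraints fails in general. At $\nu=\lceil n/2\rceil$ with $T=k\in E_\nu$, one would need $\Pi_{\nu-1}\le k-1$, hence $p_\nu\ge 1>1-\epsilon$.

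Your forward recursion $R_\nu=(R_{\nu-1}+[\ell_\nu,r_\nu])\cap\tilde N_\nu$, $R_0=\{0\}$, builds the prefix constraints in. So nonemptiness of $R_n$ is exactly the lemma, and backtracking gives the polynomial-time construction directly. The price is your clipping analysis. Per-stage overlaps alone, which are all the paper's argument uses, do not force a consistent path, so you need the shape of the windows beyond Lemma~\ref{lem:overlap-throughout}:
the upper end of every window is at least $k-1\ge\tilde N^\ell_\nu$, and the lower end rises with slope one while the path rises by at most $1-\epsilon$ per stage. Like the paper, you inherit the assumption $n\ge 2/\epsilon$ through Lemma~\ref{lem:overlap-throughout}.

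One step should be stated precisely. In the raised-lower-end case, the bound $\tilde N^\ell_j\le\tilde N^h_\nu-1$ plus an increment of $(1-\epsilon)(\nu-j)$ does not by itself give $a_{\nu-1}+\ell_\nu\le\tilde N^h_\nu$. Use instead that a raise at stage $j$ means $\tilde N^\ell_j=j-n+k-1>0$, so
\[
a_{\nu-1}+\ell_\nu\;\le\;(j-n+k-1)+(1-\epsilon)(\nu-j)\;\le\;\nu-n+k-1\;\le\;k-1\;\le\;\tilde N^h_\nu .
\]
Such a raise can only occur at $j>n-k+1\ge n/2$. So the switch point at $n/2$ that you were worried about never interacts with this case.
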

\begin{proof}
    We prove the lemma by proving a stronger statement:
    for every stage $\stg\in [n-1]$, for every target $T \in E_\stg\cap \tn_\stg$, there is a probability vector $\vec{p} \in \reals^\stg$, with $p_i \in [\ell_i, r_i]$ for every $i\in [\stg]$ such that $\sum_{i=1}^\stg p_i = T$.
    By Lemma~\ref{lem:overlap-throughout}, $E_{n} \cap \tn_{n} \neq \emptyset$, implies that $E_{\stg} \cap \tn_{\stg} \neq \emptyset$ for all $\stg \in [n]$, so $T \in E_\stg \cap \tn_\stg$ exists in every stage.

    We prove the statement by induction on the stage.
    In the base case, for every target $T \in E_1\cap \tn_1$, the setting $p_1 = T$ is feasible since $T\in E_1$.
    For the induction step, we assume that the proposition holds for some stage $\stg - 1$ for  $\stg \in \{2, 3, \dots, n\}$.
    Consider some target $T \in E_\stg \cap \tn_\stg$.
    It suffices to show the claim:
    \begin{equation} \label{eq:ind-claim}
        \text{There exists some $p_\stg\in [\ell_\stg, r_\stg]$ such that $T' = T - p_\stg \in E_{\stg-1}$.}
    \end{equation}
    Then, by our inductive hypothesis, we can reach $T'$ using some $\vec{p} \in \reals^{\stg-1}$ with $p_i \in [\ell_i, r_i]$ for every $i\in [\stg-1]$ such that
  $\sum_{i=1}^{\stg-1}{p}_i = T'$.

    To prove \eqref{eq:ind-claim}, it suffices to show that the following set of linear inequalities has a solution $p_\stg$:
    \begin{gather*}
        \ell_\stg \leq p_\stg \leq r_\stg\\
        E_{\stg-1}^\ell \leq T-p_\stg \leq E_{\stg-1}^h \label{eq:prev-in-window}
    \end{gather*}
    This is equivalent to verifying that 
    \begin{equation*}
        \max\{T - E_{\stg-1}^h, \ell_\stg\} \leq \min\{T-E_{\stg-1}^\ell, r_\stg\}.
    \end{equation*}

    Indeed, we have $T-E_{\stg-1}^h = T - (E_\stg^h-r_i) = (T-E_\stg^h)+r_i \leq r_i$, since $(T-E_\stg^h)\leq 0$ and $T-E_{\stg-1}^\ell = T - (E_\stg^\ell-l_i) = (T-E_\stg^h)+\ell_i \geq \ell_i$, since $(T-E_\stg^\ell)\geq 0$.

Moreover, a feasible value for $p_\stg$ in \eqref{eq:ind-claim} can easily be obtained in polynomial time, using the above inequality (where the values of $T, E_{\stg-1}^h, E_{\stg-1}^\ell, \ell_\stg$ and $r_\stg$ are fixed).  
\end{proof}

\begin{lemma}\label{lem:adv-overlap}
    Suppose \(E_{n} \cap \tn_{n} \neq \emptyset\), there is a \(O(1/\sqrt\epsilon)\)-approximation for \gls{adv}.
\end{lemma}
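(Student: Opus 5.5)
The plan is to take $\hat{\vec p}$ from Lemma~\ref{lem:path-in-window-exists}, which satisfies $\Pi_\stg(\hat{\vec p}) \in \tn_\stg$ for every stage $\stg\in[n]$, and show that this single vector is simultaneously an $\Omega(\sqrt\epsilon)$-approximate maximizer of every stagewise non-completion probability. Lemma~\ref{lem:overlap-throughout} guarantees that the windows overlap at every stage, so the path-existence lemma applies, and $\hat{\vec p}$ is computable in polynomial time. It is the vector returned in line~\ref{line:window-straddle} of Algorithm~\ref{alg:apx-adv}.

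The key step is to apply Corollary~\ref{cor:overlap-prob-bound} stage by stage. For each $\stg\in[n]$, the prefix $(\hat p_1,\dots,\hat p_{\stg-1})$ has mean $\Pi_{\stg-1}(\hat{\vec p})\in\tn_{\stg-1}$. Lemmas~\ref{lem:alpha-lb} and~\ref{lem:beta-ub} therefore give
\[\Pr[\pbrv_{\stg-1}(\hat{\vec p})\in N_{\stg-1}] \ge \alpha_{\stg-1} \ge \Omega(\sqrt\epsilon)\,\beta_{\stg-1} \ge \Omega(\sqrt\epsilon)\Pr[\pbrv_{\stg-1}(\vec p^*)\in N_{\stg-1}],\]
where $\vec p^*$ is the optimal adversary vector for the whole cost. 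This works because $\vec p^*$ is feasible for each single-stage problem, so its stagewise probability is at most $\beta_{\stg-1}$. Multiplying by $c_\stg\ge 0$ and summing via \eqref{eq:cost-restate} yields $C(\sol,\hat{\vec p})\ge\Omega(\sqrt\epsilon)\,C(\sol,\vec p^*)$, mirroring the computation in Lemma~\ref{lem:adv-disjoint} but with every stage in the set $O$.

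The main obstacle is checking that the ratio $\alpha_\stg/\beta_\stg=\Omega(\sqrt\epsilon)$ holds uniformly in $\stg$, which is what Corollary~\ref{cor:overlap-prob-bound} asserts. There are two cases.
\begin{itemize}
\item If $3\sqrt\stg\le\lceil w/2\rceil$, then $\alpha=1/4$ and $\beta\le 1$, so the ratio is at least $1/4$.
\item Otherwise $\alpha=\frac{w}{24\sqrt\stg}$ and $\beta\le\frac{2\sqrt2\,w}{\sqrt{\epsilon\stg}}$. The factors $w$ and $\sqrt\stg$ cancel, leaving a ratio of order $\sqrt\epsilon$.
\end{itemize}

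Two edge cases need care. First, small stages and stage $0$ (where the probability is $1$ trivially). For $\stg<n/2$, Lemma~\ref{lem:small-stage-lb} gives the constant $1/2$ directly. Second, Lemma~\ref{lem:large-stg-lb} applies Lemma~\ref{lem:mass-ub}, which needs $\stg\ge 50/\epsilon$. I would use $\stg\ge n/2$ together with the assumption $\epsilon=\Omega(1/n)$, taking the constant large enough to ensure this.
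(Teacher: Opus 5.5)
Your proposal is correct and matches the paper's proof: take the in-window vector $\hat{\vec p}$ from Lemma~\ref{lem:path-in-window-exists}, use Corollary~\ref{cor:overlap-prob-bound} at each stage to compare against the feasible $\vec p^*$, and sum with the nonnegative costs $c_\stg$. Your explicit check that $\alpha_\stg/\beta_\stg=\Omega(\sqrt\epsilon)$ holds uniformly, together with the stage-$0$ case and the $\stg\ge 50/\epsilon$ requirement (via $\epsilon=\Omega(1/n)$), fills in details the paper leaves implicit in that corollary.
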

\begin{proof}
    Define $\vec {\hat p}$ such that $\vec {\hat p} \in \unset$ and $\Pi_\stg(\vec {\hat p}) \in \tn_\stg$ for all $\stg \in [n]$.
    We know that such a setting is feasible by Lemma~\ref{lem:path-in-window-exists}. Let
    $\vec p^* = \arg\max_{\vec p \in \unset} c(\sol, \vec p)$,
    \begin{align*}
        C(\sol, \vec {\hat p}) &=
        \sum_{\stg=1}^n c_\stg \Pr[\pbrv_{\stg-1}(\vec {\hat p}) \in N_{\stg-1}] \\
        &\ge \sum_{\stg=1}^n c_\stg\cdot\Omega(\sqrt \epsilon)  \max_{\vec p \in \unset} \Pr[\pbrv_{\stg-1}(\vec {p}) \in N_{\stg-1}] \\
        &\ge \Omega(\sqrt \epsilon)\cdot \sum_{\stg=1}^n c_\stg \Pr[\pbrv_{\stg-1}(\vec p^*) \in N_{\stg-1}] \\
        &\ge \Omega(\sqrt\epsilon) \cdot C(\sol, \vec p^*),
    \end{align*}
    where the first inequality uses Corollary~\ref{cor:overlap-prob-bound} and the second follows since $\vec p^*$ is a feasible solution to the maximization problem $\max_{\vec p \in \unset} \Pr[\pbrv_{\stg-1}(\vec p) \in N_{\stg-1}]$ for every $\stg\in [n]$.
\end{proof}
\subsection{Solving the distributionally robust problem} \label{sec:gen-rob-prob}
Having approximated \gls{adv}, we now  obtain an approximation algorithm for the distributionally robust \kofn sequential testing problem with general cost.

The algorithm is based on the approximate \gls{adv} algorithm from Section~\ref{sec:adv-apx}.
For a given test sequence $\sol$, the approximate adversary of Section~\ref{sec:adv-apx} looks only at the relative position of the last windows $E_{n}(\sol)$ and $\tn_{n}$ to determine the setting of $\vec p$.
However, the crucial property for \gls{drst} is that the relative position of $E_n(\sol)$ and $\tn_n$ is independent of the permutation $\sol$:

\begin{lemma} \label{lem:same-end-window}
    The last stage window $E_n(\sol)$ is the same regardless of the permutation $\sol$
\end{lemma}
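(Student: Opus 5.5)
The plan is to unfold the definition of the expected-value window at the last stage and observe that it involves sums over all $n$ tests. For a permutation $\sol=\langle \sol_1,\dots,\sol_n\rangle$, the windows are defined with respect to the renumbered tests. Hence
\[E_n^\ell(\sol)=\sum_{i=1}^n \ell_{\sol_i}\quad\text{and}\quad E_n^h(\sol)=\sum_{i=1}^n r_{\sol_i}.\]
Since $\sol$ is a bijection of $[n]$, reindexing these finite sums gives $\sum_{i=1}^n \ell_{\sol_i}=\sum_{j=1}^n \ell_j$, and similarly for the $r$'s. Therefore $E_n(\sol)=[\sum_{j}\ell_j,\sum_j r_j]$ for every $\sol$.

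Next I would note that the comparison window $\tn_n=[k-1,k]$ depends only on $n$ and $k$, as computed in Section~\ref{sec:adv-apx}. It follows that the relative position of $E_n(\sol)$ and $\tn_n$ (above, below, or overlapping) is the same for all permutations. This is the consequence that will be used afterward for \gls{drst}.

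There is no real obstacle here. The only care needed is to be explicit that the boundaries $E^\ell_\stg$ and $E^h_\stg$ were defined after renumbering by $\sol$, so that they genuinely depend on $\sol$ for $\stg<n$ but not for $\stg=n$, since the last-stage prefix is the whole set $[n]$.
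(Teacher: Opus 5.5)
Your proposal is correct and follows the paper's own proof: at stage $n$ the sums $\sum_{i=1}^n \ell_{\sol_i}$ and $\sum_{i=1}^n r_{\sol_i}$ run over all tests, so reindexing by the bijection $\sol$ removes any dependence on the permutation. Your extra observation that $\tn_n=[k-1,k]$ depends only on $k$ is how the paper then uses the lemma for the distributionally robust algorithm.
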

\begin{proof}
    For the last stage $n$, we have  $E_{n}^\ell(\sol) = \sum_{i=1}^{n} \ell_{\sol_i}=\sum_{i=1}^n \ell_i$ as all tests are included in the summation (for each $\sol$). Similarly,  $E_{n}^h(\sol) = \sum_{i=1}^{n} r_{i}$. 
\end{proof}

Thus, we can write $E_n (\sol)$ as simply $E_n$.
When $E_{n}$ lies above $\tn_{n}$, we have shown in Lemma~\ref{lem:adv-disjoint} that an approximate adversary sets $\vec p = \vec \ell$ regardless of the permutation $\sol$. Thus, we can use any \(\rho\mkern1mu\)-approximation to the classical sequential testing problem with $\vec p = \vec \ell$ to obtain a permutation for \gls{drst}.
Symmetrically, if $E_{n}$ lies below $\tn_{n}$, then the (approximate) adversary sets $\vec p = \vec r$, and  again we obtain a permutation for \gls{drst} by solving an instance of classical sequential testing.
Finally, when $E_{n}$ and $\tn_{n}$ overlap, we show that sorting by decreasing cost is an approximate solution.
Let $\sst$ denote any $\rho\mkern1mu$-approximate algorithm to the (classical) $k$-of-$n$  testing problem, with known probabilities. The algorithm $\sst$ takes as input the cost-vector $\vec c$ and probability-vector   $\vec p$, and returns a $\rho\mkern1mu$-approximate solution $\asol$.
The pseudocode is presented in Algorithm~\ref{alg:drst}.

\begin{figure}
     \centering
     \begin{subfigure}[b]{0.32\textwidth}
         \centering
        \includegraphics[width=1\linewidth,page=1]{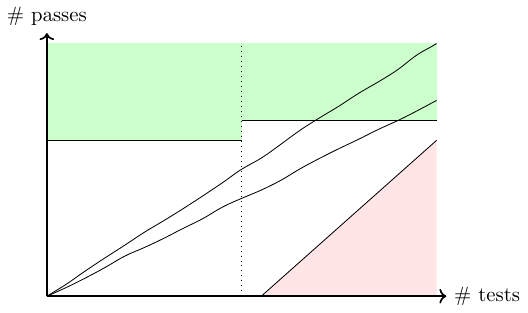}
         \caption{$E_n^\ell > \tn_n^h$}
         \label{fig:case-above}
     \end{subfigure}
     \hfill
     \begin{subfigure}[b]{0.32\textwidth}
         \centering
        \includegraphics[width=1\linewidth,page=2]{figures/drst-cases.pdf}
         \caption{$E_n^h < \tn_n^\ell$}
         \label{fig:case-below}
     \end{subfigure}
     \hfill
     \begin{subfigure}[b]{0.32\textwidth}
         \centering
        \includegraphics[width=1\linewidth,page=3]{figures/drst-cases.pdf}
         \caption{$E_n\cap \tn_n \neq \emptyset$}
         \label{fig:}
     \end{subfigure}
        \caption{Cases on $E_n$ and $\tn_n$}
        \label{fig:window-cases}
\end{figure}

\begin{algorithm}
\caption{General Cost Distributionally Robust Algorithm} \label{alg:drst}
\begin{algorithmic}[1]
\State Compute $E^\ell_{n}$ and $E^h_{n}$.
\State Compute the approximate solution $\asol$ based on the relative position of $E_{n}$ and $\tn_{n}$ 
\If{$E^\ell_n >  \tn^h_n (=k)$} \Comment{$E$ lies above $\tn$}
    \State $\asol = \sst(\vec c, \vec \ell)$.
\ElsIf{$E^h_n < \tn^\ell_n (=k-1)$} \Comment{$E$ lies below $\tn$}
    \State $\asol = \sst(\vec c, \vec r)$.
\Else \Comment{$E$ and $\tn$ overlaps.}
    \State $\asol$ is sorted by increasing order of testing cost $c$.
    i.e., $c_{\asol_1} \leq c_{\asol_2} \leq \dots \leq c_{\asol_n}$.
\EndIf
\end{algorithmic}
\end{algorithm}

Let $\asol$ be the solution of $\alg$.
Recall that $\asol$ is a permutation of $[n]$ specifying the order in which tests should be conducted.
We continue our analysis by cases depending on how the last-stage expected value window  $E_n$ overlaps with the non-stopping  window $\tn_n = [k-1, k]$.

\paragraph{Windows disjoint.} We first consider the case that $E_n\cap \tn_n=\emptyset$. 
Recall that $\adv(\asol) = \max_{\vec{p}\in \unset} C(\asol, \vec{p})$ is the optimal value achieved by \gls{adv} given a solution $\asol$.

\begin{lemma} \label{lem:drst-disjoint}
    Assuming a $\rho$-approximation for the classical \kofn sequential testing problem, there is a $O(\rho/\sqrt\epsilon)$ approximation for \gls{drst} when $E_n^\ell > \tn_n^h$.
\end{lemma}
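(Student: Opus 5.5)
Let $\asol = \sst(\vec c, \vec \ell)$ be the output of Algorithm~\ref{alg:drst}, and let $\osol^*$ be an optimal permutation for \gls{drst}, so that $\opt = \adv(\osol^*)$. The argument is a sandwich: we bound $\adv(\asol)$ above by the cost of $\asol$ under $\vec\ell$, and then bound that cost above by $\opt$.

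The key fact is that the hypothesis $E_n^\ell > \tn_n^h$ does not depend on the permutation, by Lemma~\ref{lem:same-end-window}. Hence Lemma~\ref{lem:adv-disjoint} applies to every permutation $\sol$ at once, including $\asol$. For each such $\sol$ it gives
\[ C(\sol, \vec \ell) \;\ge\; \Omega(\sqrt\epsilon)\cdot \adv(\sol). \]
Equivalently, $\adv(\sol) \le O(1/\sqrt\epsilon)\, C(\sol,\vec\ell)$.

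The chain of inequalities is as follows.
\begin{align*}
\adv(\asol) &\le O(1/\sqrt\epsilon)\, C(\asol,\vec\ell) && \text{(Lemma~\ref{lem:adv-disjoint} applied to $\asol$)}\\
&\le O(1/\sqrt\epsilon)\,\rho\, \min_{\sol} C(\sol,\vec\ell) && \text{($\sst$ is a $\rho$-approximation on probabilities $\vec\ell$)}\\
&\le O(\rho/\sqrt\epsilon)\, C(\osol^*,\vec\ell) \;\le\; O(\rho/\sqrt\epsilon)\, \adv(\osol^*) && \text{(since $\vec\ell\in\unset$)}.
\end{align*}
The last step uses only the trivial direction $C(\osol^*,\vec\ell)\le \max_{\vec p\in\unset} C(\osol^*,\vec p)$. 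The right-hand side is $O(\rho/\sqrt\epsilon)\cdot\opt$, which is the claimed guarantee.

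The only delicate point is checking that Lemma~\ref{lem:adv-disjoint} is truly uniform over permutations. Its proof renumbers tests so that $\sol=\langle 1,\dots,n\rangle$. It then uses three ingredients, each stated per stage and for whatever ordering is given:
\begin{itemize}[leftmargin=*]
\item Lemma~\ref{lem:below-once}, which needs only the last-stage window and $\epsilon$-boundedness;
\item Corollary~\ref{cor:overlap-prob-bound}, for the stages where the windows overlap;
\item Lemma~\ref{lem:disjoint-optimal-above}, for the stages where $E$ lies above $\tn$.
\end{itemize}
None of these uses any special property of the order. I would also confirm that the hidden constant in $\Omega(\sqrt\epsilon)$ comes only from $\alpha_\stg/\beta_\stg$, which is independent of $\sol$, so the same constant serves for $\asol$.
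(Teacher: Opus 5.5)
Your proof is correct and is essentially the paper's own argument. The paper uses the same chain $\adv(\asol)\le O(1/\sqrt\epsilon)\,C(\asol,\vec\ell)\le O(\rho/\sqrt\epsilon)\,C(\osol,\vec\ell)\le O(\rho/\sqrt\epsilon)\,\adv(\osol)$, invoking Lemma~\ref{lem:adv-disjoint} and the $\rho$-approximation of $\sst(\vec c,\vec\ell)$. Your added check that the hypothesis and the hidden constant do not depend on the permutation (via Lemma~\ref{lem:same-end-window}) is exactly the justification the paper leaves implicit.
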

\begin{proof}
Recall that our solution $\asol$ is a $\rho\mkern1mu$-approximate sequence of the \kofn sequential testing problem $\sst(\vec c, \vec \ell)$.
    Let $\osol$ denote the optimal solution for \gls{drst}. Then,
    \begin{align*}
        \adv(\asol) &= \max_{\vec p \in \unset} C(\asol, \vec p)  \,\,\leq\,\, \bigOh{\frac1{\sqrt \epsilon}} \cdot C(\asol, \vec \ell) \leq \bigOh{\frac1{\sqrt \epsilon}} \cdot  \rho\cdot C(\sigma,\vec \ell)  \,\leq\, \bigOh{\frac\rho{\sqrt \epsilon}}  \cdot  \adv(\osol)\notag. 
    \end{align*}
    The first inequality uses the adversary's approximation proved in Lemma~\ref{lem:adv-disjoint}. The second inequality follows since $\asol$ is a $\rho\mkern1mu$-approximation for \kofn testing with probabilities $\vec \ell$ and $\sigma$ is any other solution. 

\end{proof}

A symmetrical proof would give us
\begin{lemma} \label{lem:drst-disjoint-below}
    Assuming a $\rho$-approximation for the classical \kofn sequential testing problem, there is a $O(\rho/\sqrt\epsilon)$ approximation for \gls{drst} when $E_n^h < \tn_n^\ell$.
\end{lemma}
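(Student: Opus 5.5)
The argument mirrors the proof of Lemma~\ref{lem:drst-disjoint}, with $\vec r$ in place of $\vec \ell$. In this case $E_n^h < \tn_n^\ell$, and Algorithm~\ref{alg:drst} returns $\asol = \sst(\vec c, \vec r)$, a $\rho$-approximate permutation for classical \kofn testing with known probabilities $\vec r$. Let $\osol$ denote an optimal solution for \gls{drst}.

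The first step is to confirm that the adversary guarantee applies to $\asol$. By Lemma~\ref{lem:same-end-window}, the window $E_n(\asol)$ equals $E_n$ for every permutation. Hence the hypothesis $E_n^h < \tn_n^\ell$ holds for the fixed sequence $\asol$, and the below-case adversary result (Lemma~\ref{lem:adv-below}) applies. Its proof shows that setting $\vec p = \vec r$ gives $C(\asol, \vec r) \ge \Omega(\sqrt\epsilon)\cdot C(\asol, \vec p^*)$, where $\vec p^*$ is the adversary's optimum. Note that the statement of that lemma says $\vec \ell$, but its proof concludes with $\vec r$, and $\vec r$ is what we need here.

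The core of this step deserves a check, since the rest is routine. There are no stages where $E_\stg$ lies above $\tn_\stg$: by monotonicity of $E^\ell_\stg$ in $\stg$, such a stage would force $E_n^\ell > k-1$, contradicting $E_n^h < k-1$. Every remaining stage therefore falls into one of two cases. Either $E_{\stg-1}$ is below $\tn_{\stg-1}$, where Lemma~\ref{lem:disjoint-optimal-below} says $\vec r$ is exactly optimal. Or the windows overlap, where $\Pi_{\stg-1}(\vec r) = E^h_{\stg-1} \in \tn_{\stg-1}$ and Corollary~\ref{cor:overlap-prob-bound} gives an $\Omega(\sqrt\epsilon)$ fraction of the stagewise optimum. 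Summing over stages with weights $c_\stg$ yields the stated bound.

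With that in hand, the chain is
\[
\adv(\asol) = \max_{\vec p\in\unset} C(\asol,\vec p) \le \bigOh{\tfrac{1}{\sqrt\epsilon}}\, C(\asol,\vec r) \le \bigOh{\tfrac{1}{\sqrt\epsilon}}\,\rho\, C(\osol,\vec r) \le \bigOh{\tfrac{\rho}{\sqrt\epsilon}}\,\adv(\osol).
\]
The first inequality is the adversary bound above. The second holds because $\asol$ is $\rho$-approximate for the classical instance with probabilities $\vec r$, and $\osol$ is one feasible permutation for that instance. The third uses $\vec r \in \unset$, so $C(\osol, \vec r) \le \adv(\osol)$.
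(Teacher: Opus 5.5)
Your proposal is correct and matches the paper, which proves this lemma only by noting that it is symmetric to Lemma~\ref{lem:drst-disjoint}; your chain $\adv(\asol)\le O(1/\sqrt\epsilon)\,C(\asol,\vec r)\le O(\rho/\sqrt\epsilon)\,C(\osol,\vec r)\le O(\rho/\sqrt\epsilon)\,\adv(\osol)$ is exactly that symmetric argument. You are also right that Lemma~\ref{lem:adv-below} should say $\vec p=\vec r$. The one step you leave implicit, $E^h_{\stg-1}\le E^h_n<k-1\le \tn^h_{\stg-1}$ in the overlap case, follows from the monotonicity of $E^h_\stg$ in $\stg$.
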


\paragraph{Windows overlap.} We now consider the case that $E_n\cap \tn_n\ne \emptyset$. 
Here, we will show that it suffices to perform tests in increasing order of costs. 
The proof uses the lower and upper bounds ($\alpha_\stg$ and $\beta_\stg$ from Definition~\ref{def:alpha-beta-bounds}) of the non-completion probabilities to bound the expected cost.
One crucial property is that $\beta_\stg$  is non-increasing. 
\begin{lemma} \label{lem:beta-noninc}
    The function $\beta_\stg$ is non-increasing.
\end{lemma}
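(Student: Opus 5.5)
We need to show $\beta_\stg=\min\{2\sqrt2\,w_\stg/\sqrt{\epsilon\stg},\,1\}$ is non-increasing in $\stg$, where $w_\stg=\min(n-\stg,k-1)$. Since $x\mapsto\min\{x,1\}$ is non-decreasing, it suffices to show that the inner quantity $g(\stg)=w_\stg/\sqrt{\stg}$ is non-increasing in $\stg$. The constant factor $2\sqrt2/\sqrt\epsilon$ is positive and independent of $\stg$, so it does not affect monotonicity.

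The numerator $w_\stg=\min(n-\stg,k-1)$ is a minimum of a constant and a non-increasing function of $\stg$, hence it is non-increasing. It is also non-negative: $n-\stg\ge 0$ for $\stg\le n$, and $k\ge 1$. The denominator $\sqrt\stg$ is positive and increasing on $\stg\ge1$. For consecutive stages $\stg<\stg'$ we therefore get
\[
\frac{w_{\stg'}}{\sqrt{\stg'}}\le\frac{w_\stg}{\sqrt{\stg'}}\le\frac{w_\stg}{\sqrt{\stg}},
\]
where the first step uses $w_{\stg'}\le w_\stg$ and the second uses $w_\stg\ge0$ together with $\sqrt{\stg'}\ge\sqrt\stg$. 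Composing with the monotone map $x\mapsto\min\{x,1\}$ then gives $\beta_{\stg'}\le\beta_\stg$.

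The only point needing care is the non-negativity of $w_\stg$, so that dividing by a larger denominator cannot increase the ratio. This holds for every stage $\stg\in[n]$, including $\stg=n$, where $w_n=0$. The same argument works whether $\beta_\stg$ is indexed by $\stg$ or by $\stg-1$ as it appears in the cost expression. I do not expect a genuine obstacle here.
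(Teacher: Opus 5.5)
Your proof is correct and matches the paper's argument: both reduce to showing that $g(\nu)=\min\{k-1,\,n-\nu\}/\sqrt{\epsilon\nu}$ is non-increasing, since its numerator is non-increasing and its denominator is increasing. Your explicit check that $w_\nu\ge 0$, which the paper leaves implicit, is exactly the detail that makes the ratio step rigorous.
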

\begin{proof}
Recall that $\beta_\stg=\min\{ 2\sqrt 2\cdot g(\stg) ,1\}$ where $g(\stg)=\frac{\min\{k-1,\, n - \stg \}}{\sqrt{\epsilon \stg}}$. We now show  that $g(\stg)$ is non-increasing, which suffices for the lemma. Indeed, the numerator of $g$ is a non-increasing function $\min\{k-1,\, n - \stg\}$ and the denominator is an increasing function $\sqrt{\epsilon \stg}$. 
\end{proof}

\begin{lemma} \label{lem:drst-overlap-alg}
    The permutation \( \asol \) that sorts by increasing cost, i.e., \( c_{\asol_1} \leq c_{\asol_2} \leq \dots \leq c_{\asol_n}\) is a \( \bigOh{1/\sqrt \epsilon} \) approximation for the case where \( E_n\) and \( \tn_n \) overlap.
\end{lemma}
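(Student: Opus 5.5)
Our approach is to sandwich both $\adv(\asol)$ and $\adv(\osol)$ between stagewise bounds that depend only on the stage index. The index-dependent bounds are $\alpha_\stg$ and $\beta_\stg$ from Definition~\ref{def:alpha-beta-bounds}. Here $\osol$ is an optimal \gls{drst} solution.

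\emph{Upper bound for $\asol$.} By Lemma~\ref{lem:beta-ub}, for any permutation $\sol$ and any $\vec p\in\unset$ we have $\Pr[\pbrv_{\stg-1}(\vec p)\in N_{\stg-1}]\le \beta_{\stg-1}$; note $\beta_{\stg-1}$ does not depend on $\sol$, since $w$ is a function of $\stg$ only. Hence
\[
\adv(\asol)\le \sum_{\stg=1}^n c_{\asol_\stg}\,\beta_{\stg-1},
\]
where for $\stg=1$ we use the trivial value $1$.

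\emph{Lower bound for an arbitrary $\sol$.} Since $E_n\cap\tn_n\ne\emptyset$, and by Lemma~\ref{lem:same-end-window} this holds for every permutation, Lemma~\ref{lem:path-in-window-exists} gives a $\hat{\vec p}\in\unset$ with $\Pi_\stg(\hat{\vec p})\in\tn_\stg(\sol)$ for all $\stg$. Lemma~\ref{lem:alpha-lb} then yields
\[
\adv(\sol)\ge C(\sol,\hat{\vec p})\ge \sum_{\stg=1}^n c_{\sol_\stg}\,\alpha_{\stg-1}.
\]

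\emph{Comparing the two.} We first compare $\alpha$ with $\beta$ pointwise. When $3\sqrt\stg\le\ceil{w/2}$, we have $\alpha_\stg=\tfrac14\ge \tfrac14\beta_\stg$. Otherwise, $\alpha_\stg=\frac{w}{24\sqrt\stg}$, while $\beta_\stg\le \frac{2\sqrt2\,w}{\sqrt{\epsilon\stg}}$, so $\alpha_\stg\ge \frac{\sqrt\epsilon}{48\sqrt2}\beta_\stg$. In both cases $\alpha_\stg=\Omega(\sqrt\epsilon)\,\beta_\stg$, and therefore
\[
\adv(\osol)\ge \Omega(\sqrt\epsilon)\sum_{\stg} c_{\osol_\stg}\beta_{\stg-1}.
\]
Next, by Lemma~\ref{lem:beta-noninc} the weights $\beta_{\stg-1}$ are non-increasing in $\stg$. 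By the rearrangement inequality, $\sum_\stg c_{\sol_\stg}\beta_{\stg-1}$ is minimized over permutations by placing the costs in increasing order, which is exactly $\asol$. Combining,
\[
\adv(\asol)\le \sum_\stg c_{\asol_\stg}\beta_{\stg-1}\le \sum_\stg c_{\osol_\stg}\beta_{\stg-1}\le O(1/\sqrt\epsilon)\,\adv(\osol).
\]

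\emph{Expected main obstacle.} The delicate step is checking that the hypotheses of Lemma~\ref{lem:alpha-lb} really hold uniformly in $\sol$. That lemma relies on $\stg\ge 50/\epsilon$ at stages $\stg\ge n/2$, which uses $\epsilon=\Omega(1/n)$ with a suitable constant. The overlap condition at every stage follows from Lemma~\ref{lem:overlap-throughout}, which depends only on $E_n$ and not on the ordering. A further check is that the very first stage and the boundary index $\stg-1=0$ are handled by the trivial bounds $\alpha_0=\beta_0=1$. Finally, the case split in $\alpha$, namely $3\sqrt\stg\le\ceil{w/2}$ versus the complement, must match the regime in which $\beta$ is capped at $1$, so that the pointwise ratio stays $O(1/\sqrt\epsilon)$ in every regime.
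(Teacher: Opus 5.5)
Your proposal is correct and follows the paper's proof essentially step for step. You bound $\adv(\asol)$ above by $\sum_\stg c_{\asol_\stg}\beta_{\stg-1}$ via Lemma~\ref{lem:beta-ub}, pass to $\osol$ using that $\beta$ is non-increasing (your rearrangement step), bound $\adv(\osol)$ below by $\sum_\stg c_{\osol_\stg}\alpha_{\stg-1}$ via Lemmas~\ref{lem:path-in-window-exists} and~\ref{lem:alpha-lb}, and conclude from $\beta_\stg/\alpha_\stg = O(1/\sqrt\epsilon)$. Your explicit pointwise computation of that ratio and your treatment of the stage-$0$ boundary term add details the paper leaves implicit; as your own computation shows, no separate matching of the $\alpha$ case split with the regime where $\beta$ is capped is needed.
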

\begin{proof}
    Let $\asol$ denote the increasing cost permutation. Without loss of generality (by renumbering tests), let $\asol=\langle 1, 2,\dots, n\rangle$ so $c_1 \leq c_2 ,\leq \dots \leq c_n$. Let $\vec p^* =\arg\max_{\vec p \in \unset} C(\asol, \vec p)$ be the solution to $\adv$ given the permutation $\asol$. Let $\sigma$ denote the optimal \gls{drst} permutation.  
    We can upper-bound the adversary's cost on the increasing-cost permutation $\asol$
    \[ \adv(\asol) = C(\asol, \vec p^*) = \sum_{\stg =1}^n c_\stg \Pr[\pbrv_{\stg-1}(\vec p^*) \in N_{\stg-1}] \leq \sum_{\stg=1}^n c_\stg \beta_{\stg-1} \le \sum_{\stg=1}^n c_{\osol_\stg} \beta_{\stg-1}, \]
     The first inequality uses Lemma~\ref{lem:beta-ub} (note that $\beta_\stg$ depends only on $\stg$ and not $\sol$). The second inequality uses the fact that $c_\stg$ is non-decreasing and $\beta_{\stg-1}$ is non-increasing (Lemma~\ref{lem:beta-noninc}).
     To lower bound $\opt$, we define $\hat{\vec p}$ such that $\Pi(\hat{\vec p})$ is always in the non-stopping window.
    Such a choice of $\hat{\vec p}$ is always feasible by Lemma~\ref{lem:path-in-window-exists}. Hence,
    \[ \adv(\osol) \geq C(\osol,{\hat{\vec p}})\geq \sum_{\stg=1}^n c_{\osol_{\stg}}\alpha_{{\stg-1}}, \]
    where the second inequality uses Lemma~\ref{lem:alpha-lb}.
    Using the definition of $\alpha_\stg$ and $\beta_\stg$ (Definition~\ref{def:alpha-beta-bounds}), we have $\frac{\beta_\stg}{\alpha_\stg}=O(\frac{1}{\sqrt{\epsilon}})$, which implies:
    \[\adv(\asol) \leq \bigOh{\frac 1{\sqrt \epsilon}} \cdot\adv(\osol)\]
\end{proof}

We can now complete the proof of Theorem~\ref{thm:gen-cost}. 
    If the windows are disjoint, then by Lemmas~\ref{lem:drst-disjoint} and~\ref{lem:drst-disjoint-below}, we have a $\bigOh{\frac \rho{\sqrt\epsilon}}$-approximation.
    Moreover, by \cite{gkenosisStochasticScoreClassification2018a}, we know that $\rho \leq 2$, so we have a $\bigOh{\frac 1{\sqrt\epsilon}}$ approximation in the disjoint case.
    On the other hand, if the window overlaps, then by Lemma~\ref{lem:drst-overlap-alg}, we have an $O\left(\frac1{\sqrt\epsilon}\right)$ approximation.

\section{Better Approximation for the Adversary Problem}
In this section, we show a better approximation for \gls{adv} (Definition~\ref{def:adv-defn}).
Recall that for \gls{adv}, we fix some sequence $\sol$ to conduct the tests.
The goal is to choose $\vec p \in \unset$ such that the expected cost of testing $C(\sol, \vec p)$ is maximized. It is easy to see that the optimal probability vector $\vec{p^*}$ sets each $p^*_i\in \{\ell_i , r\}$. So, a naive enumeration leads to a $2^n$ time exact algorithm. In this section, we obtain a \gls{qptas} for \gls{adv}. Throughout this section, we assume that the given sequence is $\sol=\langle 1,2,\dots, n\rangle$. 

Our algorithm uses a strong structural result of \gls{pbd}s from
\cite{daskalakis_sparse_2015},  which states that two \glspl{pbd} are close in terms of  the \gls{tv} distance  as long as their first few  moments are equal.

\begin{theorem}{\cite[Theorem~3]{daskalakis_sparse_2015}}\label{thm:moment-apx}
    Let $\vec p , \vec{q}\in [0, 1/2]^n$  be probability vectors such that their  first $d$ power-sums are equal, i.e., for all $a\in \{1,2,\dots, d\}$ we have 
    \begin{equation} \label{eq:equal-power-sum}
        \sum_{i=1}^n p_i^a = \sum_{i=1}^n q_i^a.
    \end{equation}
Then,  
$\tvdist \left(\pbrv(\vec p), \pbrv(\vec q )\right) \leq 13(d+1)^{1/4} 2^{-(d+1)/2}$. 
\end{theorem}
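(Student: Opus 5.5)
Since this theorem is quoted from \cite{daskalakis_sparse_2015}, the plan is to reconstruct its proof from an orthogonal (Krawtchouk-type) expansion of a \gls{pbd} around a binomial. We will write each \gls{pbd} as a binomial plus correction terms indexed by $j\ge 2$. The first step is to show that the correction terms of order at most $d$ are functions of the first $d$ power sums only. The second step is to bound the total variation contributed by all terms of order greater than $d$.

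\textbf{Step 1: expansion around a common binomial.} Set $\bar p=\frac1n\sum_i p_i=\frac1n\sum_i q_i$; the two means agree by \eqref{eq:equal-power-sum} with $a=1$. Write $\delta_i=p_i-\bar p$. The probability generating function of $\pbrv(\vec p)$ factors as
\[
\prod_{i=1}^n\bigl(1-p_i+p_i z\bigr)=\bigl(1-\bar p+\bar p z\bigr)^n\prod_{i=1}^n\Bigl(1+\delta_i\,\tfrac{z-1}{1-\bar p+\bar p z}\Bigr).
\]
Expanding the second product in elementary symmetric polynomials $e_j(\vec\delta)$ gives Roos's expansion,
\[
\Pr[\pbrv(\vec p)=m]=\sum_{j\ge 0} e_j(\vec\delta)\,\Delta_j(m),
\]
where $e_0=1$, $e_1=0$, and $\Delta_j$ is a signed measure depending only on $n,\bar p,j$ (a $j$-th finite difference of $\mathrm{Bin}(n-j,\bar p)$). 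By Newton's identities, $e_j(\vec\delta)$ is a polynomial in the power sums $\sum_i\delta_i^a$ for $a\le j$. Each $\sum_i\delta_i^a$ is in turn a fixed combination of $\sum_i p_i^b$ for $b\le a$, with coefficients depending only on $\bar p$ and $n$. Hence $e_j(\vec\delta)=e_j(\vec\delta')$ for all $j\le d$, where $\delta'_i=q_i-\bar p$. The two expansions therefore agree up to order $d$, and
\[
2\,\tvdist\le\sum_{j>d}\bigl(|e_j(\vec\delta)|+|e_j(\vec\delta')|\bigr)\,\norm{\Delta_j}_1 .
\]

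\textbf{Step 2: tail bound (main obstacle).} This is where the restriction $\vec p,\vec q\in[0,1/2]^n$ is used. For the coefficients I would first try Maclaurin/Newton-type inequalities, $|e_j(\vec\delta)|\le \frac{1}{j!}\bigl(\sum_i\delta_i^2\bigr)^{j/2}\,O(1)^j$, or directly $|e_j|\le \bigl(\tfrac{e}{j}\sum_i\delta_i^2\bigr)^{j/2}$. For the difference measures I would use Roos's norm estimate $\norm{\Delta_j}_1\le \sqrt{e}\,\frac{j!}{j^{j/2}}\bigl(n\bar p(1-\bar p)\bigr)^{-j/2}\cdot O(1)$. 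Combining the two bounds, the $j$-th term is at most a constant times $\sqrt{j}\,\eta^{j/2}$, where
\[
\eta=\frac{\sum_i\delta_i^2}{n\bar p(1-\bar p)}\cdot O(1)
\]
is Roos's parameter. The key inequality to establish is $\eta\le\frac12$. Since $\sum_i p_i(1-p_i)=n\bar p(1-\bar p)-\sum_i\delta_i^2\ge 0$, we get $\sum_i\delta_i^2\le n\bar p(1-\bar p)$. When all $p_i\le\frac12$ we have $p_i(1-p_i)\ge p_i/2$, which tightens this to
\[
\sum_i\delta_i^2\le n\bar p(1-\bar p)-\frac{n\bar p}{2}\le\frac{n\bar p(1-\bar p)}{2}\cdot O(1).
\]
I expect the precise constant bookkeeping to make $\eta\le\frac12$ come out exactly to be the delicate part. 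If it does not, I would reparametrize the expansion, either using $\mathrm{Bin}(n,\bar p)$ with a shifted variable or working with Fourier transforms via Parseval.

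\textbf{Step 3: summing the tail.} Granting $\eta\le\frac12$, the tail is
\[
\sum_{j\ge d+1}O(\sqrt j)\,2^{-j/2}=O\bigl((d+1)^{1/2}\,2^{-(d+1)/2}\bigr).
\]
A more careful Cauchy--Schwarz step, in which the $\ell_1$ norm is bounded by an $\ell_2$ norm computed via the orthogonality of Krawtchouk polynomials, improves the polynomial factor to $(d+1)^{1/4}$. Tracking explicit constants throughout yields the stated bound $13(d+1)^{1/4}2^{-(d+1)/2}$, applied to both $\vec p$ and $\vec q$.
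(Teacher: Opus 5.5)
\textbf{There is a genuine gap in Step~2.} The paper gives no proof of this statement; it imports it from \cite{daskalakis_sparse_2015}. The proof there follows your outline up to the start of your Step~2. It expands each PBD around $\mathrm{Bin}(n,\bar p)$ using Roos's Krawtchouk expansion. It uses Newton's identities to show that the coefficients $e_j(\vec\delta)$ agree for $j\le d$. It observes that $\theta:=\sum_i\delta_i^2/(n\bar p(1-\bar p))\le\frac12$ when all $p_i\le\frac12$.

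Your Step~1 is correct. Your $\theta$ estimate needs no $O(1)$: in fact $\sum_i\delta_i^2\le n\bar p(\frac12-\bar p)\le\frac12\, n\bar p(1-\bar p)$ holds exactly. At this point the source does not do a termwise analysis. It invokes Roos's remainder theorem, which for $\theta<1$ bounds the $\ell_1$-norm of the order-$>s$ remainder by $C(\theta)\,(s+1)^{1/4}\theta^{(s+1)/2}$, with $C(\theta)$ finite. Taking $s=d$ for both $\vec p$ and $\vec q$ then finishes the proof.

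Your Step~2 tries to re-derive that remainder bound termwise, and as written it does not work. The constant hidden in $\eta$ is the whole difficulty, not bookkeeping: the rate $2^{-(d+1)/2}$ requires the geometric base to be exactly $\theta$, uniformly over $d<j\le n$.

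\emph{The norm estimate is false.} You claim $\norm{\Delta_j}_1\le C\sqrt e\,\frac{j!}{j^{j/2}}\,\sigma^{-j}$ with $\sigma^2=n\bar p(1-\bar p)$. But for fixed $j$ and $n\to\infty$, $\sigma^j\norm{\Delta_j}_1\to\int\abs{\phi^{(j)}}$, where $\phi$ is the standard normal density. This limit grows like $\sqrt{j!}\,j^{-1/4}$, which exceeds $j!/j^{j/2}$ by a factor of order $e^{j/2}/\sqrt j$. No absolute constant can absorb that.

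\emph{Your first coefficient bound is also false.} The bound $\abs{e_j}\le C^j(\sum_i\delta_i^2)^{j/2}/j!$ fails when half the $\delta_i$ equal $s$ and half equal $-s$. Then $\abs{e_{2m}}=\binom{n/2}{m}s^{2m}\approx(\sum_i\delta_i^2/2)^m/m!$, which is far larger.

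\emph{Correct termwise bounds do not close the argument either.} Use your Cauchy estimate $\abs{e_j}\le(e\sum_i\delta_i^2/j)^{j/2}$, which is valid since $\sum_i\delta_i=0$, together with the orthogonality bound $\norm{\Delta_j}_1\le\bigl(\binom nj(\bar p(1-\bar p))^j\bigr)^{-1/2}$. The $j$-th term is then about $(2\pi j)^{1/4}\theta^{j/2}\bigl(n^j(n-j)!/n!\bigr)^{1/2}$. The last factor is harmless for $j\ll\sqrt n$, but it reaches about $e^{n/2}$ at $j=n$, so the terms behave like $(e\theta)^{j/2}$. Now $\theta$ can be arbitrarily close to $\frac12$ inside $[0,\frac12]^n$: take a small fraction of $p_i=\frac12$ and the rest $0$. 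Then $e\theta>1$ and your tail bound diverges.

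The missing ingredient is Roos's remainder estimate itself. It needs a sharper analysis than multiplying separate bounds on $\abs{e_j}$ and $\norm{\Delta_j}_1$. You should either cite it, as \cite{daskalakis_sparse_2015} does, or prove it. The improvement to $(d+1)^{1/4}$ and the constant $13$ in your Step~3 are asserted, and both depend on this same estimate.
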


As stated in Remark~1 of \cite{daskalakis_sparse_2015}, condition \eqref{eq:equal-power-sum} of Theorem~\ref{thm:moment-apx} is equivalent to equating the first $d$ moments.
Moreover, Theorem~\ref{thm:moment-apx} also holds when both $\vec p , \vec q\in [1/2, 1]^n$.

\paragraph{Rounding the initial instance.} We first show that all end-points $\{\ell_i ,r_i\}_{i=1}^n$ can be rounded to be integer multiples of $\frac{1}{n^3}$, at a small loss in the optimal value.  The key observation is the following:
\begin{lemma} \label{lem:rounding-loss}
Let $\vec p, \vec{p'} \in [0,1]^n$ be two  probability vectors such that $\abs{p_i-p'_i}<\frac1{n^3}$ for all $i\in [n]$. Then, 
$\tvdist\left(\pbrv(\vec p), \pbrv(\vec p')\right) \leq \frac{1}{n^2}$ and hence $\abs{C(\sol,\vec{p}) - C(\sol,\vec{p'}) }\le \frac{c_{\max}}n$. 
\end{lemma}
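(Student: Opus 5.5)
The plan is a hybrid (coupling) argument that changes one coordinate at a time. Then convert the total-variation bound into a cost bound stage by stage, using \eqref{eq:cost-restate}.

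\textbf{Step 1: one coordinate at a time.} Define the hybrid vectors $\vec q^{(j)} = (p'_1,\dots,p'_j,p_{j+1},\dots,p_n)$ for $j=0,\dots,n$. Then $\vec q^{(0)}=\vec p$ and $\vec q^{(n)}=\vec p'$. Consecutive hybrids differ only in coordinate $j$.

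\textbf{Step 2: bound each step by a coupling.} For consecutive hybrids, couple the two Bernoulli variables $X_j\sim\bern(p_j)$ and $X'_j\sim\bern(p'_j)$ through a common uniform $U_j$, setting $X_j=\indicator{U_j\le p_j}$ and $X'_j=\indicator{U_j\le p'_j}$. All other coordinates are kept identical. Under this coupling the two sums differ only when $X_j\ne X'_j$, which happens with probability $\abs{p_j-p'_j}<n^{-3}$. Hence each step has total-variation distance below $n^{-3}$.

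\textbf{Step 3: sum the steps.} By the triangle inequality over the $n$ steps,
\[
\tvdist\bigl(\pbrv(\vec p),\pbrv(\vec p')\bigr)\le n\cdot n^{-3}=\frac{1}{n^2}.
\]
Equivalently, couple all $n$ coordinates simultaneously with independent uniforms and apply a union bound. Either way, the same bound holds for every prefix sum $\pbrv_\stg$, which is what the cost needs.

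\textbf{Step 4: the cost.} By \eqref{eq:cost-restate},
\[
\abs{C(\sol,\vec p)-C(\sol,\vec p')}\le \sum_{\stg=1}^n c_\stg\,\Bigl|\Pr[\pbrv_{\stg-1}(\vec p)\in N_{\stg-1}]-\Pr[\pbrv_{\stg-1}(\vec p')\in N_{\stg-1}]\Bigr|.
\]
Each summand is at most $c_{\max}$ times the total-variation distance between the prefix distributions, which is at most $n^{-2}$. Summing $n$ terms gives $n\cdot c_{\max}\cdot n^{-2}=c_{\max}/n$.

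There is no real obstacle here. The only point to watch is to apply the distance bound to the prefix sums $\pbrv_{\stg-1}$, not just to the full sum. The coupling gives this directly, since it is the same event $\{\exists\, i: X_i\ne X'_i\}$ restricted to the first $\stg-1$ coordinates.
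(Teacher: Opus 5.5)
Your proof is correct and follows the same route as the paper. The paper bounds $\tvdist$ by subadditivity over the independent Bernoulli coordinates, $\sum_i \abs{p_i-p'_i}\le n^{-2}$, which your coupling/hybrid argument re-derives, and then bounds the cost difference by $n\cdot c_{\max}\cdot n^{-2}$; your stage-by-stage application to the prefix sums $\pbrv_{\stg-1}$ is a slightly more careful version of the paper's one-line appeal to the cost lying in $[0,n\,c_{\max}]$, since the cost depends on the prefix sums and not only on the full sum $\pbrv(\vec p)$.
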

\begin{proof}
    We use the bound on the total variation distance between product distributions,
    \begin{align*}
        \tvdist\left(\pbrv(\vec p), \pbrv(\vec p')\right) \leq \sum_{i=1}^n \tvdist(\bern(p_i), \bern(p'_i)) = \sum_{i=1}^n \abs{p_i-p'_i} \le    \frac{1}{n^2}.
    \end{align*}
    We can now bound the difference in expected cost 
    $$\abs{C(\sol,\vec{p}) - C(\sol,\vec{p'})  } \le n\cdot c_{\max} \cdot 
        \tvdist\left(\pbrv(\vec p), \pbrv(\vec p')\right)  \le \frac{c_{\max}}{n},$$
        where the first inequality uses that the cost is always bounded in  $[0 , n  c_{\max} ]$. 
\end{proof}
Given any instance, we round down (resp.\@ up)  each $\ell_i$ (resp.\@ $r_i$) to the grid $\frac{1}{n^3}\cdot \Z$. Let $\opt$ and $\vec{p}$ (resp.\@ $\opt'$ and $\vec{q'}$) denote the optimal value and solution of the original (resp.\@ rounded) instance. Let $\vec{p'}$ denote the rounded version of $\vec{p}$ and let $\vec{q}$ denote the original probability vector corresponding to $\vec{q'}$. 
Then,
$$\opt-\opt' = C(\sol,\vec{p}) - C(\sol,\vec{q'}) =  C(\sol,\vec{p}) - C(\sol,\vec{q})  + C(\sol,\vec{q}) - C(\sol,\vec{q'}) \ge C(\sol,\vec{q}) - C(\sol,\vec{q'})  \ge -\frac{c_{\max}}{n}.   $$
The first inequality uses the fact that $\vec{p}$ is the optimal solution to the original instance and $\vec{q}$ is a feasible solution. The last inequality uses Lemma~\ref{lem:rounding-loss}. Similarly, we have $\opt-\opt'\le \frac{c_{\max}}{n}$. So, $\abs{\opt-\opt'}\le \frac{c_{\max}}{n}$.

Henceforth, we  work with rounded instances of \gls{adv}. 

\paragraph{Naive dynamic program.} We first present the exact enumerative  algorithm as a  dynamic program. (This view is helpful in introducing   the more efficient approximation algorithm.) The DP proceeds in $n$ stages, where for each  $i\in [n]$, the state space is 
$$\Omega_i =\{ (\omega_1,\dots, \omega_{i-1}) \,:\, \omega_j\in \{\ell_j , r_j\} \mbox{ for all }1\le j\le i-1 \}. $$
The cost at any state $\vomega\in \Omega_i$ is defined as 
\begin{equation} \label{eq:qptas-stg-cost}
    c(\vomega)=c_i\cdot \Pr[\pbrv(\omega_1,\dots, \omega_{i-1}) \in N_{i-1}],
\end{equation}
which accounts for the cost incurred by test $i$ assuming the probabilities of  prior tests are as in $\vomega$. 

The value function in any stage $i\in [n]$ is given recursively as:
$$V_i(\vomega) \quad =\quad \max_{p_i\in \{\ell_i , r_i\}} \quad \left\{ c(\vomega) + V_{i+1}(\vomega\circ p_i)  \right\},$$
where we use the boundary condition $V_{n+1}(\star)=0$. Clearly, the optimal value is given by $V_1(\emptyset)$.

\paragraph{Compressed dynamic program.} Note that the state space of the exact DP is $2^n$, which is exponential. We will use Theorem~\ref{thm:moment-apx} to compress the state space to quasi-polynomial size, at a small loss in the objective.
In particular, at each stage $i$ we will just track the first $d=O(\log n)$ power-sums of the current probability vector $\vomega\in \Omega_i$. One technical issue is that Theorem~\ref{thm:moment-apx} only applies to probability vectors in either $[0,\frac12]^n$ or  $[\frac12,1]^n$. In order to handle this, we will maintain two vectors  of power-sums: one for low probability values (below $\frac12$) and another for high  probability values (above $\frac12$). Formally, we map each $\vomega\in \cup_{i=1}^n \Omega_i$ to the pair $\pi(\vomega) = (\vec m_L, \vec m_H)\in \mathbb{R}^{2d}$ where
$$m_L(a) = \sum_{j<i:\omega_i\le \frac12} \omega_i^a \quad \mbox{and} \quad m_H(a) = \sum_{j<i:\omega_i>\frac12} \omega_i^a ,\quad \forall a\in [d]. $$
The compressed state-space  is the set of all such vectors arising when the probabilities lie in the grid  $\frac{1}{n^3}\cdot \Z$. Formally,
$$\Pi = \left\{ (\vec{u}, \vec{v})\in [0,1]^{d}\times [0,1]^{d} \,:\, u_a , v_a\in \frac{\Z}{n^{3a}} \mbox{ for all }a\in [d]\right\}.$$
Crucially, the size of this state space is only exponential in $d$ (and not $n$).

\begin{lemma}
    The cardinality $\abs{\Pi}= n^{O(d^2)}$.
\end{lemma}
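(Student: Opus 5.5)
Since $\Pi$ is a product set, $\abs{\Pi}$ is the product over the $2d$ coordinates of the number of admissible values of each coordinate. So I will count, for a fixed power $a\in[d]$, how many values $u_a$ (and likewise $v_a$) can take, and then multiply. The natural reading of the definition bounds each coordinate in the range of power-sums that actually occur, namely $[0,n]$. The constraint $u_a\in[0,1]$ as literally written would only make the count smaller, so the upper bound below holds either way.

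\textbf{Step 1 (grid).} Every rounded probability is of the form $\omega_j = z_j/n^3$ with $z_j\in\{0,1,\dots,n^3\}$. Hence $\omega_j^a = z_j^a/n^{3a}$, and any sum of at most $n$ such terms is an integer multiple of $n^{-3a}$. This is exactly why the grid for the $a$-th coordinate is $\Z/n^{3a}$.

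\textbf{Step 2 (range).} Each term satisfies $\omega_j^a\in[0,1]$, and there are at most $n$ terms. So $u_a,v_a\in[0,n]$. Combining with Step 1, each coordinate takes at most $n\cdot n^{3a}+1\le 2n^{3a+1}$ values.

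\textbf{Step 3 (product).} Multiplying over $a\in[d]$ and over the two vectors $\vec u,\vec v$ gives
\[\abs{\Pi}\le \prod_{a=1}^d \left(2n^{3a+1}\right)^2 = 2^{2d}\, n^{2\sum_{a=1}^d(3a+1)} = 2^{2d}\, n^{3d(d+1)+2d}.\]
Since $2^{2d}\le n^{2d}$, this is $n^{O(d^2)}$.

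\textbf{Main obstacle.} There is no real obstacle here. The only point needing care is Step 1: justifying that the $a$-th power-sum lies on the finer grid $n^{-3a}\Z$ rather than $n^{-3}\Z$. This finer grid is what produces the $\sum_a 3a=\Theta(d^2)$ exponent, as opposed to $O(d)$. The count is then a direct product computation.
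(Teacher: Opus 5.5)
Your proof is correct and follows the same route as the paper: the $a$-th power-sum of grid probabilities lies on $n^{-3a}\Z$, so each coordinate takes polynomially many values, and the product over the $2d$ coordinates gives $n^{\sum_a O(a)} = n^{O(d^2)}$. Your use of the range $[0,n]$, giving $n^{3a+1}+1$ values per coordinate, is slightly more careful than the paper's count $z\in\{0,\dots,n^{3a}\}$, which implicitly assumes the power-sums lie in $[0,1]$; this does not change the $n^{O(d^2)}$ bound.
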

\begin{proof}
Consider any $(\vec{u}, \vec{v})\in \Pi$. For any  $a\in [d]$, the values $u_a$ and $v_a$ are of the form $\sum_{i\in S} p_i^a$.
where $S\subseteq [n]$ and each probability $p_i\in \frac{1}{n^3}\cdot \Z$. It follows that both $u_a$ and $v_a$ are of the form $\frac{z}{n^{3a}}$ where $z\in \{0,1,\dots, n^{3a}\}$. Therefore, the number of possible vectors is 
$$\prod_{a=1}^d n^{3a}\times n^{3a} = n^{O(d^2)},$$
which completes the proof. 
\end{proof}

For each $i\in [n]$, we define the {\em feasible states} as the image of $\Omega_i$ under $\pi$, i.e.,  
$$\Pi_i = \left\{ \pi(\vomega) : \vomega\in \Omega_i \right\}\subseteq \Pi.$$
For each $(\vec{u}, \vec{v})\in \Pi_i$ we arbitrarily choose a {\em representative} probability vector $r(\vec{u}, \vec{v}) =\langle p_1,\dots, p_i\rangle$ that maps to  $(\vec{u}, \vec{v})$. 

The next lemma shows that all states in $\Omega_i$  mapping to the same vector $(\vec{u}, \vec{v})$ are essentially  equivalent.  
\begin{lemma}\label{lem:adv-qptas-1}
    For any 
    $\vomega\in \Omega_i$ 
    we have 
    $$ \tvdist \left(\pbrv(r(\pi(\vomega))), \pbrv(\vomega )\right) \,\,\leq\,\, \epsilon \,:=\, O(d\, 2^{-d/2}). $$
\end{lemma}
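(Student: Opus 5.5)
The plan is to split each probability vector into its ``low'' part (entries at most $\frac12$) and its ``high'' part (entries above $\frac12$), apply Theorem~\ref{thm:moment-apx} to each part separately, and then recombine the two bounds using independence.

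Fix $\vomega\in\Omega_i$ and let $\vec q = r(\pi(\vomega))$ be the representative. Because $\pi(\vec q)=\pi(\vomega)=(\vec m_L,\vec m_H)$, the low parts $\vomega_L,\vec q_L$ have equal first $d$ power sums, and likewise for the high parts $\vomega_H,\vec q_H$. Since all entries lie in $(0,1]$, matching power sums for $a=1,\dots,d$ should force matching sizes of the low parts. I would check this first (for instance via the sizes determined by the moment vectors), or else note that zero-padding is harmless. If the sizes do differ, I would pad the shorter vector with zeros in the low part and with ones in the high part. Padding with zeros changes neither the power sums nor the PBD. Padding with ones shifts the PBD deterministically, but since $\sum 1^a$ counts the entries, equal power sums already pin down the size of the high part once entries near $1$ are accounted for.

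The cleanest way around the count issue is the complement trick for the high part. Map $p\mapsto 1-p\in[0,\frac12)$, and note that $\pbrv(\vomega_H)=|\vomega_H|-\pbrv(1-\vomega_H)$. The power sums of $1-p$ are linear combinations of the power sums of $p$ together with the count. So I would argue that the count $|\vomega_H|$ is determined, or else enlarge the state by recording the count $m_H(0)$. Recording the count is a harmless addition, since it costs only another factor of $n$ in the state space. With the counts equal, both vectors can be padded to a common length $n$ (zeros for the low parts), and Theorem~\ref{thm:moment-apx}, using the stated $[\frac12,1]$ version for the high parts, gives
\[
\tvdist(\pbrv(\vomega_L),\pbrv(\vec q_L))\le 13(d+1)^{1/4}2^{-(d+1)/2},
\]
and the same bound for the high parts.

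To recombine, write $\pbrv(\vomega)=\pbrv(\vomega_L)+\pbrv(\vomega_H)$ as a sum of independent variables, and similarly for $\vec q$. TV distance is subadditive under convolution of independent variables: couple each pair optimally and independently, then take a union bound. This gives a total of at most $26(d+1)^{1/4}2^{-(d+1)/2}=O(d\,2^{-d/2})$.

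I expect the main obstacle to be the justification of equal cardinalities in the high part, which is needed to invoke the theorem with a common $n$. The low part is fine via zero-padding, but padding the high part with ones is not an option. I will handle this by recording the count in the state, or by deducing it from the first moments, as described above.
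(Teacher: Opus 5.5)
Your decomposition and recombination are exactly the paper's: split at $\frac12$, apply Theorem~\ref{thm:moment-apx} to the low and high parts separately, then use subadditivity of total variation for sums of independent variables.

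The extra issue you raise is real, and the paper's proof does not address it. Theorem~\ref{thm:moment-apx} compares vectors of a \emph{common} length, yet the paper applies it to $\vec p_H$ and $\vomega_H$ without checking that they have the same number of entries. The total lengths agree (both are $i-1$), but the split between low and high parts need not.

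\textbf{Low parts.} Your zero-padding settles this case. Your side remark that equal power sums force equal low-part sizes is false (e.g.\ $(\frac12)$ versus $(\frac14,\frac14)$ for $d=1$), but you do not need it.

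\textbf{High parts.} Here your proposal leaves the step open.
\begin{itemize}
\item The claim that the power sums determine $|\vomega_H|$ is asserted, not proved. It is false without a condition on $d$: for $d=1$, $(0.9,0.9,0)$ and $(0.6,0.6,0.6)$ have the same $\pi$-image but high parts of sizes $2$ and $3$.
\item Recording the count in the state repairs the algorithm. However, it proves a statement about a modified map $\pi$, not the lemma as stated.
\end{itemize}

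You can close the gap without changing $\pi$. Take a high part $(\omega_1,\dots,\omega_m)\in(\frac12,1]^m$ and set
\begin{itemize}
\item $x_j=1-\omega_j\in[0,\frac12)$;
\item $s_b=\sum_j x_j^b$, so that $s_0=m$;
\item $U_{a,b}=\sum_j\omega_j^a x_j^b$.
\end{itemize}
Since $\omega_j+x_j=1$, we have $U_{a,b+1}=U_{a,b}-U_{a+1,b}$. Starting from $U_{a,0}=v_a$ for $1\le a\le d$, induction on $b$ determines $U_{1,b}=s_b-s_{b+1}$ for $0\le b\le d-1$. Telescoping then shows that $m-s_d$ depends on $\vec v$ alone.

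Because $0\le s_d<m2^{-d}\le n2^{-d}$, two high parts with the same $\vec v$ have sizes differing by less than $n2^{-d}$. Since the sizes are integers, they are equal once $2^d\ge n$. This holds in the only regime where the lemma is used: the paper chooses $d$ with $2^{d/2}\gtrsim n^2c_{\max}/c_{\min}$.

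With a common size $m$, your complement step becomes valid. The vectors $1-\vec p_H$ and $1-\vomega_H$ lie in $[0,\frac12]^m$ and have matching first $d$ power sums, since these are determined by $m$ and $\vec v$. The map $y\mapsto m-y$ carries $\pbrv(1-\vec q)$ to $\pbrv(\vec q)$ and preserves total variation, so the theorem transfers to the high parts.

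With this argument added, or with the hypothesis $2^d\ge n$ stated explicitly, your proof is complete and more careful than the paper's.
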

\begin{proof}
    Let  
    $\vec{p} = \langle p_1,\dots, p_i\rangle$ denote the probabilities in $r(\pi(\vomega))$ and $\langle \omega_1,\dots, \omega_i\rangle$ those in $\vomega$. Also, let $\vec{p}_L$ (resp.\@ $\vomega_L$) consist of only those probabilities in $\vec{p}$ (resp. $\vomega$) that are at most half. Similarly, let $\vec{p}_H$ (resp. $\vomega_H$) consist of only those probabilities in $\vec{p}$ (resp. $\vomega$) that are more than half. Let  $\pi(\vomega)=(\vec{u}, \vec{v})$. Observe that the first $d$ power-sums of $\vec{p}_L$ and $\vomega_L$ are equal (given by $\vec{u}$). Also, the first $d$ power-sums of $\vec{p}_H$ and $\vomega_H$ are equal (given by $\vec{v}$). Applying Theorem~\ref{thm:moment-apx} for both the low and high cases,
 $$ \tvdist \left(\pbrv(\vec{p}_L), \pbrv(\vomega_L )\right)\,\mbox{ and }\,  \tvdist \left(\pbrv(\vec{p}_H), \pbrv(\vomega_H )\right) \,\,\leq\,\, O(d\, 2^{-d/2}). $$
  We now use  $\pbrv(r(\pi(\vomega)))= \pbrv(\vec{p})=\pbrv(\vec{p}_L) + \pbrv(\vec{p}_H)$ and $\pbrv(\vomega )=\pbrv(\vomega_L )+\pbrv(\vomega_H )$, where  the ``low'' and ``high'' components are independent in both cases. Using the TV-distance upper bound for sums of independent r.v.s, 
 $$ \tvdist \left(\pbrv(r(\pi(\vomega)))), \pbrv(\vomega )\right) \le  \tvdist \left(\pbrv(\vec{p}_L), \pbrv(\vomega_L )\right)+  \tvdist \left(\pbrv(\vec{p}_H), \pbrv(\vomega_H )\right)= O(d\, 2^{-d/2}),$$
 which completes the proof. \end{proof}
We now define costs at  compressed states. For any $(\vec{u}, \vec{v})\in \Pi_i$, its cost 
$\bar{c}(\vec{u}, \vec{v}) = c(r(\vec{u}, \vec{v}))$ which corresponds to the cost of its representative state in the  exact DP. 

\paragraph{Efficient computation of $\bar{c}$.} Given  any state $(\vec{u}, \vec{v})$ in stage $i$, we can find a  representative vector $r(\vec{u}, \vec{v})\in  \{\ell_1 ,r_1\}\times \{\ell_2 ,r_2\} \times \dots \times\{\ell_i ,r_i\}$  (if one exists) by a polynomial-size DP (with only $i$ stages). Then, the cost  $\bar{c}(\vec{u}, \vec{v}) =  c(r(\vec{u}, \vec{v}))$ (see \eqref{eq:qptas-stg-cost})  can be computed exactly using another DP. 

\medskip

The new value function in stage $i$ is given by
$$D_i(\pi(\vomega))   \quad =\quad \max_{p_i\in \{\ell_i , r_i\}} \quad \left\{ \bar{c}(\pi(\vomega)) + D_{i+1}(\pi(\vomega\circ p_i))  \right\}.$$
If $\pi(\vomega)= (\vec{u}, \vec{v})$ then $\pi(\vomega\circ p_i)$ equals $(\vec{u'}, \vec{v'})$ where
$$
u'_a =\left\{
\begin{array}{ll}
   u_a+p_i^a  & \mbox{ if }p_i\le \frac12 \\
    u_a & \mbox{otherwise}
\end{array}\right. \quad \mbox{ and }\quad 
v'_a =\left\{
\begin{array}{ll}
   v_a+p_i^a  & \mbox{ if }p_i>\frac12 \\
    v_a & \mbox{otherwise}
\end{array}\right. ,\quad \forall a\in [d].
$$
So, the next state $(\vec{u'}, \vec{v'})$ can be calculated easily from the current state $(\vec{u}, \vec{v})$.

The optimal value of the DP is $D_1(\emptyset)$ in stage $1$ (we use the boundary condition $D_{n+1}(\star)=0$).

We  now relate the value functions in the two DPs.
\begin{lemma}\label{lem:adv-qptas-2}
    For each stage $i$ and $\vomega\in \Omega_i$, we have $\abs{D_i(\pi(\vomega))-V_i(\vomega)} \le (n+1-i)\cdot \epsilon \, c_{\max}$
\end{lemma}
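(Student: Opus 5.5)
We argue by backward induction on the stage $i=n+1,n,\dots,1$. The base case $i=n+1$ is immediate: both value functions equal $0$ by the boundary conditions $V_{n+1}(\star)=D_{n+1}(\star)=0$, so the difference is $0=(n+1-(n+1))\epsilon\, c_{\max}$.

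For the inductive step, fix $i\le n$ and $\vomega\in\Omega_i$, and assume the bound at stage $i+1$ for every state in $\Omega_{i+1}$. We compare the two recursions term by term.

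\textbf{Immediate-cost term.} We have $\bar c(\pi(\vomega))=c(r(\pi(\vomega)))$ and $c(\vomega)=c_i\Pr[\pbrv(\vomega)\in N_{i-1}]$. The two probabilities differ by at most the \gls{tv} distance between $\pbrv(r(\pi(\vomega)))$ and $\pbrv(\vomega)$. By Lemma~\ref{lem:adv-qptas-1} this distance is at most $\epsilon$. Hence
\[
\abs{\bar c(\pi(\vomega))-c(\vomega)}\le c_i\,\epsilon\le \epsilon\, c_{\max}.
\]
One indexing detail needs care: the representative has the same length as $\vomega$, so both costs are evaluated against the same window $N_{i-1}$.

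\textbf{Future term.} For each choice $p_i\in\{\ell_i,r_i\}$, the successor $\vomega\circ p_i$ lies in $\Omega_{i+1}$. The transition rule for the power sums gives $\pi(\vomega\circ p_i)$ exactly as the state used in $D_i$. This is the crucial point: $D_{i+1}$ is evaluated at $\pi$ of the true successor, not at a successor of the representative. So the induction hypothesis applies directly and gives
\[
\abs{D_{i+1}(\pi(\vomega\circ p_i))-V_{i+1}(\vomega\circ p_i)}\le (n-i)\,\epsilon\, c_{\max}.
\]

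\textbf{Combining.} For each fixed $p_i$, the bracketed expressions in the two recursions therefore differ by at most $(n+1-i)\,\epsilon\, c_{\max}$. Since $\abs{\max_x f(x)-\max_x g(x)}\le\max_x\abs{f(x)-g(x)}$ over the same two-element choice set, the bound carries over to $D_i(\pi(\vomega))$ versus $V_i(\vomega)$, which closes the induction.

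\textbf{Main obstacle: well-definedness of $D_i$.} The proof hinges on $D_i$ being a function of the compressed state alone. Two different $\vomega,\vomega'$ with $\pi(\vomega)=\pi(\vomega')$ must yield the same successor compressed states. They do, because $\pi(\vomega\circ p_i)$ depends only on $\pi(\vomega)$ and $p_i$, by the update formulas for $\vec u',\vec v'$. I would note this explicitly, so that the hypothesis "for every $\vomega\in\Omega_{i+1}$" is legitimately applied even though $D_{i+1}$ sees only compressed states.

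I would also check one technical point: Lemma~\ref{lem:adv-qptas-1} is applied to prefixes of length $i-1$, with the low and high parts split at $\frac12$. This matches Theorem~\ref{thm:moment-apx} applied separately to each part, so the stated $\epsilon$ bound holds uniformly over all stages.
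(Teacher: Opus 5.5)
Your proof is correct and follows the paper's argument. Both use backward induction, bound $\abs{c(\vomega)-\bar c(\pi(\vomega))}\le \epsilon\, c_{\max}$ through the TV bound of Lemma~\ref{lem:adv-qptas-1}, and apply the induction hypothesis to the true successor $\vomega\circ p_i$. The paper writes out two one-sided chains using the respective maximizers, which is equivalent to your $\abs{\max f-\max g}\le\max\abs{f-g}$. Your remark that $\pi(\vomega\circ p_i)$ depends only on $\pi(\vomega)$ and $p_i$, so that $D_i$ is well defined on compressed states, makes explicit a point the paper leaves implicit.
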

\begin{proof}
    We proceed by (backward) induction on $i$. The base case is $i=n+1$ where  $D_{n+1}(\star)=V_{n+1}(\star)=0$. Suppose that the lemma is true for stage $i+1$ and consider any $\vomega\in \Omega_i$. We have:
    \begin{eqnarray}
        \abs{c(\vomega)-\bar{c}(\pi(\vomega))} & = &c_i \, \abs{ \Pr[\pbrv(\vomega)\in N_{i-1}] - \Pr[\pbrv(r(\pi(\vomega)))\in N_{i-1}] } \notag\\
        & \le & c_{\max} \cdot  \tvdist \left(\pbrv(r(\pi(\vomega))), \pbrv(\vomega )\right) \,\,\le \,\, \epsilon\, c_{\max}.        \label{eq:adv-cost-apx}
    \end{eqnarray}
    The equality is by definition of the costs $c$ and $\bar{c}$, the first inequality is by definition of the TV-distance and the last inequality is by Lemma~\ref{lem:adv-qptas-1}. 

We now bound the difference in the value functions. Let $p_i\in \{\ell_i,r_i\}$ denote the maximizing choice for $V_i(\vomega)$: so $V_i(\vomega)=c(\vomega) + V_{i+1}(\vomega\circ p_i)$. Then,
\begin{eqnarray*}
    D_i(\pi(\vomega)) & \ge &\bar{c}(\pi(\vomega)) + D_{i+1}(\pi(\vomega\circ p_i)) \,\,\ge \,\,c(\vomega) - \epsilon \, c_{\max} + D_{i+1}(\pi(\vomega\circ p_i)) \\
    &\ge& c(\vomega) - \epsilon \, c_{\max} + V_{i+1}(\vomega\circ p_i)  -(n-i)\cdot \epsilon \, c_{\max}\\
    &=& V_i(\vomega) - (n+1-i)\cdot \epsilon \, c_{\max}.
    \end{eqnarray*}
    Above, the first inequality is by the recurrence for $D_i$, the second inequality is by \eqref{eq:adv-cost-apx}, the third inequality is by induction and the last inequality is by definition of $V_i$. 

    Similarly, if $q_i\in \{\ell_i,r_i\}$ is the maximizing choice for $D_i(\pi(\vomega))$ then: 
    \begin{eqnarray*}
    V_i(\vomega) & \ge & {c}(\vomega) + V_{i+1}(\vomega\circ q_i) \,\,\ge \,\,\bar{c}(\pi(\vomega)) - \epsilon \, c_{\max} + V_{i+1}(\vomega\circ q_i )  \\
    &\ge& \bar{c}(\pi(\vomega))- \epsilon \, c_{\max} + D_{i+1}(\pi(\vomega\circ q_i))  -(n-i)\cdot \epsilon \, c_{\max}\\
    &=& D_i(\pi(\vomega)) - (n+1-i)\cdot \epsilon \, c_{\max}.
    \end{eqnarray*}
    The first inequality is by the recurrence for $V_i$, the second inequality is by \eqref{eq:adv-cost-apx}, the third inequality is by induction and the last inequality is by definition of $D_i$. 
    \end{proof}
Using this lemma with $i=1$, the optimal value of the compressed DP is within an additive error of $n\epsilon\cdot c_{\max} = O(n d \, 2^{-d/2} \, c_{\max})$ of the optimum (given by the exact DP). Therefore, the obtained solution has objective at least 
$$\opt - O(n d \, 2^{-d/2} \, c_{\max})\ge \opt - \frac{c_{\min}}{n}\ge \left(1-\frac1n\right)\cdot \opt,$$
where we set $d = \bigOh{\log\left(n \frac{c_{\max}}{c_{\min}}\right)}$.
This gives us a \gls{qptas}, yielding Theorem~\ref{thm:qptas-adv}.

\bibliographystyle{alpha}
\bibliography{references}

\begin{appendix}
    \section{Bad Example for Greedy with  General Costs}\label{app:extension-to-gen-cost}
We consider a natural extension of the greedy algorithm for unit-cost DR \kofn to the setting with general costs. 
Sort the tests by increasing order of $\frac{c_i}{\ell_i}$, assuming   $k \le  \frac{n}2$.
The proof technique in Section~\ref{sec:unit-cost-kofn} does not work anymore because we cannot ignore the second stopping condition (as we did for $\probb$).
In fact, we can show:

\begin{proposition} \label{prop:cl-ratio-bad-eg}
The greedy algorithm for distributionally-robust \kofn that sorts by increasing $\frac{c_i}{\ell_i}$ has approximation ratio  at least 
$\frac1\epsilon$.
\end{proposition}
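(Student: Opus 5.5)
We will exhibit an explicit family of $\epsilon$-bounded instances on which the greedy order by increasing $\frac{c_i}{\ell_i}$ pays roughly $\frac1\epsilon$ times the cost of a better permutation. To keep both adversaries easy to analyse, we will make all intervals degenerate ($\ell_i=r_i$), so that the adversary has no choice and $\adv(\sol)=C(\sol,\vec p)$ for every $\sol$. The gap will come purely from the fact that the greedy ratio ignores the second stopping condition \ref{drst-stop2}. The simplest setting is $k=1$ (so $k\le \frac n2$), where testing stops at the first pass, or after all $n$ tests fail.

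The instance will have two kinds of tests. There is one ``cheap-looking'' test $a$ with $p_a=\ell_a=\epsilon$ and cost $c_a=\epsilon\cdot\delta$, so $\frac{c_a}{\ell_a}=\delta$. There are many ``reliable'' tests $b$ with $p_b=1-\epsilon$ and cost $c_b=1$, so $\frac{c_b}{\ell_b}=\frac1{1-\epsilon}$. We choose $\delta<\frac1{1-\epsilon}$, so greedy puts $a$ first. This alone gains little, since $a$ is cheap. So we will instead amplify the gap by using many tests of type $a$, say $m$ of them, each with cost about $\epsilon$ and probability $\epsilon$, and a single test $b$ with cost $1$ and probability $1-\epsilon$.

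Under greedy, the $a$-tests come first. The probability that all of them fail is $(1-\epsilon)^m$, and the expected cost before reaching $b$ is about $\sum_{j<m}\epsilon(1-\epsilon)^j\approx 1-(1-\epsilon)^m$. That is only a constant, so this amplification does not yet produce a $\frac1\epsilon$ gap either. The remedy is to tune the costs so that greedy's ratio ordering favours a test whose ratio looks good but whose pass probability is tiny.

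Concretely, we let test $a$ have $\ell_a=\epsilon$ and cost $c_a=1$, so its ratio is $\frac1\epsilon$. We let test $b$ have $\ell_b=1-\epsilon$ and cost $c_b=\frac{1}{\epsilon}\cdot(1-\epsilon)+\eta$, so its ratio is $\frac1\epsilon+\frac{\eta}{1-\epsilon}$, which is slightly larger. Greedy therefore orders $\langle a,b\rangle$. This gives $C=1+(1-\epsilon)c_b\approx \frac1\epsilon$, while the order $\langle b,a\rangle$ gives $c_b+\epsilon\cdot 1\approx\frac1\epsilon$. The two are equal to first order, so this attempt fails too.

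I expect the main obstacle to be choosing parameters so that the failure of greedy really is $\Theta(\frac1\epsilon)$. The key mechanism to exploit is the following. With $k\ge 2$, a test with high $\ell$ is useful for stopping condition \ref{drst-stop1}, but when the outcome is forced toward failure, the useful tests are the ones that fail, and greedy ignores failure probabilities entirely. So I would instead take $k$ with $E_n$ far below $k$. For example, take $n=2k$, with $k$ tests of probability $\epsilon$ and cost $1$ and $k$ tests of probability $\frac12$ and cost $c$ chosen so that greedy prefers the $\frac12$-tests. Then the system almost surely fails. The optimal order tests the $\epsilon$-tests first and stops after about $k+1$ of them fail, for cost about $k$. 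Greedy first runs through the $k$ half-tests at cost $kc$ with $c\approx\frac1{2\epsilon}$, and these rarely trigger stopping, giving a ratio of about $\frac{c}{1}\approx\frac1{2\epsilon}$. The final step is to verify both expected costs, using Hoeffding bounds on the number of passes among the half-tests, and to fix the constants so the ratio is at least $\frac1\epsilon$, rescaling $\epsilon$ if needed.
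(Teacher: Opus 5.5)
Only your last construction matters, and it works, but the cost you assign to the comparison order is wrong. Take $n=2k$, with $k$ tests having $\ell_i=r_i=\epsilon$ and cost $1$, and $k$ tests having $\ell_i=r_i=\frac12$ and cost $c<\frac1{2\epsilon}$. The $\epsilon$-first order cannot ``stop after about $k+1$ of them fail'', because there are only $k$ such tests. After all of them, $1+P$ further failures are still needed, where $P\sim\mathrm{Bin}(k,\epsilon)$ counts their passes. These failures must come from half-tests, each costing $c\approx\frac1{2\epsilon}$. The expected number of half-tests until $1+P$ failures is $2(1+P)$. So this order has expected cost at most $k+2c(1+\epsilon k)<2k+\frac1\epsilon$, and in fact about that much, not $\approx k$.

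On the greedy side you do not need Hoeffding. Neither stopping condition can hold before $k$ outcomes are seen, since it needs $k$ passes or $k+1$ failures. So every permutation performs its first $k$ tests, and greedy pays at least $kc$ on every realization. Hence the ratio is at least $\frac{kc}{2k+1/\epsilon}$. This tends to $\frac c2\approx\frac1{4\epsilon}$ as $k\to\infty$, but it is $\Theta(1)$ when $k=O(1)$; for $k=1$ both orders cost about $\frac1{2\epsilon}$. So you must take $k\gg\frac1\epsilon$.

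``Rescaling $\epsilon$'' cannot raise the constant to $1$, because the instance's boundedness parameter is at most the probability of the cheap tests. This is acceptable only because the paper's own proof also establishes just $\Omega(1/\epsilon)$, which is how Section~3 phrases the result. You should state your conclusion in that form.

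With this fix, your route is genuinely different from the paper's. The paper uses $k=\frac n4$ and three test types:
type I is $0.8n$ zero-cost tests with $p=\frac1{16}$;
type II is $0.1n$ unit-cost tests with $p=\frac12$;
type III is $0.1n$ unit-cost tests with $p=1-\epsilon$.
Greedy runs types I, III, II. By Berry--Esseen, with constant probability the free tests put the failure count within $\sqrt n$ of the failure threshold but $\Theta(n)$ away from the pass threshold. Greedy then needs about $\sqrt n/\epsilon$ rarely-failing type III tests, whereas the order I, II, III needs only $O(\sqrt n)$ type II tests (by Hoeffding).

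Both instances use the mechanism you identified. One test type looks at least as good to greedy in cost per pass but is worse by a factor $\Theta(1/\epsilon)$ in cost per failure, in a regime where failures are what is needed. In the paper the gap comes from failure probability at equal cost; in yours it comes from cost at comparable failure probability.

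Your version is more elementary. The $n=2k$ structure forces greedy to pay for every expensive test deterministically, and the comparison order needs only a negative-binomial expectation. You need no normal approximation and no zero-cost positioning tests. What the paper's choice buys is that its two relevant types have equal cost. So the same instance also shows that sorting by increasing $c_i$ is $\Omega(1/\epsilon)$-bad, as the remark after the proof notes. In your instance, sorting by increasing cost puts the $\epsilon$-tests first, which is the good order.
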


\begin{proof}
    Consider an instance with three classes of tests, all tests in this instance have $\ell_i = r_i = p_i$. 
       We set $\gamma=\frac{15}{16}$ and threshold $k=\frac n4$.
    \begin{table}[H]
    \centering

    \begin{tabular}{@{}llll@{}}
    \toprule
    Type & Count & $c_i$ & $p_i$            \\ \midrule
    I    & $0.8 n$ & 0 & $1-\gamma$ \\
    II   & $0.1  n$ & 1 & $\frac12$ \\
    III  & $0.1 n $ & 1 & $1-\epsilon$ \\ \bottomrule
    \end{tabular}
    \end{table}

   Without loss of generality, we can assume that every policy first performs all zero-cost tests (i.e., type I).   
   Note that the greedy algorithm performs test types in the order: I, III, II. We will show that the expected cost of greedy is $\Omega(\frac1\epsilon)$ times that of an ``optimal'' policy which tests in the order I, II, III.

Within  type~I tests, we denote the number of failures  by $F_1$, we know $F_1 \sim \text{Binomial}(0.8n , \gamma)$, with mean $\E[F_1] = \mu:= \frac{3n}4=n-k$ and standard-deviation $\sigma=\sqrt{n\gamma(1-\gamma)}$. Using the Berry-Esseen Theorem, it follows that the cumulative-distribution-function of $F_1$ differs from that of the normal distribution $N(\mu, \sigma)$ by at most $\frac{c}{\sqrt{n}}$, where $c>0$ is some constant. So, we have:
  \begin{equation} \label{eq:binom-eq}
    \Pr\left[ \frac {3n}5   \le F_1 \leq\frac {3n}4  -\sqrt n \right] \ge  \Pr\left[ \frac {3n}5   \le G \leq\frac {3n}4  -\sqrt n \right]  -\frac{2c}{\sqrt{n}} \geq \Omega(1), 
    \end{equation}
  where $G\sim \normaldist(\mu, \sigma^2)$. Under the above event, in order to evaluate the \kofn function we need  (i) at least $\sqrt{n}$ more failures, or (ii) at least $0.05 n$ more successes. 
  
  We can now lower bound the expected cost of greedy:
  $$\mathrm{GRD} \ge \Pr\left[ \frac {3n}5   \le  F_1 \leq\frac {3n}4  -\sqrt n \right] \cdot \E\left[\mbox{ \# type III trials until $\sqrt{n}$ failures}\right] = \Omega(1)\cdot \frac{\sqrt{n}}\epsilon, $$
   where we used the fact that each type III test has failure probability $\epsilon$.  

   We now upper bound the optimal cost by the expected number of type II trials until $(\mu-F_1)^+$ failures. Using the fact that each type II test has failure probability $\frac12$, 
   $$\mathrm{OPT}\le 2\cdot \E[(\mu-F_1)^+]\le 2\cdot \sum_{t\ge 1} \Pr\left[ F_1 \le \mu  - t \right] \le 2\cdot \sum_{t\ge 1}  e^{-2t^2/(4n/5)}=O(\sqrt{n}),$$
   where the last inequality uses Hoeffding's inequality. 
   \end{proof}
We note that the same instance also shows that the greedy algorithm that sorts  by increasing $c_i$ has approximation ratio $\Omega(\frac1\epsilon)$ for the general cost problem; this greedy solution is the same as that in Proposition~\ref{prop:cl-ratio-bad-eg}.

\end{appendix}

\end{document}